\documentclass[aps,nobibnotes,twocolumn,superscriptaddress,showkeys,thelongbibliography]{revtex4-2}

\usepackage{dcolumn}

\usepackage{etoolbox}

\usepackage{graphicx}
\usepackage{enumerate}
\usepackage{amsmath}
\usepackage{amssymb}  
\usepackage{mathtools}
\usepackage{bm}  
\usepackage{pdfsync}  
\usepackage{hyperref}
\usepackage{physics}
\usepackage{natbib}
\usepackage{comment}
\usepackage{float}

\usepackage{amsthm}
\newtheorem{theorem}{Theorem}[]
\newtheorem{corollary}[theorem]{Corollary}
\newtheorem{definition}{Definition}


\begin{document}

\title{Generating pseudo-random unitaries with a Floquet driven chaotic quantum system}
\author{Alice C. Quillen}
\email{aquillen@ur.rochester.edu}
\affiliation{Department of Physics and Astronomy, University of Rochester,  Rochester, NY, 14627, USA}
\author{Abobakar Sediq Miakhel}  \email{amiakhel@u.rochester.edu}
\affiliation{Department of Physics and Astronomy, University of Rochester,  Rochester, NY, 14627, USA}

\begin{abstract}
We explore using an ergodic Floquet quantum system on a torus to generate pseudo-random unitary operators.   We choose a regime of the perturbed Harper model  with strong perturbations and perturbation frequency exceeding the libration frequency to ensure that the system has an ergodic region that covers phase space and lacks resonant substructure. We generate a sample of unitary operators in a finite dimensional space by computing Floquet propagators from a distribution of its control parameters. To compare the distribution of unitaries to that of a Haar-random distribution, we compute k-frame potentials from samples of numerically generated unitaries.  We find that uniform distributions of 4 control parameters can generate an approximate 3-design.   Distributions of fewer control parameters are required to create an approximate 3-design if the Floquet system parameters drift.  
\end{abstract}

\maketitle
\tableofcontents


\section{Introduction}

A mechanism or recipe for generating a set of random unitary operators, known as a quantum sampler,  or a generator of pseudo-random unitary operators,  is considered a resource in quantum information science. 
Quantum random sampling has been used in benchmarking computational quantum advantage \citep{Bremner_2017,Boixo_2018,Arute_2019,Hangleiter_2023},  quantum cryptography \citep{Prabhanjan_2022} and quantum algorithms \citep{Sen_2006}. Random quantum sampling can be applied to generate random numbers \citep{Ma_2016,Brask_2017}, mimic the process of thermalization \citep{Temme_2011,Goold_2016},  rapidly delocalize, entangle or scramble information \citep{Brown_2012,Susskind_2016,Nahum_2017}, or find an approximate classical description of a quantum state with only a few measurements \citep{Huang_2020}. 

Quantum random samplers include random quantum circuits which are families of circuits composed of sequences of quantum gates drawn independently and randomly from a set of unitary operators consisting of quantum gates operating on small subsets of qubits, typically containing only one or two qubits \citep{Emerson_2003,Emerson_2005,Harrow_2009,Boixo_2018,Harrow_2023,Fisher_2023}. 
Generalized Fibonacci drives, where operators are drawn from a universal gate set and applied in an order that is generated from an aperiodic sequence, are another approach to generating a  distribution 
of unitaries \citep{PilatowskyCameo_2023}.  Alternatively, random unitaries can be generated by projecting an ensemble derived from a chaotic many body system \citep{Cotler_2023}. 

For a quantum system of dimension $N$, 
a distribution of unitary operators generated from a random quantum circuit should be difficult to distinguish from a set of unitary operators that is uniformly distributed according to the Haar measure
on the unitary group U$(N)$ of dimension $N$ \citep{Ji_2018}.  
The notion of a k- or t-design can be used to 
 quantitatively describe how well a set of 
pseudo-random operators behaves like a distribution which is uniform respect to the Haar measure \citep{Gross_2007,Dankert_2009,Harrow_2009,Brandao_2016}. 
A k-design is a set of unitary operators
in an $N$ dimensional Hilbert space that has k-th moments equal to those of the uniform distribution with respect to the Haar measure (e.g., \citep{Low_2010,Webb_2016,Zhu_2017,Mele_2024}). 
A distribution of unitary operators that has t-th moments near those of a Haar-uniform distribution 
is called an approximate t-design \citep{Brandao_2016}. 
In an $N$ dimensional system, a sequence of unitaries that are diagonal in one basis, alternating with
unitaries that are diagonal in that of the Fourier transformed basis, gives an approximate 2-design \citep{Nakata_2017}.  Sufficiently deep but 
local random quantum circuits on systems of qubits are approximate t-designs \citep{Brandao_2016}. 

Time-periodic or Floquet driving is a way to control the properties and dynamics of quantum systems \citep{Rudner_2020} and for carrying out quantum computations \citep{Farhi_2000,Albash_2018}.  Periodically perturbed or periodically kicked systems are examples of classical and related quantum chaotic systems \citep{Chirikov_1979,Casati_1979,Chirikov_1988,Izrailev_1989,Wei_1991,Quillen_2025}. 
Chaos, complexity and pseudo-randomness are connected via how perturbed operators can 
diverge from each other as a function of time \citep{Scott_2006,Brown_2012,Roberts_2017,Brown_2018}.  
Motivated by the possible connection between chaos and pseudo-randomness, 
we explore how a chaotic system on the torus could aid in generating pseudo-random unitaries. 

\begin{figure*}[htbp]\centering 
\includegraphics[height=2truein]{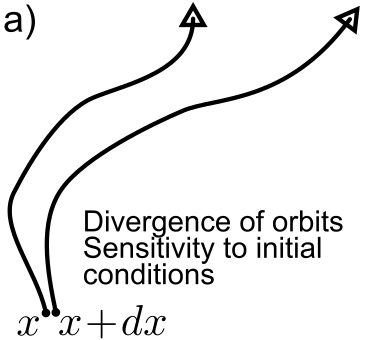}
\includegraphics[height=2truein]{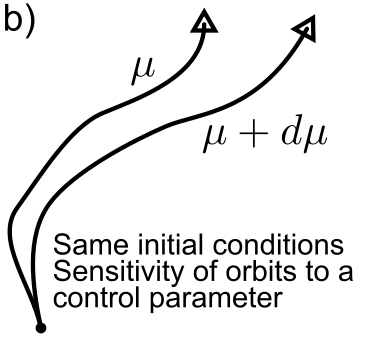}
\includegraphics[height=2truein]{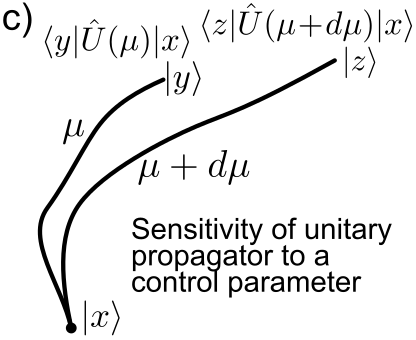}
\caption{a) We show an illustration of how orbits in a chaotic system are sensitive to initial conditions. b)  If the Lyapunov exponent, measuring the divergence rate of orbits, is sensitive to the parameter $\mu$, then orbits begun at the same initial conditions but with slightly different parameters would diverge. c) 
A quantized version of the classical system shown in b) would have transition probabilities between localized states that are sensitive to the parameter $\mu$.  The unitary propagator operator would be sensitive to the parameter $\mu$. 
\label{fig:illust}}
\end{figure*}

\begin{figure}[htbp]\centering
\includegraphics[width=2.5truein]{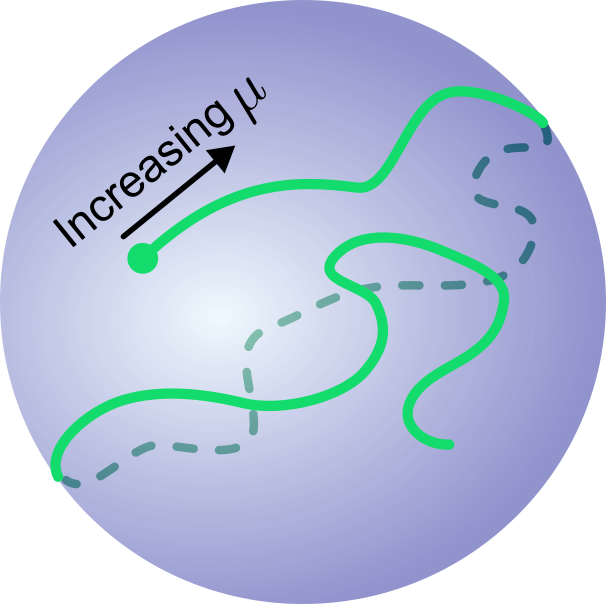}
\caption{The surface of the blue sphere represents the space of unitary operators 
in $N$ dimensions; U$(N)$.   The green trajectory
represents how a unitary operator (or propagator) $\hat U(\mu)$ depends on 
a parameter $\mu$.  
 As $\mu$ increases, the propagator wanders.  
 If the propagator is highly sensitive to 
 the parameter $\mu$, then a range of $\mu$ can give unitaries that are widely distributed 
 across U$(N)$.  \label{fig:UN} }
\end{figure}

\begin{figure*}[!ht]\centering 
\includegraphics[width=5truein, trim = 0 10 0 0,clip]{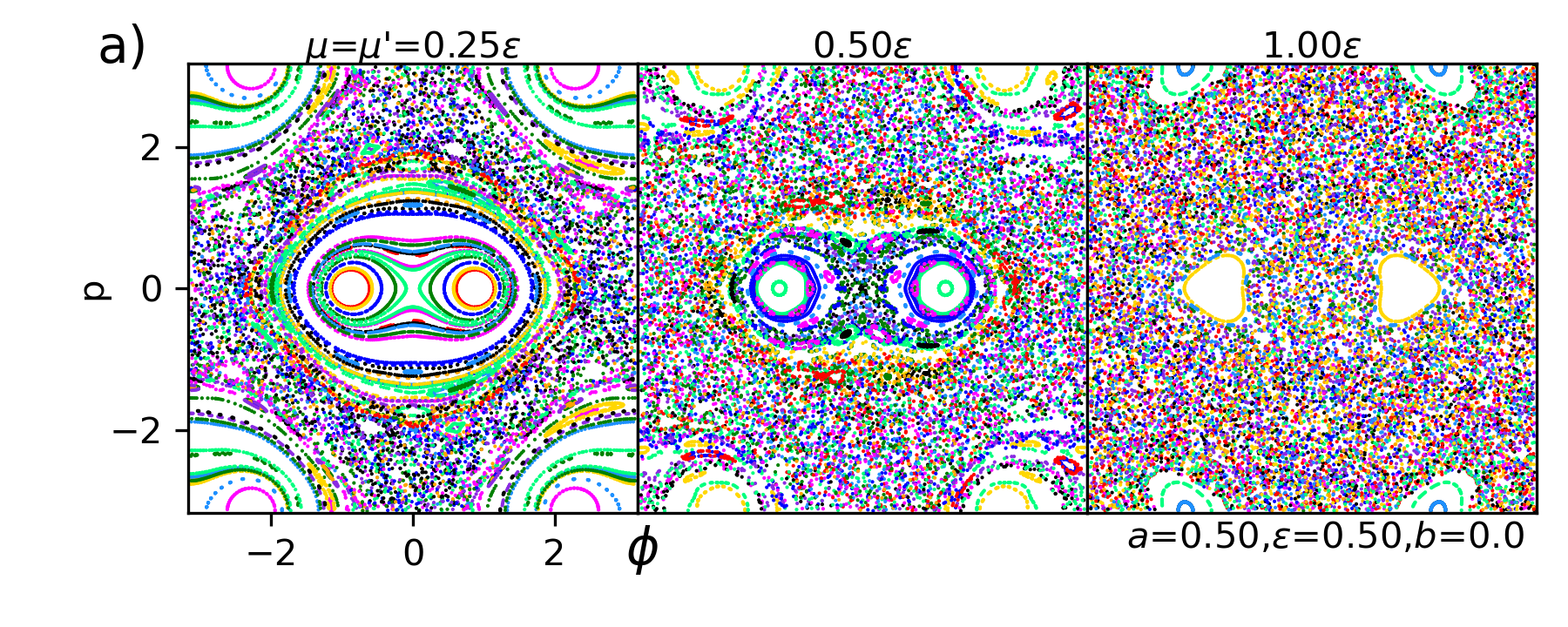}
\includegraphics[width=5truein, trim = 0 10 0 0,clip]{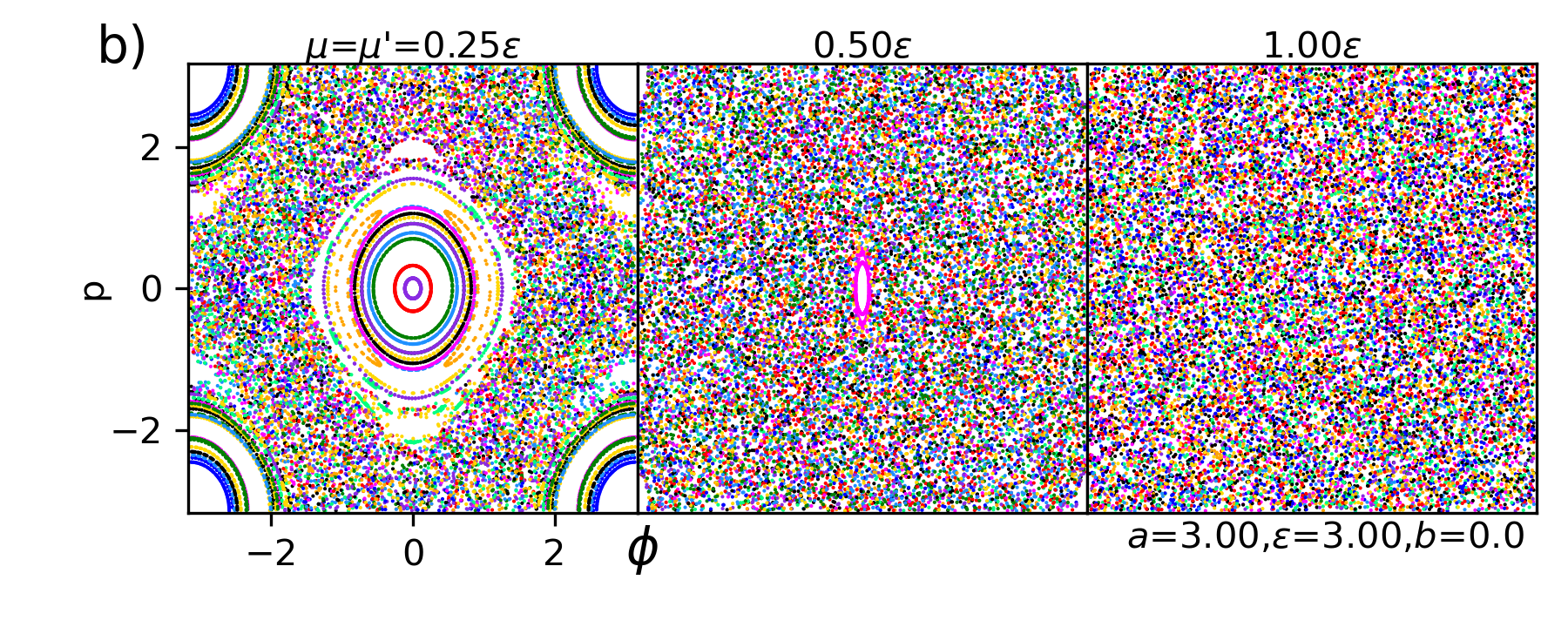}
\caption{a) We show Poincar\'e maps for the classical Hamiltonian system 
given by equations \ref{eqn:Hclass}, \ref{eqn:H0}, \ref{eqn:H1} with ratio of perturbation frequency to 
libration frequency 
$\lambda = (a \epsilon)^{-\frac{1}{2}} = 2$. The panels from left to right have different 
values of perturbation strength ratio $\mu/\epsilon$.    Initial conditions for the orbits are drawn from 
uniform distributions in phase space. 
Each orbit is a set of points of the same 
color and they are plotted in phase space $\phi,p$.  The parameters of the Hamiltonian
are written on the plot except for $\tau_0$ which is set to 0. 
b)  Similar to a) but showing Poincar\'e maps for the case $\lambda =1/3$ and the same values of 
the ratio  $\mu/\epsilon$ as in a).   Comparison between a) and b) illustrates that 
when the perturbation frequency is near but somewhat 
 lower than the characteristic libration frequency, there are fewer and smaller resonant islands.  
We have chosen strong perturbation parameters $\mu, \mu'$ so that the ergodic regions 
cover phase space. 
\label{fig:SS}
 }
\end{figure*}

It has been conjectured \citep{Bohigas_1984} that the eigenvalue statistics of a quantum system derived from a classical chaotic one can be described by random matrix theory, whereas the eigenvalue statistics of a quantum system 
derived from a classical integrable system follows Poisson statistics \citep{Berry_1977}. 
Though a chaotic classical system, when quantized, can exhibit properties similar to 
matrices chosen from 
a random matrix model (e.g., \citet{Izrailev_1989}), 
a single chaotic classical system, when quantized, would give only a single
unitary operator, not a sequence of them.  
Hence a single chaotic system is not by itself a quantum sampler. 

A classical chaotic system exhibits sensitivity to initial conditions, with nearby orbits diverging rapidly away 
from each other, as shown in Figure \ref{fig:illust}a.  If the chaotic system dynamics is a function of some parameters, orbits can 
be extremely sensitive to these parameters (e.g., \citep{Benettin_1984}).  In other words if the system is 
chaotic and its dynamics is sensitive to a parameter $\mu$, 
then two orbits could diverge exponentially if they have the same initial conditions but are undergoing evolution with different values of $\mu$ (see Figure \ref{fig:illust}b).    
Quantization of the  classical chaotic system can give a unitary operator (a propagator) 
$\hat U(\mu)$ that is also a function of the parameter $\mu$. 
Sensitivity of orbits to the parameter $\mu$ in the  classical system implies that
transition probabilities between states in the associated quantum system are also sensitive to the 
parameter $\mu$, as shown in Figure \ref{fig:illust}c.
Small variations in the parameter $\mu$ could give large variations in this unitary operator. 
Even though the group of unitary operators U$(N)$ is compact, 
its dimension is not necessarily small as each unitary matrix in U$(N)$ is described 
by $N^2$ elements.  If $\hat U(\mu)$ is strongly dependent upon 
$\mu$ then the curve $\hat U(\mu)$  in U$(N)$ generated from an interval of possible $\mu$ values 
can wander in U$(N)$ (e.g., \citep{Nakata_2017}, see the illustration 
in Figure \ref{fig:UN}).   With a probability distribution for 
values of $\mu$ we would generate a distribution of unitaries along the curve $\hat U(\mu)$.  

If instead of choosing a single real parameter $\mu$ to describe both classical 
and associated quantum systems, a vector of control parameters $\boldsymbol \mu$ could describe both classical and quantum systems.  
For example if $\boldsymbol \mu$  is a 2-dimensional vector, 
 the set  of operators $\hat U({\boldsymbol \mu})$ would give a two dimensional 
surface embedded in U$(N)$.   
A distribution of parameters $\boldsymbol \mu$ then generates a distribution 
of unitaries. 

In a classical ergodic system, the maximal Lyapunov exponent describes the rate that 
two initial trajectories diverge.  A related notion in a quantum system 
is its sensitivity to perturbations \citep{Feingold_1986,Scott_2006}.  
Ergodic behavior in a quantum system can be described in terms of how fast a statevector,  
evolving according to the original system, diverges from that of the same statevector evolving by a perturbed version of the quantum system \citep{Scott_2006}. 
Instead of considering the divergence of two states as a function of time (e.g., \citep{Roberts_2017}), 
we consider how  two unitary operators diverge from each other as a function of 
a parameter that is used to describe the strength of the perturbation. 
With a sufficient number of parameters and sufficient sensitivity of the propagator to these parameters, 
the distribution of randomly chosen unitaries would be sufficiently well distributed
throughout U$(N)$ that the resulting distribution is an approximate t-design and approximates 
a Haar random distribution of unitaries. 


\section{A chaotic Floquet system on the torus}

Our goal is to use a chaotic unitary operator 
that depends upon a few parameters 
to generate a series of randomly generated unitaries
that approximate a Haar-random generated set.    
We begin by studying the properties of unitaries generated from a chaotic finite dimensional Floquet 
system on a torus.  

The Floquet system we study is that characterized by \citep{Quillen_2025}, and is the Hamiltonian operator 
\begin{align}
\hat h(\tau) = \hat h_0 + \hat h_1(\tau) \label{eqn:hath}
\end{align}
in a finite dimensional Hilbert space of dimension $N$,  
where $\hat h_0$ is time independent and $\hat h_1(\tau)$ we call the perturbation, even though 
it need not necessarily be small.  We assume that 
the time dependent perturbation $\hat h_1(\tau)$ is periodic with period $T = 2 \pi/\nu$. 
We set the 
 perturbation frequency $\nu=1$, so that time is in units of the inverse of the perturbation frequency. 
The time independent portion of the Hamiltonian is a shifted version of the Harper model 
\begin{align}
\hat h_0 = a(1- \cos (\hat p-b)) - \epsilon \cos (\hat \phi  - \phi_0). \label{eqn:hath0}
\end{align}
In a basis $B_{\hat \phi}$ denoted by the set $\{ \ket{j}   \}$ with $j \in \{ 0, 1, \ldots, N-1 \}$, the angle 
 operator is diagonal; 
\begin{align}
\hat \phi = \sum_0^{N-1} \frac{2 \pi j}{N} \ket{j}\bra{j} . 
\end{align}
The momentum operator is related to the angle operator via a Discrete Fourier transform
 $\hat p = Q_{FT} \hat \phi Q_{FT}^{-1}$, with 
$\hat Q_{FT} =\frac{1}{\sqrt{N}} \sum_{jk} \omega^{jk}  \ket{j}\bra{k} $ 
and $\omega = e^{\frac{2 \pi i}{N}}$. 
The parameters   
 $a, \epsilon, b, \phi_0$ are real numbers and $a, \epsilon$ are usually taken to be positive. 
 Whereas \citet{Quillen_2025} considered $\hat h_0$ and $\hat h_1$ with $b=\phi_0=0$, 
\citet{Quillen_2025b} included the parameter $b$ in the kinetic term.  Here we also include 
a parameter $\phi_0$ which allows us to shift the phase of the potential term. 

The associated classical Hamiltonian system is similar to the quantum system but 
with operator $\hat p$ replaced by momentum $p$ and operator $\hat \phi$ replaced by 
 angle $\phi$; 
\begin{align}
H(p,\phi,\tau) = H_0(p,\phi) + H_1(\phi,\tau) \label{eqn:Hclass}
\end{align}
and unperturbed Hamiltonian 
\begin{align}
H_0(p,\phi) = a(1- \cos (p-b)) - \epsilon \cos  (\phi - \phi_0)  \label{eqn:H0}.
\end{align} 
Phase space is doubly periodic $p, \phi \in [0,2\pi)$ and equivalent to a torus.  
In the classical system, the parameter $b$ simply shifts the momentum. However if $b$ 
is not a multiple of $2 \pi/N$  then the spectrum of the operator $\hat h_0$ differs from that with $b=0$ 
  \citep{Quillen_2025b}. 
  
\citet{Quillen_2025} considered periodic perturbations motivated by 
the periodically perturbed pendulum \citep{Chirikov_1979} 
\begin{align}
\hat h_1(\tau)  = &-\mu \cos (\hat \phi  - \phi_0 + \tau - \tau_0)  \nonumber \\
& \ \ \  -\mu' \cos (\hat \phi - \phi_0 - \tau + \tau_0). \label{eqn:hath1}
\end{align}
The associated classical system, the Hamiltonian perturbation 
$H_1(\phi,\tau) $ has the same form as equation \ref{eqn:hath1} 
\begin{align}
\hat H_1(\tau) = &  -\mu \cos (\phi - \phi_0 + \tau - \tau_0)  \nonumber \\
& \ \ \  -\mu' \cos (\phi - \phi_0- \tau + \tau_0). \label{eqn:H1}
\end{align}
The perturbation contains two terms, one with amplitude $\mu$, the other with amplitude $\mu'$. 
The angle $\tau_0$ sets the initial phase of perturbation. 
The perturbation causes both classical and quantized system to become chaotic near the separatrix orbit 
which divides librating and circulating orbits in the classical system \citep{Quillen_2025}. 

\subsection{Sensitivity of the extent of the ergodic region to the ratio of perturbation to libration frequency}
\label{sec:lambda}
The characteristic frequency of libration at the stable fixed point of the unperturbed
classical Hamiltonian $H_0$ (equation \ref{eqn:H0})  is 
\begin{align}
\omega_0 = \sqrt{a\epsilon}. \label{eqn:omega0}
\end{align}
Because the perturbation frequency  $\nu=1$, the ratio of perturbation frequency to 
characteristic frequency of libration is 
\begin{align}
\lambda =\frac{\nu}{\omega_0} =  \frac{1}{\sqrt{a \epsilon}}.
\end{align}

\begin{figure*}[!htb]\centering 
\includegraphics[width=3truein]{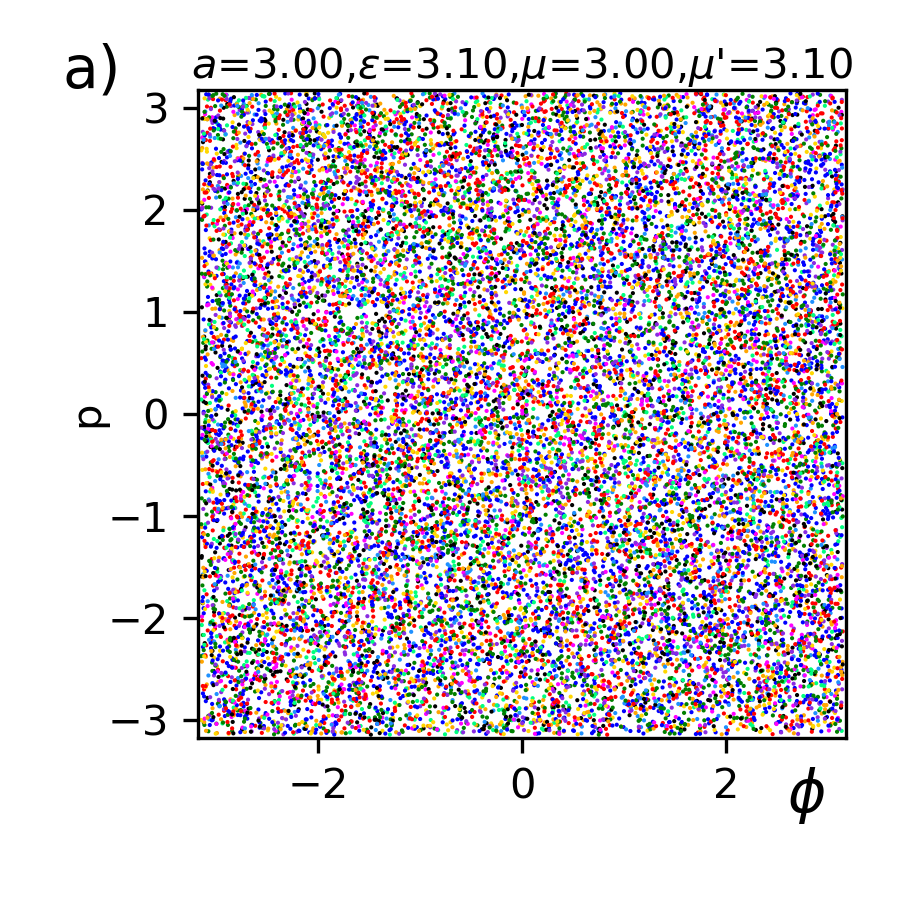}
\includegraphics[width=3truein]{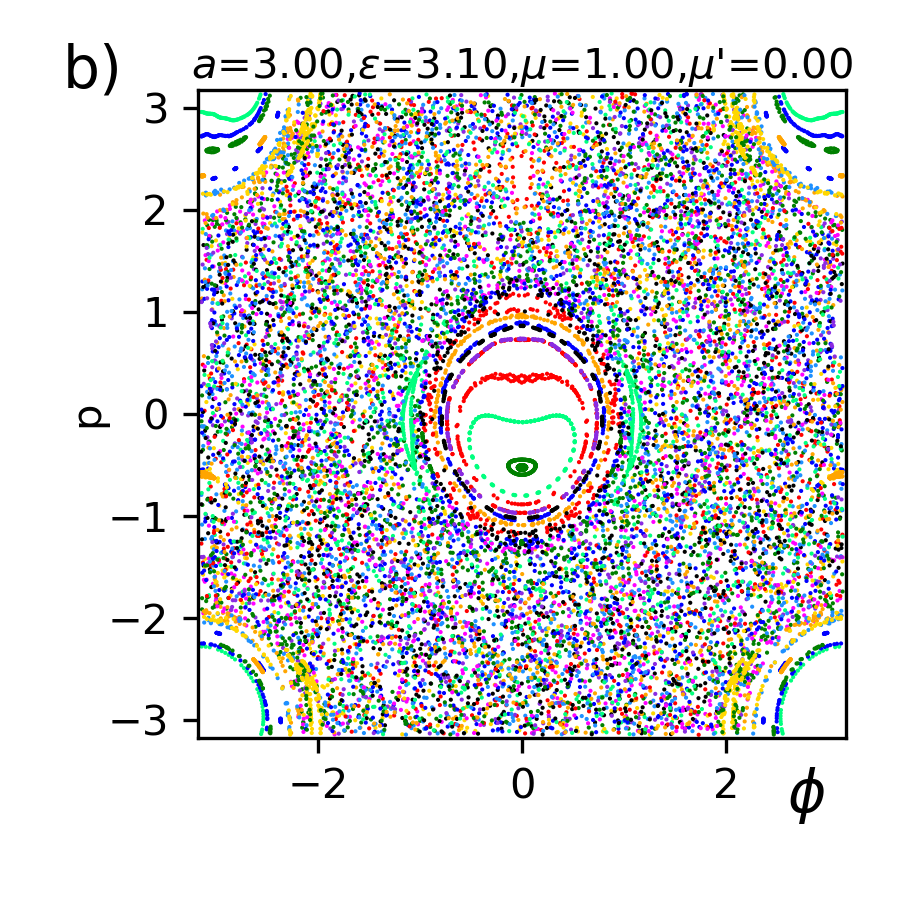}
\caption{a) Poincar\'e maps of the associated classical 
models for the Floquet propagator $\hat U_{Ta}$ with properties listed in Table \ref{tab:props}. 
The ergodic region covers phase space.  
b) similar to a) but for propagator $\hat U_{Tb}$.   
There are both integrable and ergodic regions in phase space. 
The Poincar\'e maps are similar to those shown in Figure \ref{fig:SS} but for different classical Hamiltonians.  
 \label{fig:SSab}}
\end{figure*}

The ratio of perturbation frequency to libration frequency $\lambda$
affects estimates for the extent of chaos induced by the perturbation 
\citep{Zaslavsky_1968,Chirikov_1979,Shevchenko_2000,Soskin_2008,Soskin_2009,Treschev_2010}.   
If $\lambda \ll 1$ the perturbation is called adiabatic. 
With $\mu/\epsilon, \mu'/\epsilon$ fixed, the width of the chaotic region in phase space is
maximized where the perturbation is neither adiabatic nor fast at $\lambda \sim 1$.  
However, 
with $\lambda $ near but less than 1, the chaotic region is wide and there are fewer resonances that cause 
substructure near the separatrix and variations in the width of the chaotic region \citep{Soskin_2009}. 
The tendency for the regime with $\lambda >1$ to exhibit resonant islands is illustrated in Figure \ref{fig:SS}. 
In Figure \ref{fig:SS} we illustrate Poincar\'e maps, also called surfaces of section, which are orbits with 
  points plotted every perturbation period of $2\pi$.  Figure \ref{fig:SS}a shows three
  surfaces of section all with $\lambda = 2$, whereas  Figure \ref{fig:SS}b shows a similar set 
  but with $\lambda = 1/3$.  From left to right the perturbation strengths increase in both subfigures. 
 Figure \ref{fig:SS} illustrates that with $\mu \sim \mu' \sim \epsilon$, the chaotic region covers phase space  
 and that with $\lambda$ near but less than 1 there are fewer resonant islands. 
 
Resonant islands are integrable regions of phase space.  When perturbed, the orbits could 
simply be slightly perturbed.  Eigenstates of the 
Floquet propagator for the associated quantized system that are in the same region of phase space
would not be expected to exhibit extreme sensitivity to perturbation.  
To exploit the chaotic nature of the classical system and its quantum counterpart, 
we desire a chaotic region that covers phase space and lacks resonant islands.     
With $\lambda $ near 1, the perturbation is neither adiabatic nor rapidly varying compared
to the libration period at the bottom of the potential well,  and 
the chaotic region has a maximum width \citep{Chirikov_1979}.  
With $\lambda $ less than 1, the ergodic region would lack resonant substructure. 
Hence we desire a system with ratio $\lambda$ near but somewhat less than 1. 


 
\begin{table}[!htb]
\caption{Propagators \label{tab:props}}
\begin{ruledtabular}
\begin{tabular}{llll}
Propagator & Parameters of $\hat h$\footnote{
The propagators $\hat U_{Ta}, \hat U_{Tb}$ and $U_\text{Drift}$ are generated 
from the Hamiltonian operator $\hat h$ described in 
equations \ref{eqn:hath}, \ref{eqn:hath0}, and \ref{eqn:hath1}.  The parameters
listed in the right column are those of this Hamiltonian. 
Parameters $\phi_0=\tau_0=0$ for the first three of these propagators. 
For $\hat U_\text{Drift}$ the time derivatives of parameters $b, \mu, \mu'$ are constant and we list
initial and final values for these drifting parameters. 
In Figure \ref{fig:Hus} the dimension of the operators $N=49$ and chosen to be a square so 
that the images fill a square. 
In Figure \ref{fig:spacing}, we chose $N=600$ as a large dimension is required to sample a distribution of 
eigenvalue spacings.  In Figures \ref{fig:tt}, \ref{fig:PT}, \ref{fig:IPR}, \ref{fig:Wvals}, and \ref{fig:wcumu}, 
and in Tables \ref{tab:mom_trans} and \ref{tab:mom_ops} the dimension $N=81$.}\\
\colrule
$\hat U_{Ta}$  &  $a=3,b=0.2,\epsilon=3.1,\mu=3,\mu'=3.1$ \\
$\hat U_{Tb}$  &  $a=3,b=0.0,\epsilon=3.1,\mu=1,\mu'=0$ \\
$\hat U_\text{Drift}$ &  $a=3, b=0 \to 1.9, \epsilon=3$ \\
                      & $\mu = 1 \to 7, \mu' = 0.5 \to 6.5, n_\text{periods}=3$ \\
$\hat U_\text{Haar}$ & 1 unitary matrix generated \\
                    &    via a pseudo-random generator  \\
\end{tabular}
\end{ruledtabular}
\end{table}

\begin{figure*}[htbp]\centering 
\includegraphics[width=3truein]{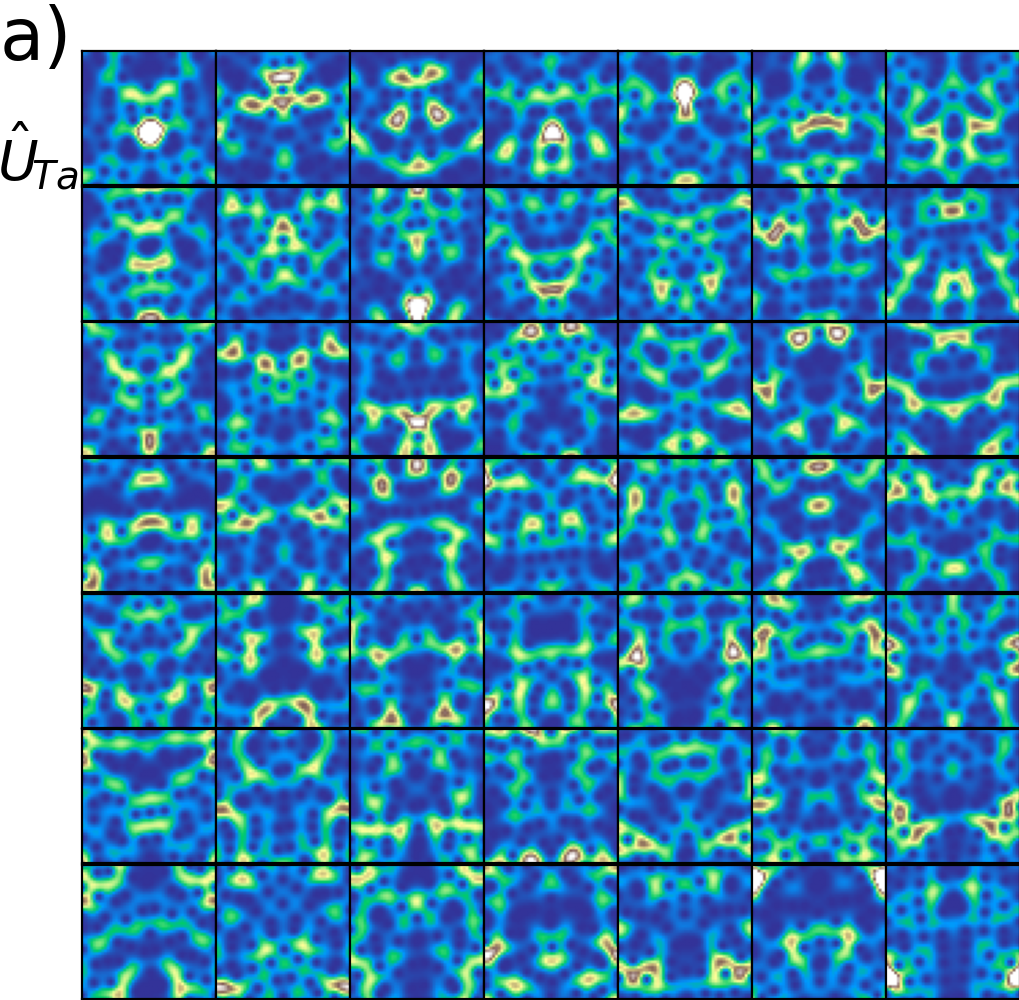 }
\includegraphics[width=3truein]{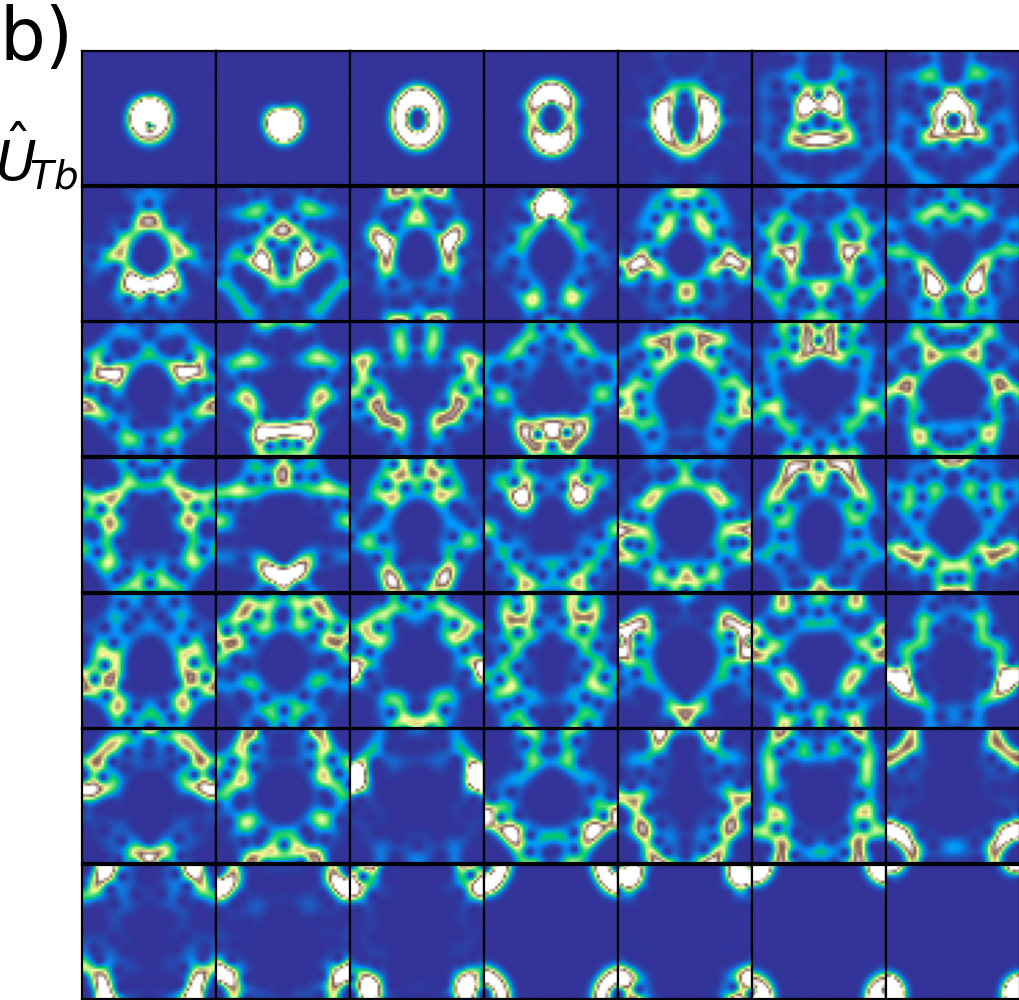 }
\includegraphics[width=3truein]{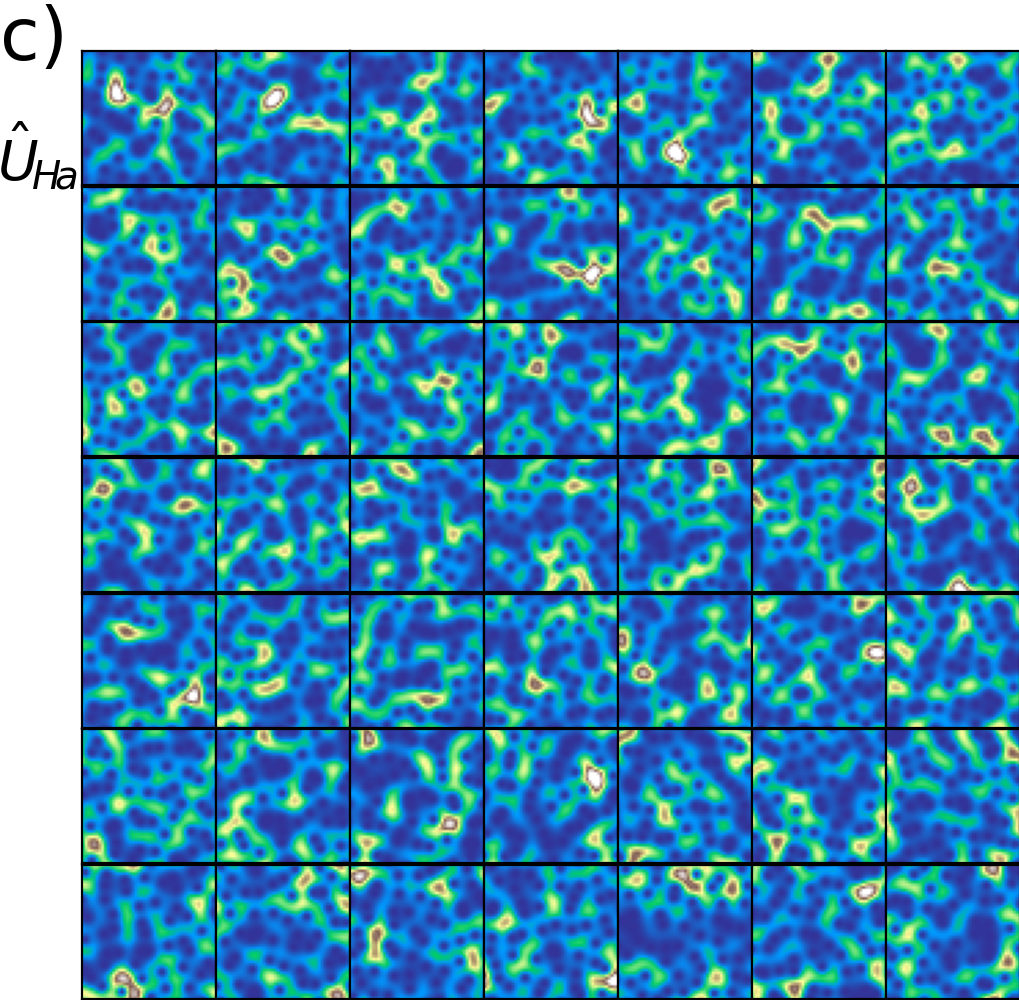 }
\includegraphics[width=3truein]{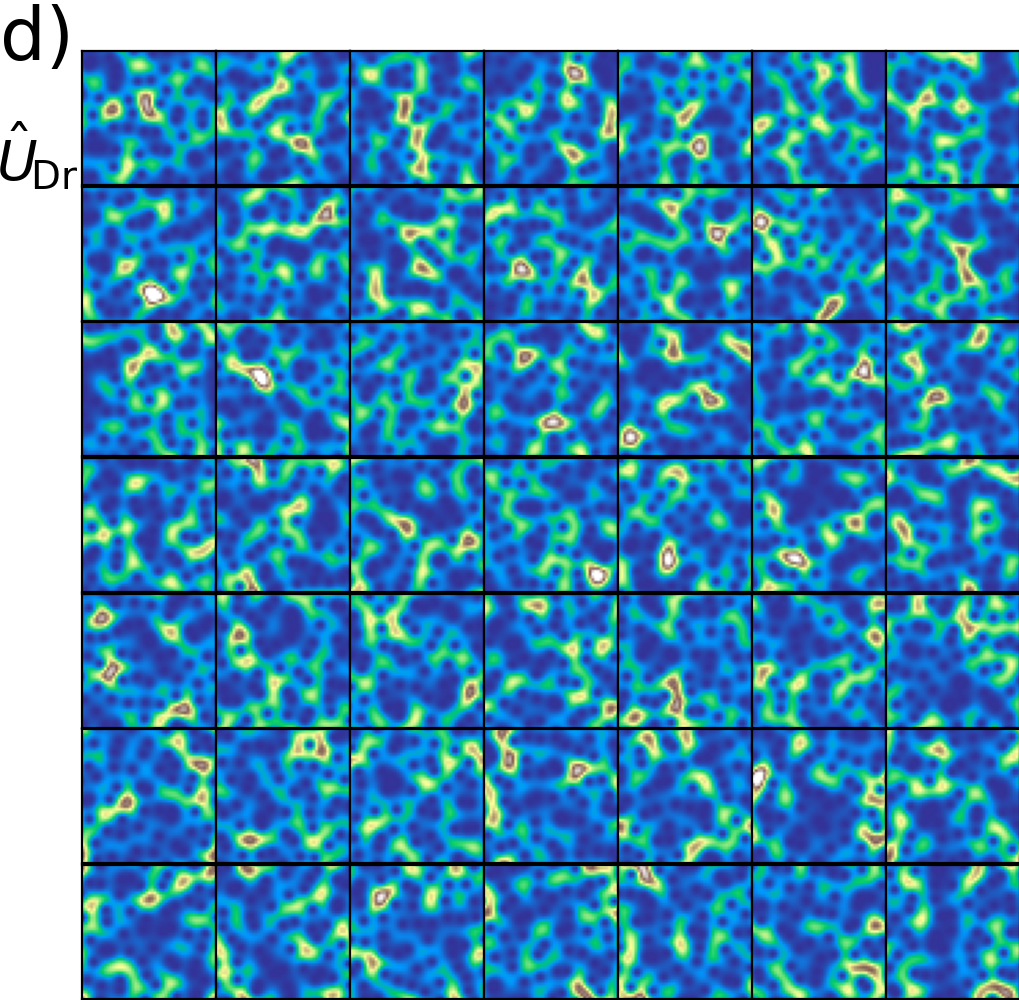 }
\caption{a) Husimi distributions of the eigenvectors of  the Floquet propagator $\hat U_{Ta}$
with parameters listed in Table \ref{tab:props} and dimension $N=49$.  
Each panel shows the Husimi distribution for a single eigenvector in phase space. 
The classical version of this operator is ergodic; see the Poincar\'e map in Figure \ref{fig:SSab}a. 
Most of the eigenfunctions are distributed across phase space.   
b)  The Husimi distributions for $\hat U_{Tb}$ which is hybrid in the sense that there 
are both chaotic and integrable regions in phase space.    
There are Husimi distributions  resembling non-chaotic localized orbits in the Poincar\'e map of Figure \ref{fig:SSab}b.  
c) The Husimi distributions are generated from the randomly generated unitary operator $\hat U_\text{Haar}$ and are widely distributed in phase space. 
d) The Husimi distributions are generated from $\hat U_\text{Drift}$ which is the drifting 
system with parameters listed in Table \ref{tab:props}.  Husimi distributions are also 
widely distributed in phase space. 
\label{fig:Hus}}
\end{figure*}

\section{Comparing the properties of 4 propagators}
\label{sec:comp4}

A propagator is generated from the operation of the time dependent Hamiltonian operator 
\begin{align}
\hat U_T = {\cal T} e^{ -i \frac{N}{2\pi}\int_0^{T} \hat h(\tau) d\tau  } \label{eqn:UT}
\end{align}
integrated over time duration $T$. 
Here $\cal T$  indicates a product of time-ordered operators computed in the limit of small step size.
The factor of $N/2\pi$ is consistent with an effective Planck's constant of 
\begin{align}
 \tilde \hbar = \frac{2 \pi}{N}
 \end{align} 
 which arises during quantization of the Harper Hamiltonian (see section II A by \cite{Quillen_2025}). 
We compute propagators for both Floquet and drifting systems  via 
Suzuki-Trotter decomposition, following the procedure described in appendix D by \citet{Quillen_2025}.  
The number of Trotterization steps per $T=2\pi$ perturbation period used is $n_\tau = 4N$.  
 

Floquet systems are those with a periodic Hamiltonian operator.  
We can modify the 
Floquet system with Hamiltonian operator in equations \ref{eqn:hath}, \ref{eqn:hath0}, \ref{eqn:hath1} 
and propagator in equation \ref{eqn:UT}  so that its parameters are slowly drifting. 
We refer to the periodic systems as Floquet systems and those that drift as drifting systems. 
For a Floquet system, 
the unitary operator $\hat U_T$ describes the evolution of the quantum
system over a single perturbation period; from dimensionless time $\tau = 0$ to $2 \pi$. 
For the drifting systems, we describe the drift duration $n_\text{periods}$  in units of the $2 \pi $ perturbation period.  

We compare the properties of 4 unitary operators that are listed in Table \ref{tab:props} 
along with parameters used to describe them.
The unitary operator $\hat U_{Ta}$ is a Floquet system that is chosen to match a classical 
system that is uniformly ergodic.  To ensure that the ergodic region covers phase space
without resonant islands (see section \ref{sec:lambda}) we 
chose ratio of perturbation to libration frequency near but less than one. 
We chose parameters $a \ne \epsilon$ and $b\ne 0$ so that the time independent 
portion $\hat h_0$ of the Hamiltonian obeys fewer symmetries (for a list of symmetries obeyed by $\hat h_0$ 
see the appendices by \citealt{Quillen_2025b}). 
The Hamiltonian of Floquet propagator $\hat U_{Tb}$ has a lower perturbation strength than $\hat U_{Ta}$.
Poincar\'e maps generated from the classical Hamiltonian versions associated with 
 $\hat U_{Ta}$ and  $\hat U_{Tb}$ are shown in Figure \ref{fig:SSab} and illustrate 
 that  $\hat U_{Ta}$ is fully ergodic and  $\hat U_{Tb}$ is a hybrid system with both ergodic 
 and integrable regions in phase space. 
 The Floquet operators depend upon parameters $a,b,\epsilon, \mu, \mu', \phi_0, \tau_0$ 
 in equations \ref{eqn:hath}, \ref{eqn:hath0}, and \ref{eqn:hath1} and the values for these parameters for 
 the Hamiltonians used to generate $\hat U_{Ta}$  and $\hat U_{Tb}$ are listed in Table \ref{tab:props}. 

The unitary operator $\hat U_\text{Drift}$ is a drifting system. 
It is similar to the other systems except that its parameters slowly vary. 
For this case we compute the propagator 
over $n_\text{periods}$ = 3 perturbation periods.  
For the drifting system in Table \ref{tab:props} we list initial and final values of the parameters $a,b,\epsilon, \mu, \mu', \phi_0, \tau_0$ when they are varied otherwise we list their fixed values. 
The system is varied so that these parameters have time derivatives that are constant. 

As a control case we compute a unitary matrix using 
a pseudo-random unitary generator (python's routine \texttt{scipy.stats.unitary\_group}) that generates
unitaries from a Haar uniform distribution.  
The computed unitary we  denote  $\hat U_\text{Haar}$.  
Notes in Table \ref{tab:props} list the dimension $N$ of the quantum space of the operators 
used in the different figures and tables. 

\subsection{Husimi distributions}

Husimi distributions (also called the Husimi Q representation),
are quantum analogs to classical phase space distributions  \citep{Cartwright_1976}. 
For the perturbed Harper model, 
Husimi distributions created from eigenstates of a Floquet propagator $\hat U$  
resemble orbits of the associated classical model \citep{Quillen_2025}.  
Following the procedure outlined in appendix B by \citep{Quillen_2025}, we construct 
Husimi distributions for each eigenstate of each unitary operator listed in Table \ref{tab:props}
with dimension $N=49$.  This dimension is chosen so that the Husimi distributions are easily displayed 
as they fill a square.  The resulting distributions are shown in Figure \ref{fig:Hus}. 

In Figure \ref{fig:Hus}a there are $N=49$ panels, one for each eigenstate of $\hat U_{Ta}$.  
The image in each panel 
represents probability for a particular eigenstate as a function of position in phase space, with horizontal direction $\phi$ and vertical 
direction $p$.  The colormap and scale are identical in every panel and in every subfigure. 
In the chaotic regions of phase space 
 in the associated classical model shown in the surfaces 
of section shown in Figure \ref{fig:SSab}a,b, 
the Husimi distributions in Figure \ref{fig:Hus}a,b are widely distributed in phase space.  
Because the ergodic region for $\hat U_{Ta}$ covers phase space, most of the Husimi distributions 
in Figure \ref{fig:Hus}a are widely
distributed. In contrast only part of phase space is ergodic for $\hat U_{Tb}$, and in Figure \ref{fig:Hus}b there are Husimi distributions that resemble the non-chaotic orbits present in 
the Poincar\'e map of Figure \ref{fig:SSab}b. 
The Husimi distributions in Figure \ref{fig:Hus}c, constructed from eigenstates of 
a  particular randomly chosen unitary matrix, and they all 
 appear uniformly distributed in phase space. 
Similarly the Husimi distributions constructed from eigenstates of the drifted operator $\hat U_\text{Drift}$,  
 shown Figure \ref{fig:Hus}d,  appear uniformly distributed. 

\begin{figure}[htbp]\centering
\includegraphics[width=3.0truein]{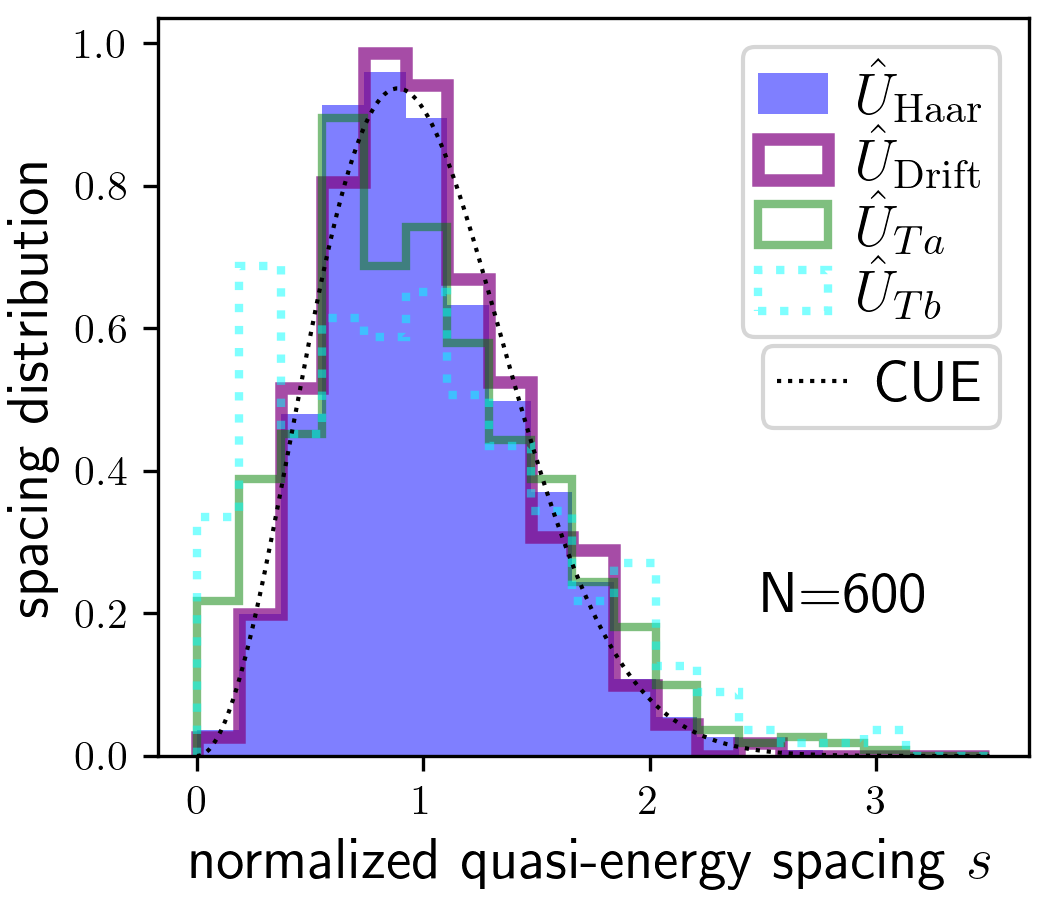}
\caption{Quasi-energy spacing histograms computed for the 4 propagators listed in Table \ref{tab:props} 
with $N=600$. Here $s$ is the spacing divided by the mean value. 
We also show the spacing distributed predicted for the circular unitary ensemble (CUE, equation \ref{eqn:CUE}) as a thin black dotted line.  The Haar random and drifting propagators $\hat U_\text{Haar}$ and $\hat U_\text{Drift}$
exhibit spacings in their quasi-energies that are closer to that of the CUE than the two Floquet propagators 
$\hat U_{Ta}, \hat U_{Tb}$. 
\label{fig:spacing} }
\end{figure}

\subsection{Quasi-energy spacings}

According to the Bohigas, Giannoni and Schmit conjecture \citep{Bohigas_1984},   
 the statistics of the eigenphases or quasi-energies of a Floquet propagator that is 
 related to a classically chaotic system would be similar to those of matrices 
 chosen from a random matrix ensemble. 
As the propagators are unitary, the quasi-energy 
 statistics would be described by the circular unitary ensemble or CUE. 
The CUE has quasi-energy spacing distribution 
\begin{align}
p_\text{CUE}(s) = \frac{32}{\pi^2} e^{- \frac{4s^2}{\pi}} \label{eqn:CUE}
\end{align}
(via Wigner's surmise) 
where the spacing $s$ between the eigenphases is normalized by mean value of the entire set of spacings. 

In Figure \ref{fig:spacing}
quasi-energy spacing distributions are displayed for the 4 unitaries listed in Table \ref{tab:props}. 
We use a larger value of $N$ (than in Figure \ref{fig:Hus}) to compute the 
unitaries so that we can better see the distribution. 
As expected, the spacing distribution of the 
ergodic drifted $\hat U_\text{Drift}$ unitary resembles that of the Haar-randomly chosen 
matrix $\hat U_\text{Ha}$, and both of these resemble the distribution predicted for the CUE.  
The spacing distribution of the hybrid operator $\hat U_{Tb}$ deviates from the CUE's distribution. 

A set of ordered energy levels can be characterized with an average of ratios  
that depend on consecutive spacings or gaps  \citep{Oganesyan_2007,Atas_2013,DAlessio_2014,Lazarides_2015,Ponte_2015}.
Following \citet{Oganesyan_2007}, 
the energies are sorted so that they are in order of increasing value.  We define   
 $s_j$  to be the difference between the $j$-th  and $(j+1)$-th energies after sorting them.  
We compute a ratio $r_j$ for each consecutive gap  
\begin{align}
0 \le r_j = \frac{ \text{min} (s_j, s_{j+1}) }{ \text{max} (s_j, s_{j+1}) } \le 1 . \label{eqn:r_j}
\end{align} 
A disordered and delocalized Floquet system has average $r$-ratio near that of a random matrix 
 ensemble and one that is ordered or localized would have a lower average $r$-ratio 
 closer to that of the Poisson distribution \citep{Ponte_2015}. 

For the 4 unitaries listed in Table \ref{tab:props} computed with $N=600$ and shown in Figure \ref{fig:spacing}
 we computed the $r$-ratios and list the average values in Table \ref{tab:rratio}. 
 We use the same propagators that have quasi-energy spacings shown in Figure \ref{fig:spacing}.  
For comparison we also list in Table \ref{tab:rratio} the average $r$-ratio values  for the CUE and Poisson 
distributions estimated by \citet{DAlessio_2014}. 
As expected from the shape of the spacing distributions,  the Haar-randomly chosen 
matrix $\hat U_\text{Ha}$ and ergodic drifted $\hat U_\text{Drift}$ unitary  have average $r$-ratio close to that predicted for the CUE.  The ergodic Floquet unitary $\hat U_{Ta}$ has a lower value 
and the hybrid Floquet unitary $\hat U_{Tb}$ has an even lower value of the average $r$-ratio, 
though both are significantly above the $r$-ratio average of a Poisson ensemble. 

\begin{table}[!htb]
\caption{Average ratio of adjacent quasi-energy gaps \label{tab:rratio}}
\begin{ruledtabular}
\begin{tabular}{llll}
Propagator & $r$-ratio average\footnote{
The $r$-ratios (defined in equation \ref{eqn:r_j}) are computed for
the 4-propagators listed in Table \ref{tab:props} for dimension $N=600$.  The same propagators
are used to make Figure \ref{fig:spacing} showing the normalized quasi-energy gaps.}\\
\colrule
$\hat U_{Ta}$  & 0.532  \\
$\hat U_{Tb}$  & 0.501   \\
$\hat U_\text{Drift}$ &  0.604  \\
$\hat U_\text{Haar}$ & 0.614 \\
\hline 
CUE\footnote{The estimated average values for a the CUE and Poisson distributions are from \citet{DAlessio_2014}.}  & 0.59643 \\
Poisson$^\text{b}$ &  0.386294 \\
\end{tabular}
\end{ruledtabular}
\end{table}

\subsection{Transition probabilities}
\label{sec:trans}

\begin{figure*}[htbp]\centering
\includegraphics[width=5.0truein]{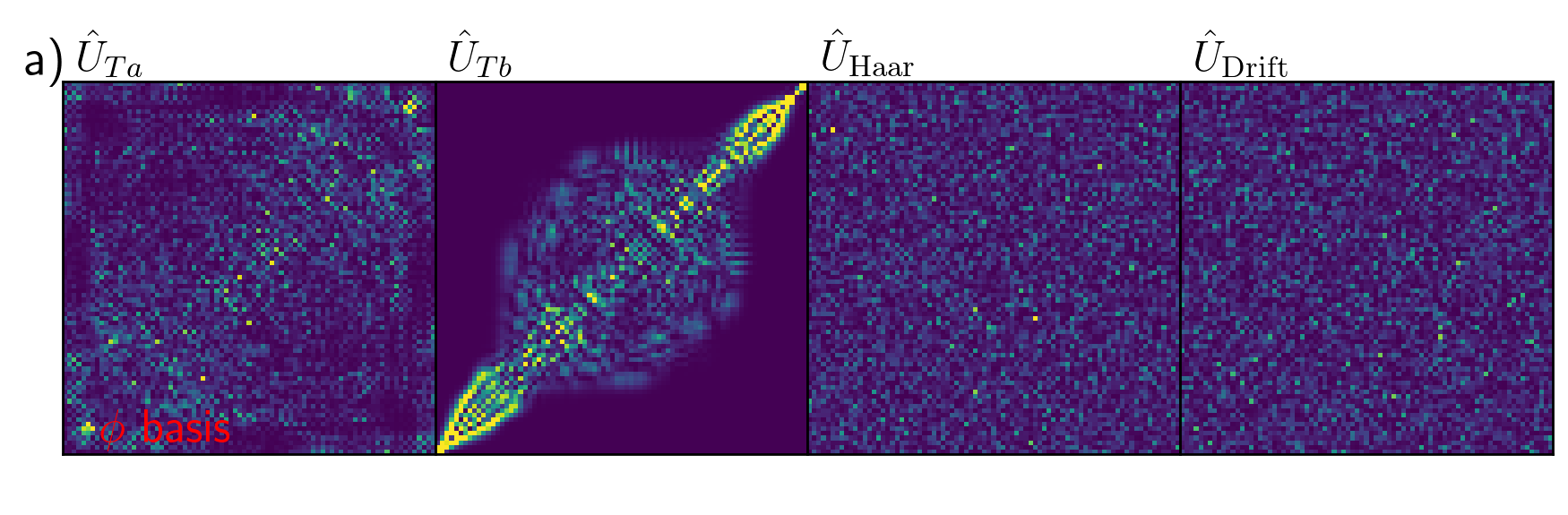}
\includegraphics[width=5.0truein]{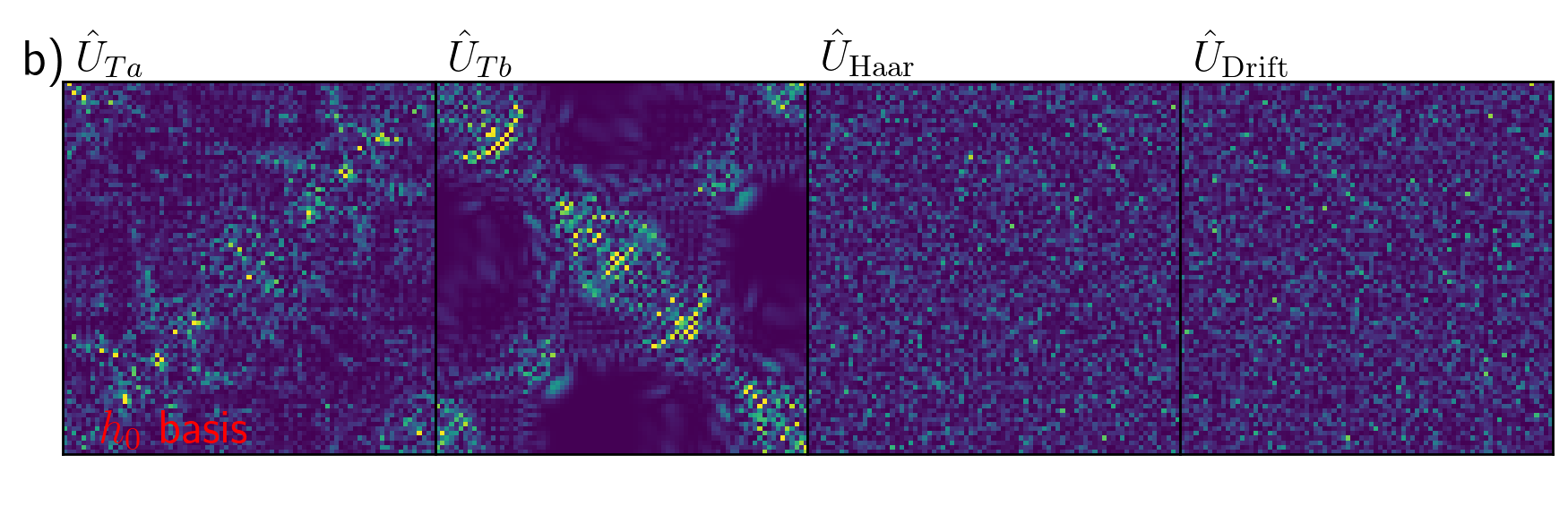}
\caption{a) Transition probabilities $|\bra{j} \hat U \ket{k}|^2$ for all $j,k$ are shown 
as images  for the 4 propagators listed in Table \ref{tab:props} with $N=81$.  
The transition probabilities are computed in the orthonormal basis $B_{\hat \phi}$ 
which is associated with eigenvectors of the angle operator $\hat \phi$. 
b) similar to a) but in computed in the eigenvector basis of $\hat h_0$. 
\label{fig:tt} }
\end{figure*}

\begin{figure*}[htbp]\centering
\includegraphics[width=1.6truein]{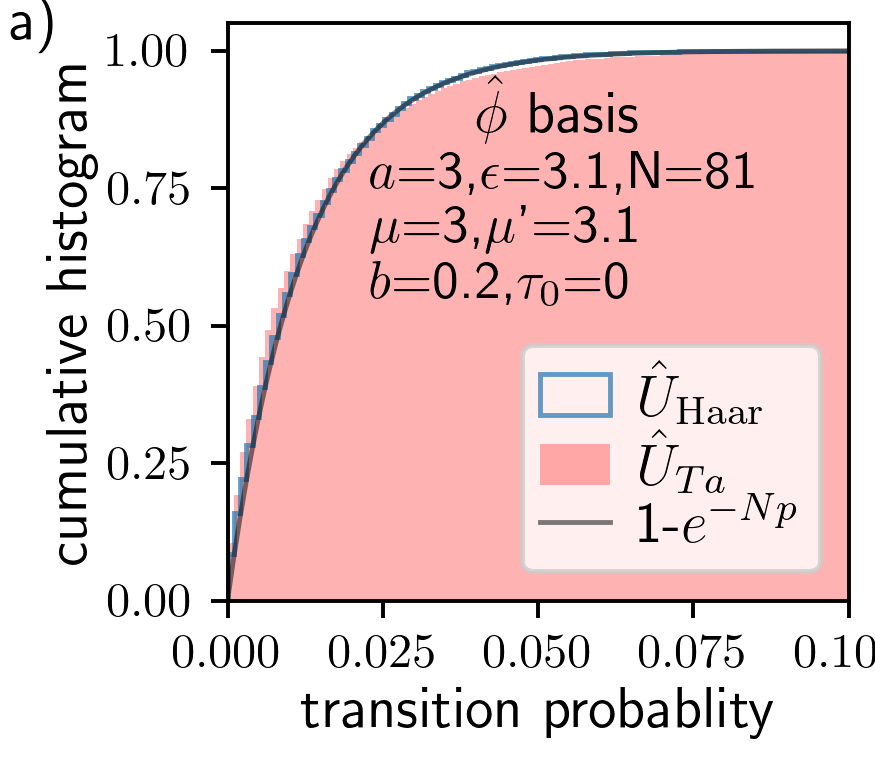}
\includegraphics[width=1.6truein]{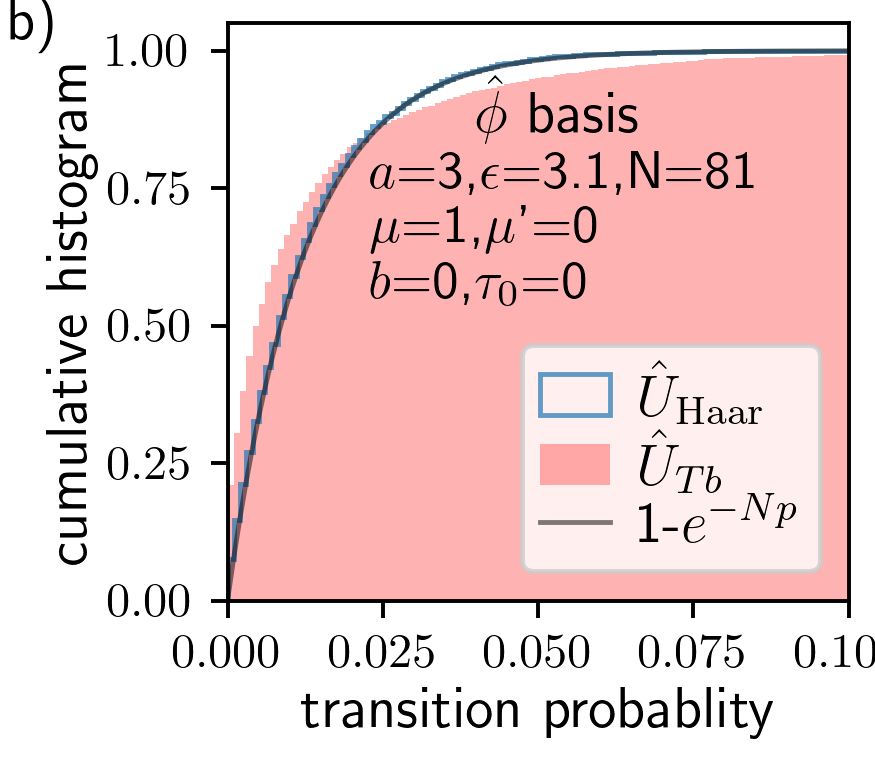}
\includegraphics[width=1.6truein]{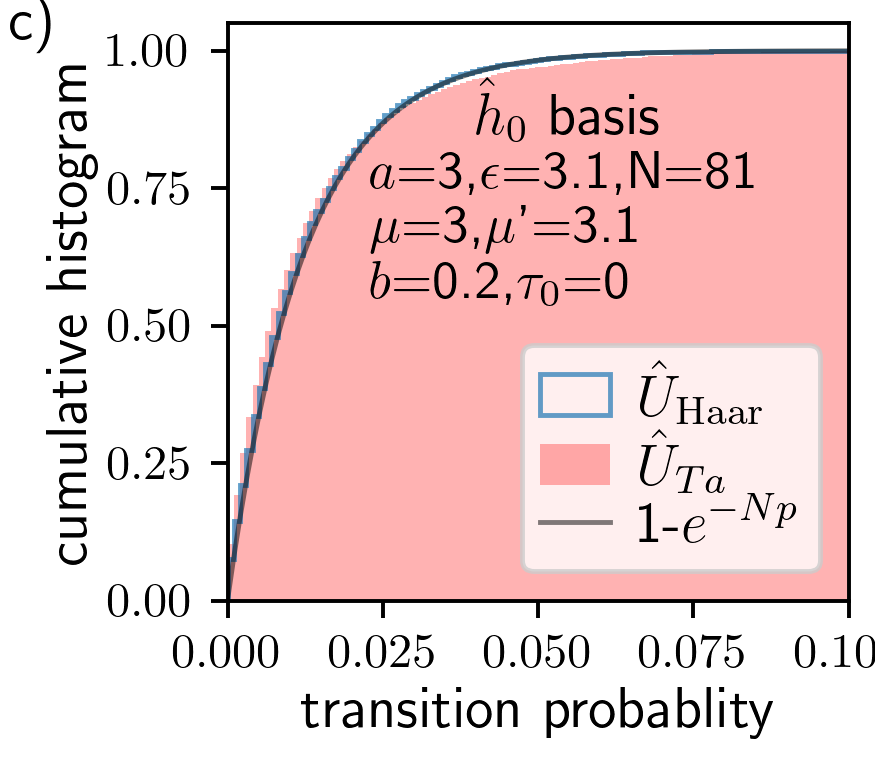}
\includegraphics[width=1.6truein]{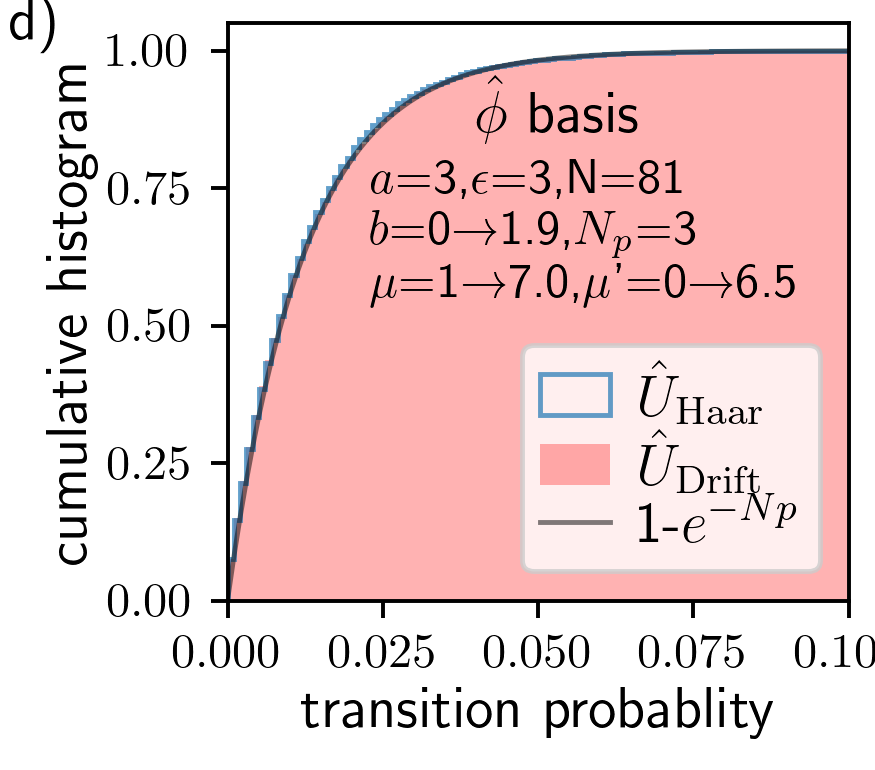}
\caption{a) With a filled pink histogram we show the cumulative distribution of transition probabilities $|\bra{j} \hat U_{Ta} \ket{k}|^2$ for all $j,k \in   {\mathbb Z}_N  $ (equation \ref{eqn:transprobs}).  The Floquet propagator $\hat U_{Ta}$ is chaotic and generated from the Hamiltonian in equations \ref{eqn:hath}, \ref{eqn:hath0}, and \ref{eqn:hath1} and has 
parameters shown on the plot and listed in Table \ref{tab:props}.   
The basis used is the $\{ \ket {j} \}$ basis of the angle $\hat \phi$  operator.  
The black line shows the function $1-e^{-Np}$ which gives the distribution 
expected for a random unitary.   The cumulative histogram generated from 
a particular randomly generated unitary $\hat U_{Ha}$ is shown with a blue line.  
The histogram for $\hat U_{Ta}$ is close to  that of the randomly generated unitary. 
b) Similar to a) but for the propagator $\hat U_{Tb}$ with a smaller amplitude perturbation so the system contains a smaller chaotic  region in phase space.  
c) Similar to a) but the transition probabilities are computed using a basis of eigenvectors of $\hat h_0$ 
(equation \ref{eqn:hath0}) the unperturbed Hamiltonian with parameters  $a,b,\epsilon$, the same
as for $\hat U_{Ta}$. 
d) Similar to a) but for a drifting system.  The propagator is $\hat U_\text{Drift}$ with parameters 
listed in Table \ref{tab:props}. 
 \label{fig:PT}
}
\end{figure*}

Anti-concentration or delocalization describes a state that is spread out in a particular basis. 
We take basis $B : \{ \ket{v_j} : j \in  {\mathbb Z}_N  $ to be a set of 
 linear independent and orthonormal vectors ($\bra{v_i}\ket{v_j} = \delta_{ij}$) that spans 
 the quantum vector space. 
Here ${ \mathbb Z}_N $ is short-hand for the set $j = \{ 0 , 1, \ldots, N-1 \}$. 

The eigenvectors of a Hermitian operator can be used to construct such a basis. 
We denote basis $B_{\hat \phi}$ as a basis in which the angle operator is diagonal; 
\begin{align}
B_{\hat \phi} \equiv  \left\{  \ket{j}  : j \in { \mathbb Z}_N \right\}   .
\end{align} 
We denote basis $B_{\hat h_0}$ as a set of normalized eigenvectors 
of the operator $\hat h_0$ (the unperturbed portion of the Hamiltonian in equation \ref{eqn:hath0}). 

To illustrate that an operator $\hat U$ is anti-concentrated or delocalized with respect 
to an orthonormal basis $B$,  we can compute transition probabilities 
for a pair of states in the basis; 
\begin{align}
z_{jk}(B,\hat U ) = |\bra{v_j} \hat U \ket{v_k}|^2  \text { with } \ket{v_j}, \ket{v_k} \in B. 
\label{eqn:transprobs}
\end{align}
Here $z_{jk}$ is the probability that a state $\ket{v_k}$ (in orthonormal basis $B$), after evolving according to 
the unitary operator $\hat U$, is 
measured to be in state $\ket{v_j}$ (also in basis $B$). 
We call the set of transition probabilities 
\begin{align}
S_\text{trans} (B,\hat U) = \left\{ z_{jk} (B, \hat U ):  j,k \in {\mathbb Z}_N \right\} .
\end{align} 

For the 4 unitary operators in Table \ref{tab:props}, we compute the set of transition probabilities 
$S_\text{trans}$ and show the values in the form of images in Figure \ref{fig:tt}.  
From left to right in Figure \ref{fig:tt}a, the panels show the sets $S_\text{trans} (B_{\hat \phi},\hat U)$ for 
the operators $\hat U_{Ta}, \hat U_{Tb}, \hat U_\text{Haar} $ and $\hat U_\text{Drift}$ in the $B_{\hat \phi}$ basis. Figure \ref{fig:tt}b is similar but for the sets $S_\text{trans} (B_{\hat h_0},\hat U) $ computed in 
a basis of eigenvectors of $\hat h_0$.  In Figure \ref{fig:tt}a the basis vectors $\ket{j}$ 
used to compute the transition probability are in order of increasing index $j$ along 
 the horizontal axis  and in order of  increasing index $k$ along the vertical axis.  
In Figure \ref{fig:tt}b is similar but eigenvectors are arranged in order of  increasing eigenvalue of $\hat h_0$.  

We compare the distribution of transition probabilities in the set $S_\text{trans}(B,\hat U)$ to that generated 
from a random state $\ket{\psi}$ 
that is chosen from a uniform distribution in a Hilbert space of the same dimension $N$, 
and consistent with the Haar measure $\mu_\text{Haar}$.  Given $\ket{\psi}$, 
the probability of observing a reference state $\ket{\xi_0}$  is 
\begin{align}
z= |\bra{\xi_0}\ket{\psi} |^2. 
\end{align}
The probability $z$,  
when averaged over the distribution of possible uniformly distributed states $\ket{\psi}$, is 
\begin{align}
p_\text{PT}(z) \approx N e^{-Nz} \label{eqn:PT}.
\end{align}
This distribution is known as the Porter-Thomas distribution \citep{Porter_1956,Brody_1981,Boixo_2018}.
The associated cumulative probability distribution 
\begin{align}
F_\text{PT} (z) = \int_0^z p_\text{PT}(z') dz' \approx 1 - e^{-Nz}.  \label{eqn:CPT}
\end{align}

In Figure \ref{fig:PT} we show the cumulative histograms of transition probabilities 
computed from the 4 different propagators and using two different bases.  
Each cumulative histogram is constructed from a single set of transition probabilities using a 
single basis and a single propagator. In Figure \ref{fig:PT}a we show 
the set $S_\text{trans} (B_{\hat \phi},\hat U_{Ta})$ where $\hat U_{Ta}$ is listed in Table 
\ref{tab:props} and  has  parameters ensuring that the ergodic region covers phase space. 
The Porter-Thomas distribution is computed for a distribution of randomly selected states, but
here we use all pairs of states in a particular basis to compute a set of transition probabilities
caused by a single propagator. 
Figure \ref{fig:PT}a also shows a cumulative distribution generated from the set  
$S_\text{trans}(B_{\hat \phi}, \hat U_\text{Haar})$ where $U_\text{Haar}$ is a single randomly selected unitary. 
Cumulative distributions of $S_\text{trans}(B_{\hat \phi},\hat U_T)$ 
and $S_\text{trans} (B_{\hat \phi},\hat  U_\text{Haar})$ both resemble the cumulative probability distribution 
of the Porter-Thomas distribution (equation \ref{eqn:CPT}), 
which is plotted as a black line in Figure \ref{fig:PT}. 

We also compute sets of transition probabilities in the 
basis $B_{\hat h_0}$ generated from the eigenstates of $\hat h_0$, the unperturbed Hamiltonian 
with the same parameters as used to generate the Floquet propagator $\hat U_{Ta}$. 
In Figure \ref{fig:PT}b we show the cumulative distributions of $S_\text{trans}(B_{\hat h_0}, \hat U_{Ta})$
and $S_\text{trans}(B_{\hat h_0}, \hat U_\text{Haar})$. 
The chaotic operator $\hat U_{Ta}$ is fairly well distributed or delocalized with respect 
to the $B_{\hat h_0}$ basis as well. 
In Figure \ref{fig:PT}c we show a set of transition probabilities 
$S_\text{trans}(B_{\hat \phi},\hat U_{Tb})$ for a Floquet propagator with reduced perturbation strengths.  
This system has a smaller ergodic region that does not cover phase space. 
The cumulative distribution of transition probabilities deviates from that of the Haar-random matrix. 
In Figure \ref{fig:PT}c we show a set of transition probabilities 
$S_\text{trans}(B_{\hat \phi},\hat U_\text{Drift})$ for the drifting propagator with parameters listed in Table \ref{tab:props}. 
The cumulative histogram for this propagator is closer to the Haar-random curve than 
that shown in Figure \ref{fig:PT}a for $\hat U_{Ta}$. 

In Figure \ref{fig:PT} it is difficult to tell the difference between 
a cumulative histograms of transition probabilities and the exponential curve that is typical of the Porter-Thomas distribution.  Slight differences in the distributions are easier to see from the moments 
of the transition probabilities.  
We compute second and third moments from the transition probabilities  
\begin{align} 
m_\text{trans}^{(2)} &= \frac{1}{N^2} \sum_{j,k=0}^{N-1} \left(Nz_{jk}\right)^2 \nonumber \\
m_\text{trans}^{(3)} &= \frac{1}{N^2} \sum_{j,k=0}^{N-1} \left(Nz_{jk}\right)^3    \label{eqn:mms}
\end{align}
for the transition probabilities computed in the two different bases and for 
the 4 different propagators.  The resulting moments are listed 
in Table \ref{tab:mom_trans} along with standard errors that are based on 
the variances divided by the square root of the number of samples (which in this case is $N^2$). 
The first moment is equal to 1 due to normalization so it need not be measured. 
The factor of $N$ is included in the sums in equation \ref{eqn:mms} so that the moments can be 
directly compared to those of an exponential distribution $p(z) = e^{-z}$ 
which has moments $\langle z^k \rangle  = k!$. 
As expected, the cumulative distributions that deviate more strongly from the exponential distribution
 in Figure \ref{fig:PT} have larger moments in Table \ref{tab:mom_trans}. 

\begin{table}[htbp] \centering
\caption{Moments of transition probabilities\footnote{Moments for the transition probabilities are defined 
in equations \ref{eqn:transprobs} and \ref{eqn:mms} and were computed 
for four unitaries with dimension $N=81$.}  \label{tab:mom_trans}}
\begin{ruledtabular}
\begin{tabular}{lllll}
 Unitary &    Basis & $ m_\text{trans}^{(2)} $ & $ m_\text{trans}^{(3)}$ \\
\colrule
$\hat U_{Ta}$      & $B_{\hat \phi} $ & 2.44 $\pm$ 0.09 & 10.13 $\pm$ 0.77 \\
$\hat U_{Ta}$      & $B_{\hat h_0} $ & 2.47 $ \pm$  0.10 & 10.96 $\pm$ 1.22 \\
$\hat U_{Tb}$      & $B_{\hat \phi} $ & 3.48 $\pm$ 0.16 & 21.68 $\pm$  1.87 \\
$\hat U_{Tb}$       & $B_{\hat h_0} $ &11.59 $\pm$ 1.51 &  346 $\pm$ 82  \\ 
$\hat U_\text{Drift}$ & $B_{\hat \phi} $ & 1.97 $\pm $ 0.05 & 5.68 $\pm $   0.27  \\
$\hat U_\text{Drift}$ & $B_{\hat h_0} $ & 2.23 $\pm $ 0.06 & 6.33 $\pm $   0.39  \\
$\hat U_\text{Haar}$ & $B_{\hat \phi} $ & 1.97 $\pm$ 0.05 &  5.56 $\pm$  0.24 \\
$\hat U_\text{Haar}$ & $B_{\hat h_0} $ &1.98 $\pm$ 0.05 & 5.81 $\pm$ 0.31 \\
\colrule
Porter-Thomas\footnote{If the transition probabilities 
obeyed a exponential (Porter-Thomas) distribution, they would have the values 
given in this last row.}    && 2  & 6 \\
\end{tabular}
\end{ruledtabular}
\end{table} 

\subsection{Inverse Participation Ratios}

\begin{figure*}[htbp]\centering
\includegraphics[width=3.0truein]{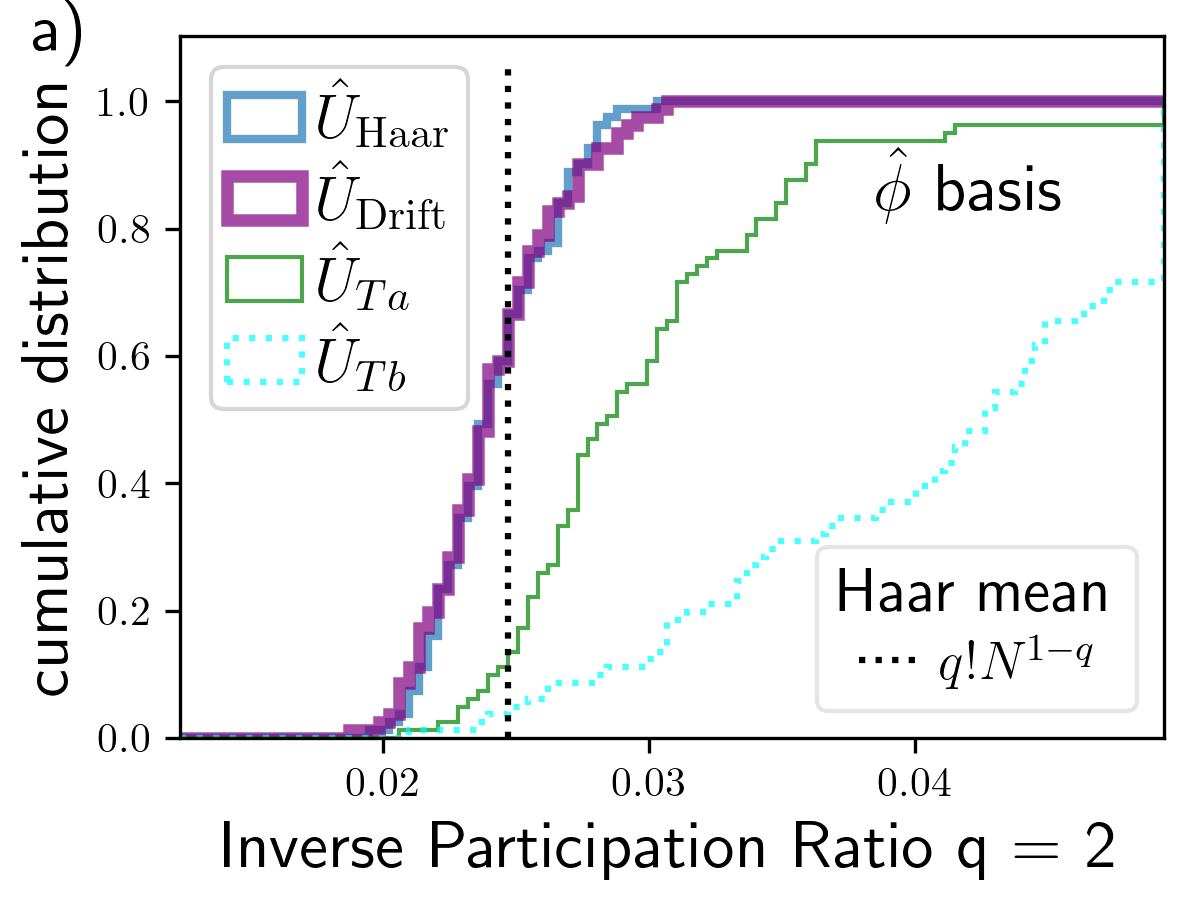}
\includegraphics[width=3.0truein]{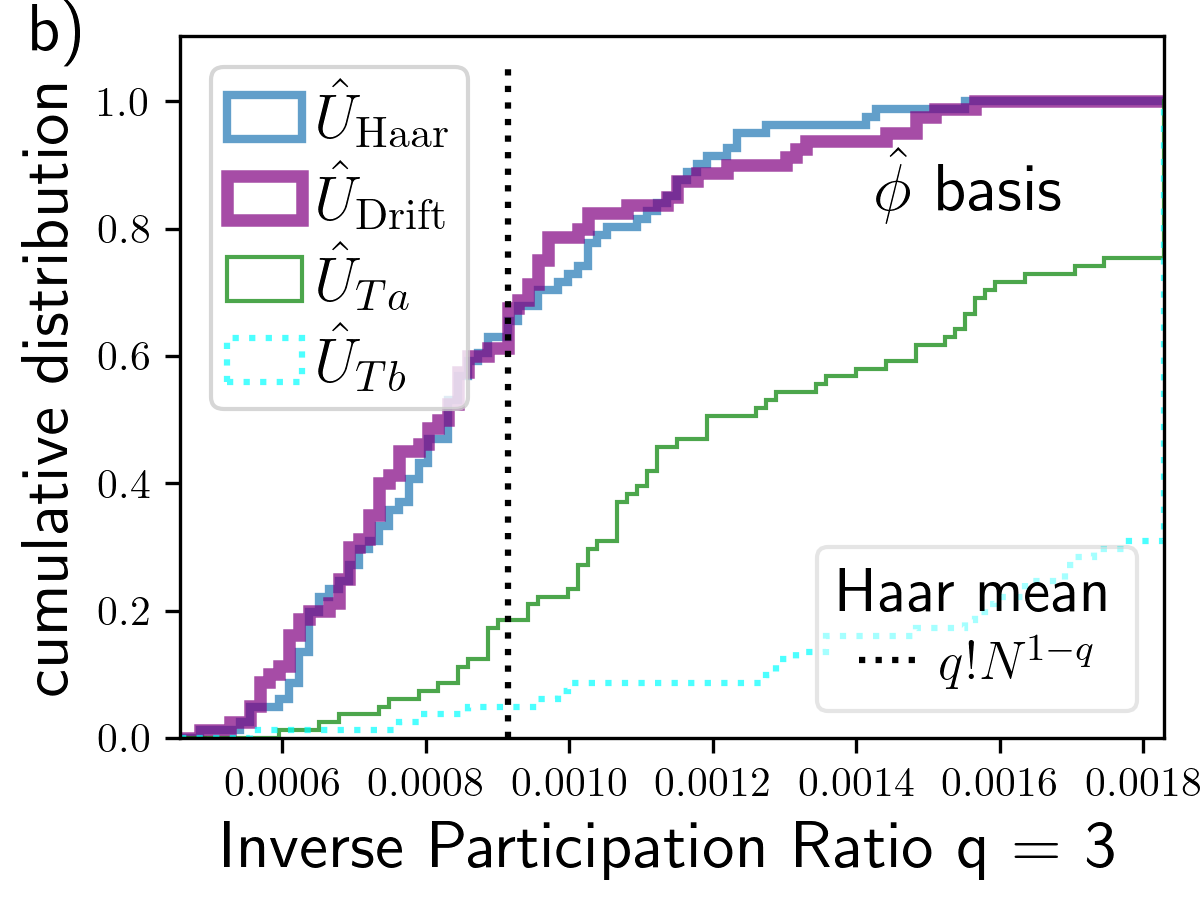}
\includegraphics[width=3.0truein]{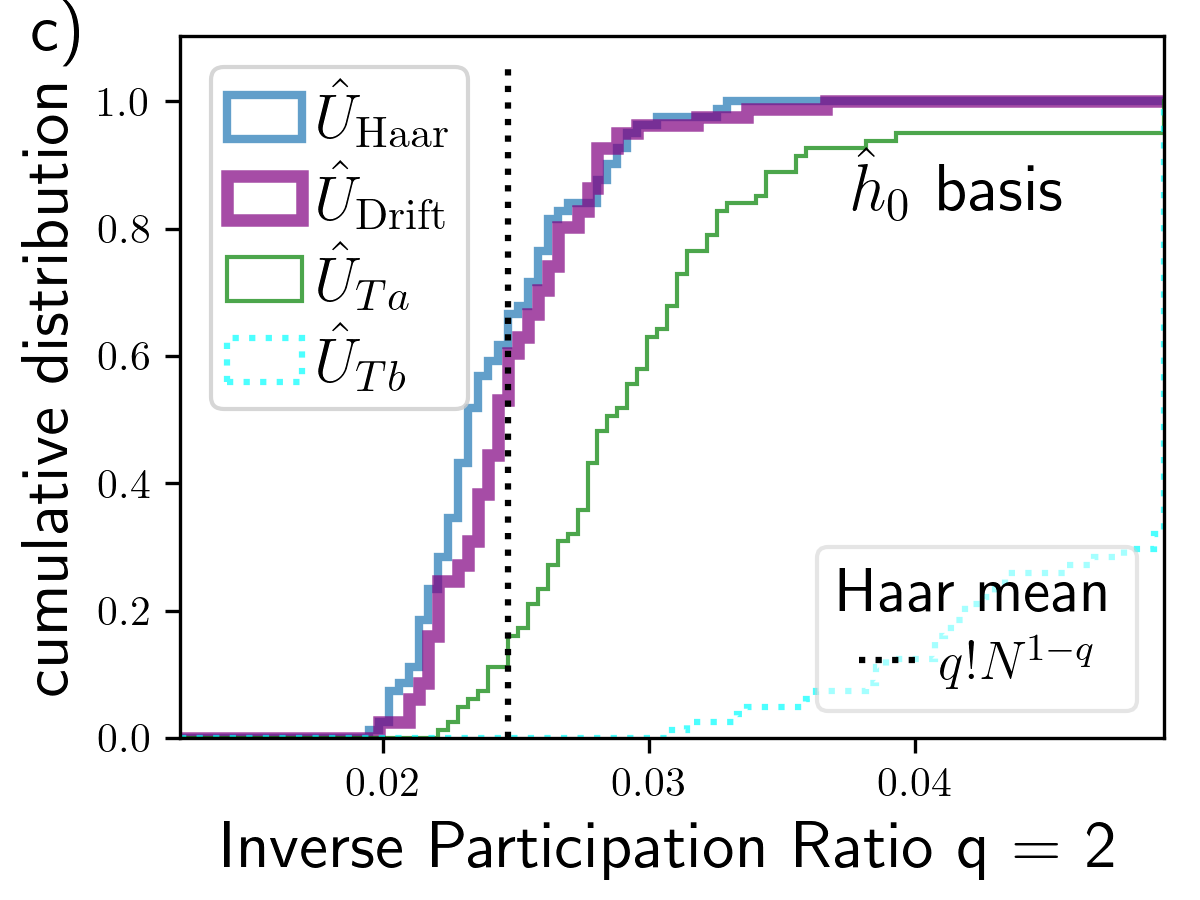}
\includegraphics[width=3.0truein]{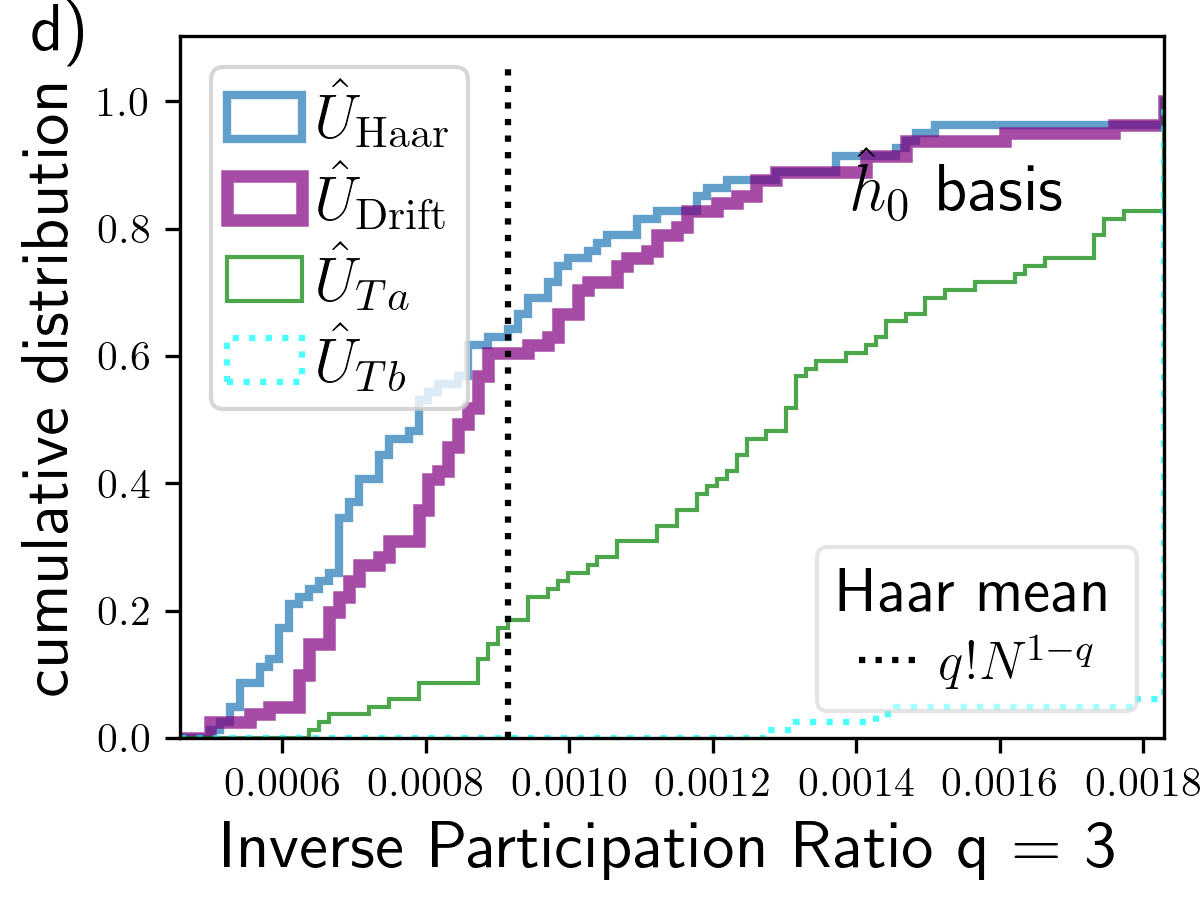}
\caption{a) We plot cumulative histograms of the set of 
inverse participation ratios $I_{q=2}(B_{\hat \phi},\hat U)$. 
These are computed via equation \ref{eqn:IqB} for each 
basis state in the $\ket{j}$ basis (eigenvectors of $\hat \phi$) and for the 4 propagators 
listed in Table \ref{tab:props} with $N=81$.
b) Same as a) except the inverse participation rates have $q=3$.
c) Same as a) except in a basis $B_{\hat h_0}$ composed of eigenvectors of $\hat h_0$. 
d) Same as a) except in basis $B_{\hat h_0}$ and with $q=3$.  
\label{fig:IPR} }
\end{figure*}

A quantity called 
the {\it inverse participation ratio} \citep{Wegner_1980,Liu_2025,Lami_2025,Claeys_2025} 
is also a way to measure localization or the concentration of a state.  

\begin{definition} \label{def:IPR}
With an orthonormal basis $B = \{ \ket{v_j}: j\in  {\mathbb Z}_N  \}$, 
the $q$-th {\bf inverse participation ratio} of a particular quantum state $\ket{\psi}$  is 
\begin{align}
I_q(B, \ket{\psi}) \equiv \sum_{j=0}^{N-1} \left|\bra{v_j}\ket{\psi}\right|^{2q} \text { for } \ket{v_j} \in B
\label{eqn:IqB}
\end{align}
where $q$ is a positive integer.  
\end{definition}

A fully localized state with $\ket{\psi}$ equal to a basis element
 has inverse participation ratio $I_q = 1$, with respect to that basis, whereas 
if the state spans the entire space homogeneously, 
$\ket{\psi} = \sum_{j=0}^{N-1} \frac{1}{\sqrt{N}} \ket{v_j}$, 
the inverse participation ratio is $I_q = N^{1-q}$. 
If we take the expectation value of the participation ratio for a distribution 
of states  $\{ \ket{\psi} \}$ that are uniformly  distributed,  the average 
\begin{align}
 {\mathbb E}_{\ket{\psi}\sim \text{Haar}} \left[ I_{q}(B, \ket{\psi})  \right] = q! N^{1-q} 
 \label{eqn:Haar_IPR}
 \end{align}
\citep{Claeys_2025}.

We compare how well the unitaries in Table \ref{tab:props} cause delocalization by computing 
the inverse participation ratio 
\begin{align}
I_q(B, \hat U \ket{v_j}) 
\end{align}
for unitary $\hat U$ operating on 
 $\ket{v_j}$, one of the basis elements in the basis $B$. 
For unitary $\hat U$ we look at the set of non-negative real numbers  
\begin{align}
S_{q}(B,\hat U) = \left\{  I_q(B, \hat U \ket{v_j}):  \text{ for } j  \in {\mathbb Z}_N   \right\}
. \label{eqn:Set_IPR}
\end{align}

In Figures \ref{fig:IPR} we show cumulative histograms of inverse participation ratios computed for
the 4 unitaries listed in Table \ref{tab:props}.   In Figure \ref{fig:IPR}a we show  
$S_{q=2}( B_{\hat \phi}, \hat U) $  with each plotted line showing  
 one of the 4 unitaries. Figure \ref{fig:IPR}a is similar but with $q=3$. 
 Figure \ref{fig:IPR}c,d  are similar to Figure \ref{fig:IPR}a,b but use the basis $B_{\hat h_0}$. 
The dotted vertical lines show the expectation or mean value  (of equation \ref{eqn:Haar_IPR}) 
expected for a Haar-random  set of unitaries. 
Figure \ref{fig:IPR} illustrates that the Haar-random operator $\hat U_\text{Ha}$ and 
the drifted propagator $\hat U_\text{Drift}$ have similar distributions of
inverse participation ratios, whereas those of the Floquet propagators $\hat U_{Ta}$ and 
$\hat U_{Tb}$ are significantly lower, with the hybrid $\hat U_{Tb}$ 
having the lowest inverse participations ratios.  

\subsection{Short summary}

Figure \ref{fig:SSab} illustrates that a strongly 
perturbed version of periodically perturbed Harper model can be chosen so that the 
 classical Hamiltonian has a wide ergodic region that covers
phase space and lacks resonant islands.   
Husimi distributions generated from quantum operators with the same parameters 
resemble the classical orbits (as shown in  Figure \ref{fig:Hus}). 

We compared properties of 4 unitary operators, two of which are Floquet propagators,
one that is drawn from a Haar random distribution, and a drifted version 
of the periodically perturbed system.  
One of the Floquet propagators was chosen to be fully ergodic and the other 
is a hybrid with ergodic and non-ergodic regions in phase space.  

Notions of quantum chaos include the idea that a quantum propagator is spread over phase space 
\citep{Feingold_1985} or that position in momentum observables appear to be spread out
 in the eigenbasis of the propagator \citep{Peres_1984}. 
The operators that we describe as ergodic share a number of properties with the Haar-random
unitary:  Husimi distributions computed from their eigenstates are evenly distributed across phase 
space (Figure \ref{fig:Hus}),  transition probabilities computed in both $\hat \phi$ (angle) and $\hat h_0$ (unperturbed energy) bases appear randomly distributed and their distribution resembles a Porter-Thomas distribution (Figures \ref{fig:tt} and \ref{fig:PT}).   Eigenvalue spacing distributions of ergodic
operators are similar to the circular unitary random matrix ensemble (Figure \ref{fig:spacing}).  
Inverse participation ratios measured in both $\hat \phi$ and $\hat h_0$ bases illustrate that the
drifting system and ergodic Floquet systems are more mixed or anti-concentrated 
in comparison to the propagator containing resonant islands (Figure \ref{fig:IPR}). 
In appendix \ref{ap:op_coefs} 
 we show that for the ergodic unitaries the distribution of amplitudes in the operator basis 
constructed from displacement operators is also randomly distributed,  
and has distribution consistent with a Porter-Thomas distribution (see Figures \ref{fig:Wvals} and \ref{fig:wcumu}).
We have illustrated in different ways that an ergodic propagator exhibits properties similar to
those of a Haar-random matrix. 

Uniform mixing or anti-concentration could be related to 
the sensitivity of an operator to the parameters that are used to generate it.  
Evidence of uniform mixing would imply that that insensitive subspaces are not present. 
Hence,  we suspect that operators we chose for their ergodic properties are 
more suitable for generating quantum samplers. 

\section{Constructing quantum samplers}
\label{sec:samplers}

Given a distribution of unitary operators, it is desirable to determine if 
the distribution has moments similar or equal to a Haar-random distribution (computed with the 
Haar measure).   If the k-th moments are equal to that of a Haar-random distribution 
then the distribution is called a {\bf k-design}. 
One way to characterize a distribution of unitaries is with quantities called the {\it k-frame potentials} 
\citep{Gross_2007,Scott_2008,Mele_2024}. 

\begin{definition} \label{def:Framep}
Given a distribution $\nu$ of unitaries in U$(N)$  the {\bf k-frame potential }
\begin{align}
{\cal F}_\nu^{(k)}  \equiv  {\mathbb E}_{\hat U,\hat V \in \nu}  \left[  \left|\tr \hat U\hat V^\dagger\right|^{2k} \right]
 \label{eqn:Frame}
.
\end{align}
Here $k$ is a positive integer.  
\end{definition}

A distribution of unitaries in U$(N)$ is a {\it k-design} if and only if its k-frame potential is 
equal to that of a Haar-random distribution;  
\begin{align}
{\cal F}_\nu^{(k) }= {\cal F}_{\mu_\text{Haar}} = k!
\end{align}
\citep{Mele_2024}, where $\mu_\text{Haar}$ represents the Haar measure. 
The minimum possible value of the k-frame potential is equal to that of the Haar measure \citep{Scott_2008}. 

It is challenging to derive analytical estimates for k-frame potentials \citep{Brandao_2016,Nakata_2017}, 
hence we estimate them numerically. 
We numerically generate samples of pairs of unitaries $\hat U_i, \hat V_i$  from a distribution $\nu$ 
of unitaries 
with index $i$ referring to an individual sample. From each pair of randomly generated samples 
we compute 
\begin{align}
z_i = \left|\tr \hat U_i \hat V_i^\dagger \right|^2  . \label{eqn:zzz}
\end{align}
We estimate the k-frame potential of the distribution from the sample's average  
\begin{align}
{\cal F}_\nu^{(k)} \approx  \langle z^k \rangle  = \frac{1}{N_U}\sum_{i=1}^{N_U}  z_i^k 
\label{eqn:Fp_num}
\end{align}
where $N_U$ is the number of pairs of unitaries generated. 
To estimate the accuracy of the estimate we use the standard error 
\begin{align}
\sigma_k = \frac{1}{\sqrt{N_U}} \sqrt{\langle z^{2k}\rangle  - (\langle z^k \rangle)^2}. \label{eqn:std_err}
\end{align}
This error gives 
 an estimate for the scatter in the generated values of the $z$ values in equation \ref{eqn:zzz}
or their powers. 

\subsection{Floquet Samplers}

In this section we explore generating samples of unitaries from the periodic 
Floquet system with propagator computed over a period of $2\pi$ 
from the Hamiltonian operator defined by equations \ref{eqn:hath}, \ref{eqn:hath0} and \ref{eqn:hath1}. 
To describe a sampler, we give distributions for the parameters of 
the Hamiltonian ($a, b, \epsilon, \mu, \mu', \phi_0, \tau_0$) from which the propagator is derived
(equation \ref{eqn:UT} with $T = 2 \pi$).   
A set of randomly generated unitaries is generated by 
first selecting a set of parameters from their distributions and then computing 
the Floquet propagator with these selected parameters. 
A list of Floquet samplers is given in Table \ref{tab:fl_samplers}.  
The samplers are labelled with a letter D if angles ($b, \phi_0$ or $\tau_0$) are chosen from 
a discrete set of values, otherwise they are described with the letter C, implying 
that parameters are chosen from uniform distributions within particular maximum
and minimum  values. 
Each sampler label lists the parameters that are drawn from distributions.  Remaining 
parameters are fixed and have fiducial values listed at the top of the table.  
For most of these samplers, parameters are similar to those of $\hat U_{Ta}$ 
(see Table \ref{tab:props} and discussed in section \ref{sec:comp4}) which is fully ergodic, with the exception 
of that labelled C$_w b\tau\mu\mu'$ which has smaller perturbation strength parameters
and resembles that of $\hat U_{Tb}$ which is hybrid, containing both ergodic and integrable regions 
in phase space. 

We randomly choose parameters from their distributions and generate a 
sample of $N_U$ pairs of unitary propagators.  
From these pairs we estimate the $k = 1,2,3$ frame potentials (via equation 
\ref{eqn:Fp_num}) along with standard errors computed with equation \ref{eqn:std_err}. 
Numerically estimated  k-frame potentials are listed in Table \ref{tab:frames} with the number of samples
used. 
We also list the theoretical values for a Haar distribution and a control sample of values 
measured numerically from a generated set of Haar-random unitaries 
of the same dimension and with a similar number of samples as used to estimate 
 other k-frame potentials. 

\begin{figure}[htbp]\centering
\includegraphics[width=3.3truein,trim = 8 0 0 0,clip]{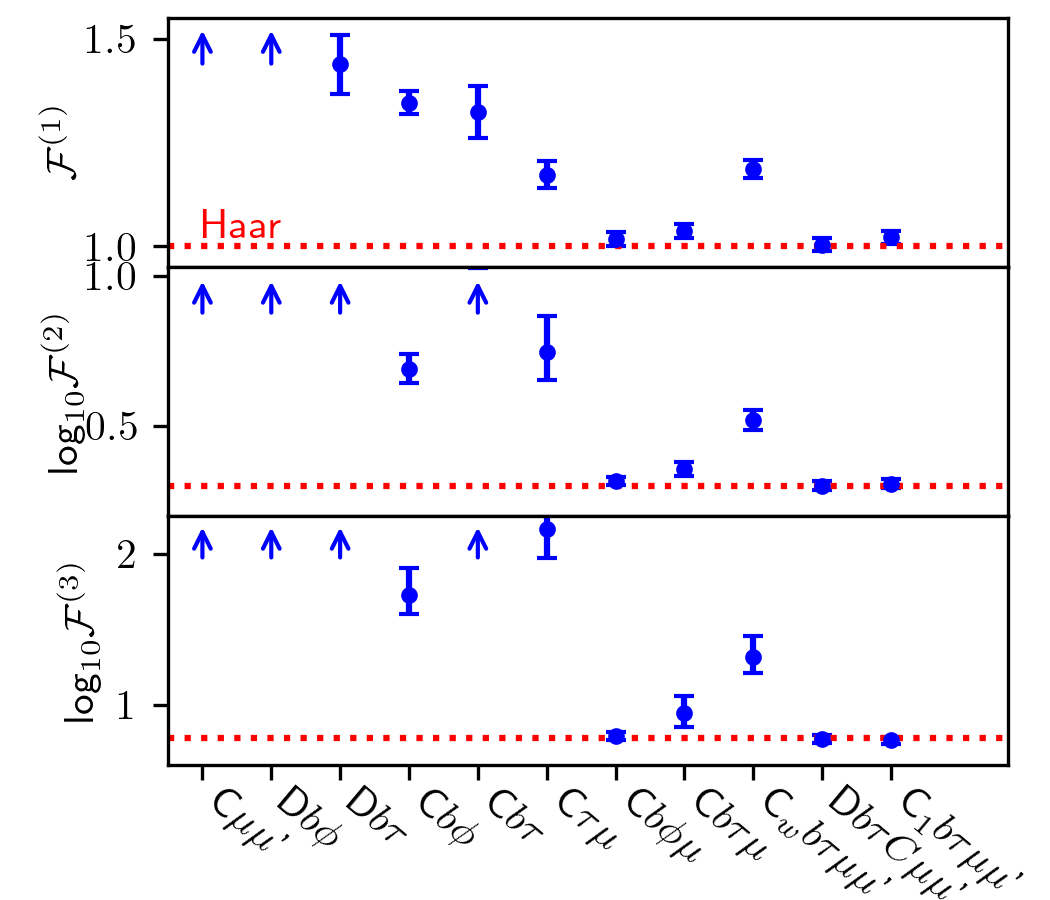}
\caption{Numerically estimated k-frame potentials of Floquet samplers 
generated by constructing propagators from distributions of parameters that describe them.  
From top to bottom we show k-frame potentials with $k=1$, 2 and 3, respectively. 
The vertical axes are on a log scale for $k=2,3$ but on a linear scale for $k=1$.  
The unitary distributions were generated using sample distributions listed in Table \ref{tab:fl_samplers}. 
The parameters were chosen to be close to those of the operator $\hat U_{Ta}$ (discussed
in section \ref{sec:comp4}) which is fully ergodic 
except for the sample on the right which was chosen to be closer to the operator $\hat U_{Tb}$
which contains resonant islands.   
Frame potential values and the number
of samples used in estimating them are listed in Table \ref{tab:frames}. 
The red dotted line shows the $k!$ values 
expected for a Haar-random sample. 
\label{fig:fp_fl}}
\end{figure} 

\begin{table}[!htb]
\caption{Description of Floquet Samplers\footnote{The $k=1,2,3$ frame potentials calculated from generated pairs of unitaries using these samplers are shown in Figure \ref{fig:fp_fl} and listed in Table \ref{tab:frames}.} \label{tab:fl_samplers}}
\begin{ruledtabular}
\begin{tabular}{llll}
& Parameters of $\hat h$\\
\colrule
Fiducial\footnote{Parameters of the Hamiltonian operator (specified by equations 
\ref{eqn:hath}, \ref{eqn:hath0}, \ref{eqn:hath1}) 
used to generate a propagator (via equation \ref{eqn:UT} with duration $T=2\pi$) 
are those in the list denoted `fiducial' unless they are drawn from a set described in the second half of the table.} & $N=51,a=3,b=\phi_0=\tau_0= 0$ \\
& $\epsilon=3,\mu=3,\mu'=3.1$ \\
\colrule
C\footnote{The prefix C implies that parameters were
drawn from uniform distributions with low and high values given on the right. 
}$\mu\mu'$ & $\mu \in [3,7),\mu'\in[3.1,7.1)$ \\
D\footnote{The prefix D implies that angles were 
drawn uniformly from a discrete set of values.}$b\phi$ & $b,\phi_0 \in \left\{ \frac{2 \pi j}{N} \text{ integer } j \right \}$ \\
D$b\tau$ & $b,\tau_0 \in \left\{ \frac{2 \pi j}{N} \text{ integer }  j \right\}$  \\
C$b\phi$ & $b,\phi_0 \in [0, 2 \pi ) $ \\
C$b\tau$ & $b,\tau_0 \in [0, 2 \pi ) $ \\
C$\tau\mu$ & $\tau_0 \in [0, 2 \pi ), \mu \in [3,6) $ \\
C$b\phi\mu$ & $b,\phi_0 \in [0, 2 \pi ), \mu \in [3,6) $ \\
C$b\tau\mu$ & $b,\tau_0 \in [0, 2 \pi ), \mu \in [3,6) $ \\
C$_w b\tau\mu\mu'$ & $b,\tau_0 \in [0, 2 \pi ), \mu \in [1,4),\mu'\in[0,2) $ \\
D$b\tau$C$\mu\mu'$  & $b,\tau_0 \in \left\{ \frac{2 \pi j}{N} \text{ integer }  j \right\}$, $ \mu \in [3,6),\mu'\in[3.1,5.1) $ \\
C$_1 b\tau\mu\mu'$\footnote{C$_1b\tau\mu\mu'$ through C$_5b\tau\mu\mu'$ are the same except for dimension $N$.}  & $b,\tau_0 \in [0, 2 \pi ), \mu \in [3,6),\mu'\in[3.1,5.1),N=51 $ \\
C$_2 b\tau\mu\mu'$ & $b,\tau_0 \in [0, 2 \pi ), \mu \in [3,6),\mu'\in[3.1,5.1),N=30 $ \\
C$_3 b\tau\mu\mu'$ & $b,\tau_0 \in [0, 2 \pi ), \mu \in [3,6),\mu'\in[3.1,5.1),N=70 $ \\
C$_4 b\tau\mu\mu'$ & $b,\tau_0 \in [0, 2 \pi ), \mu \in [3,6),\mu'\in[3.1,5.1),N=100 $ \\
C$_5 b\tau\mu\mu'$ & $b,\tau_0 \in [0, 2 \pi ), \mu \in [3,6),\mu'\in[3.1,5.1),N=150 $ \\
\end{tabular}
\end{ruledtabular}
\end{table}

\begin{table}[!htb]
\caption{k-Frame Potentials of Samplers\footnote{The numerically estimated k-frame potentials for $k=1,2,3$ of the samplers listed in Table \ref{tab:fl_samplers} and \ref{tab:dr_samplers}.} \label{tab:frames}}
\begin{ruledtabular}
\begin{tabular}{llll}
Sampler\footnote{The Floquet samplers are those with labels beginning with C or D and  are computed using $N_U=4000$ pairs of randomly generated propagators. The Drifted samplers are those with labels beginning with Dr and are computed using $N_U=2000$ pairs of randomly generated propagators.} & ${\cal F}^{(1)}$   & ${\cal F}^{(2)}$  & ${\cal F}^{(3)}$ \\
\colrule
C$\mu\mu'$  & 4.22 $\pm$ 0.72 &  2118 $\pm$ 1314 &  $(36.6  \pm 2.6)\! \times\! 10^6$ \\ 
D$b\phi$  & 3.33 $\pm$ 1.13 &  5112 $\pm$ 2928 &  $(13.2  \pm 7.6)\! \times\! 10^6$ \\ 
D$b\tau$  & 1.44 $\pm$ 0.07 &  22.31 $\pm$ 10.26 &  2662 $\pm$ 1726 \\ 
C$b\phi$  & 1.35 $\pm$ 0.03 &  4.88 $\pm$ 0.55 &  53.13 $\pm$ 17.82 \\ 
C$b\tau$  & 1.32 $\pm$ 0.06 &  17.71 $\pm$ 11.85 &  2703 $\pm$ 2555 \\ 
C$\tau\mu$  & 1.17 $\pm$ 0.03 &  5.58 $\pm$ 1.34 &  145.4 $\pm$ 81.9 \\ 
C$b\phi\mu$  & 1.02 $\pm$ 0.02 &  2.07 $\pm$ 0.07 &  6.20 $\pm$ 0.36 \\ 
C$b\tau\mu$  & 1.04 $\pm$ 0.02 &  2.27 $\pm$ 0.12 &  8.78 $\pm$ 2.03 \\ 
C$_wb\tau\mu$$\mu$'  & 1.19 $\pm$ 0.02 &  3.30 $\pm$ 0.26 &  20.63 $\pm$ 5.61 \\ 
D$b\tau$C$\mu\mu'$      & 1.00 $\pm$ 0.02 &  2.00 $\pm$ 0.07 &  5.89 $\pm$ 0.38 \\ 
C$_1b\tau\mu$$\mu$'  & 1.02 $\pm$ 0.02 &  2.03 $\pm$ 0.06 &  5.81 $\pm$ 0.30 \\ 
C$_2b\tau\mu$$\mu$'  & 1.00 $\pm$ 0.02 &  2.12 $\pm$ 0.09 &  7.36 $\pm$ 0.78 \\ 
C$_3b\tau\mu$$\mu$'  & 1.01 $\pm$ 0.02 &  2.14 $\pm$ 0.11 &  7.72 $\pm$ 1.39 \\ 
C$_4b\tau\mu$$\mu$'  & 1.02 $\pm$ 0.02 &  2.10 $\pm$ 0.07 &  6.41 $\pm$ 0.41 \\ 
C$_5b\tau\mu$$\mu$'  & 1.00 $\pm$ 0.02 &  2.03 $\pm$ 0.07 &  6.25 $\pm$ 0.44 \\ 
\colrule
Dr$_1\dot b\dot \mu$  & 1.26 $\pm$ 0.08 &  15.9 $\pm$ 10.4 &  1613 $\pm$ 1482 \\ 
Dr$_2b \dot \mu$       & 1.02 $\pm$ 0.02 &  2.07 $\pm$ 0.10 &  6.10 $\pm$ 0.46 \\ 
Dr$_3\tau\dot \mu$    & 1.00 $\pm$ 0.02 &  2.02 $\pm$ 0.11 &  6.53 $\pm$ 0.76 \\ 
Dr$_4\tau\dot \mu$    & 1.03 $\pm$ 0.02 &  2.26 $\pm$ 0.16 &  8.86 $\pm$ 1.84 \\ 
Dr$_5\tau\dot \mu\dot \mu'$  & 1.00 $\pm$ 0.02 &  1.96 $\pm$ 0.09 &  5.71 $\pm$ 0.49 \\ 
\colrule
Haar-control\footnote{This row lists values measured numerically with $N_U=4000$ pairs of Haar-uniform generated unitaries. }& 1.00 $\pm$ 0.02 &  1.99 $\pm$ 0.07 &  5.79 $\pm$ 0.38 \\ 
Haar-ideal\footnote{This row lists theoretical values ($k!$) of a Haar-uniform distribution. } & 1  & 2  & 6 \\
\end{tabular}
\end{ruledtabular}
\end{table}

The numerically estimated k-frame potentials (with values listed in Table \ref{tab:frames}) 
for the Floquet samplers with $N=51$ are plotted in Figure \ref{fig:fp_fl}. 
Comparison of the k-frame potentials computed from discrete uniform distributions of two 
angles $b, \phi_0 $ or $b, \tau_0$ are higher than those computed from 
continuous uniform distributions of these angles.   
The sampler $C\mu\mu'$ which only varies perturbation strengths is one of the worst samplers 
(with the highest k-frame potentials).   Samplers 
have lower frame potentials if they are based on distributions of both angles and perturbation strengths. 
With more three or more varied parameters,  
the numerically estimated estimates for the k-frame potentials are similar 
to those of the Haar-uniform distribution, with the exception of 
the sampler labelled C$_wb\tau\mu\mu' $ which has lower strength 
perturbations compared to the similar sampler labelled  C$_1b\tau\mu\mu' $.  
The sampler C$_wb\tau\mu\mu' $, because it has lower perturbation strengths  
generates unitaries similar to $\hat U_{Tb}$, which has  both ergodic and integrable 
regions in phase space.  The lower value of the k-frame potentials for the sampler C$_1b\tau\mu\mu' $, in comparison to the sampler C$_wb\tau\mu\mu'$ 
suggests that a fully and strongly ergodic underlying operator facilitates generating a more uniform sampler. 

\subsection{Samplers from drifted systems}

\begin{figure}[htbp]\centering
\includegraphics[width=3.2truein]{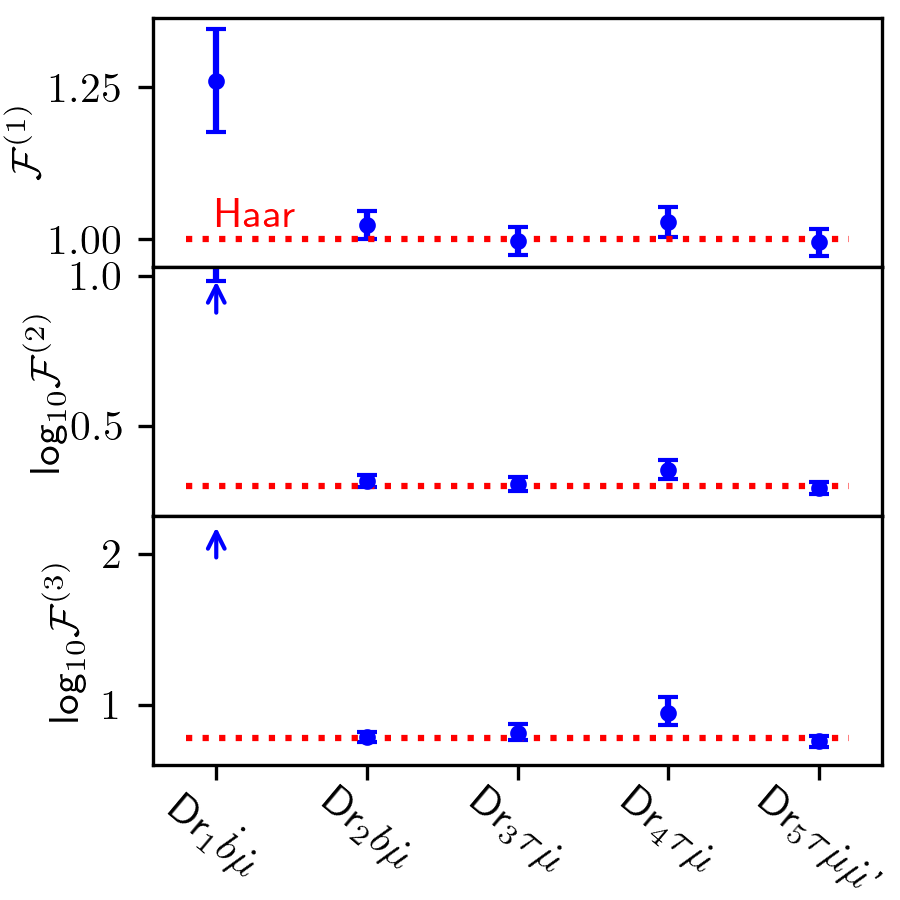}
\caption{Numerically estimated k-frame potentials of samples based on drifted Floquet systems. 
The unitary distributions were generated using sample distributions listed in Table \ref{tab:dr_samplers}. 
The parameters were chosen to be close to those of the operator $\hat U_\text{Drift}$ discussed
 in section \ref{sec:comp4}.
From top to bottom we show k-frame potentials 
with $k=1$, 2 and 3, respectively.  Frame potential values and the number
of samples used in estimating them are listed in Table \ref{tab:frames}. 
The vertical axes are on a log scale for $k=2,3$ but on a linear scale for $k=1$.  
The red dotted line shows the $k!$ values expected for a Haar-random sample. 
\label{fig:fp_var}}
\end{figure} 

\begin{table}[!htb]
\caption{Description of Drifted Samplers\footnote{The $k=1,2,3$ frame potentials calculated from 2000 generated pairs of unitaries are shown in Figure \ref{fig:fp_var} and listed in Table \ref{tab:frames}. } \label{tab:dr_samplers}}
\begin{ruledtabular}
\begin{tabular}{llll}
& Parameters of $\hat h$\\
\colrule
Fiducial\footnote{Parameters of the drifted Hamiltonian operator 
are those in the list denoted `fiducial' unless they are drawn from a set described in 
the samplers listed in the second half of the table.} & $N=51,a=3,b_0\footnote{Initial values of the parameters $b,\mu, \mu'$ are denoted $b_0, \mu_0, \mu_0'$.}=\phi_0=\tau_0= 0$ \\
& $\epsilon=3.01,\mu_0=1,\mu'_0=0.5, n_\text{periods}=3$ \\
& $\dot b = 0, \dot \mu = 0.1, \dot \mu' = 0.15 $\footnote{Drift rates in  parameters $b,\mu, \mu'$  are denoted $\dot b, \dot \mu$ and $\dot \mu'$.  Time derivatives of parameters with respect to $\tau$ and are zero unless specified. } \\
\colrule
Dr$_1 \dot b \dot \mu$ & $\dot b\in[0,0.1), \dot \mu \in [0.1,0.6)$ \\  
Dr$_2b \dot \mu $        & $b\in[0,2\pi), \dot \mu \in [0.1,0.6)$ \\ 
Dr$_3 \tau \dot \mu$    &  $\tau_0\in[0,2\pi), \dot \mu \in [0.1,0.6)$\\ 
Dr$_4 \tau \dot \mu $   & $\dot b =0.1$,  $\tau_0\in[0,2\pi), \dot \mu \in [0.1,0.6)$\\ 
Dr$_5 \tau \dot \mu \dot \mu'$ & $\tau_0\in[0,2\pi), \dot \mu \in [0.1,0.6), \dot \mu' \in [0.15,0.65)$ \\ %
\end{tabular}
\end{ruledtabular}
\end{table}

In Table \ref{tab:dr_samplers} we list samplers constructed from drifting systems. 
These contain more free parameters to describe than  the Floquet samplers as 
in addition to the same parameters, we can also choose distributions of drift rates 
for each parameter.  We find that 
 distributions of only 2 or 3  parameters is sufficient to achieve an approximate 3-design.   
 To achieve the same level of approximation 
to a Haar-random  distribution, 
the number of parameters that we need to choose with  
distributions for the drifting samplers seems to be 1 fewer than for the Floquet samplers.   
We found in section \ref{sec:comp4} that the drifted propagator was closer in many respects 
to a random matrix than the Floquet propagators, and this could have aided in giving increased
sensitivity to randomly chosen parameters in the drifted systems. 
When systems are slowly varied, there would be numerous close approaches (also called
avoided crossings) in energy levels \citep{Wilkinson_1989,Zakrzewski_2023} 
of the effective Floquet Hamiltonian, 
which could have effectively enhanced the sensitivity of the drifted samplers to their randomly selected 
parameters. 


\subsection{How good are the approximate k-designs?}

Our best samplers have k-frame potentials similar to those of a Haar-random distribution and 
so can be called frame potential $\varepsilon$-approximate k-designs. 

We adopt definition 43 by \citet{Mele_2024}. 
\begin{definition} 
The distribution $\nu$ of unitary operators is 
a {\bf frame potential $\varepsilon$-approximate k-design} if and only if
\begin{align}
\sqrt{ { \cal F}_\nu ^{(k)} - { \cal F}_{\mu_\text{Haar}}^{(k)}}  \le \varepsilon  \label{eqn:varepsilon}
\end{align}
The argument inside the square root is always non-negative. 
\end{definition}

In Table \ref{tab:frames}, the better k-potential frame designs (such as sampler C$_1b\tau \mu \mu'$ 
and Dr$_4\tau \dot \mu \dot \mu'$) 
 have  difference  $|{ \cal F}_\nu ^{(k)} - { \cal F}_{\mu_\text{Haar}}^{(k)} | < \sigma_k$ 
where $\sigma_k$ is the standard error estimated from
the variance of values computed in the sequence of traces (equation \ref{eqn:std_err}). 
We roughly estimate $\varepsilon$ from the square root of the standard error for the 
C$_1b\tau \mu \mu'$ sampler, which is one of the best in the Floquet group, 
and Dr$_4\tau \dot \mu \dot \mu'$ which is the best sampler in the drifted group, 
and list these values in Table \ref{tab:varepsilon} for $k=1,2$ and 3. 
The $\varepsilon$ values are similar to those given by the standard errors estimated for the Haar control 
sample generated in the same dimension $N$ and with the same number of samples as for the Floquet sampler.  

\begin{table}[htbp]\centering
\caption{$\varepsilon$ estimates\footnote{$\varepsilon$ is estimated from the square root of the standard errors for  two samplers with properties given in Tables \ref{tab:fl_samplers} and \ref{tab:dr_samplers} and
numerical estimates listed in Table \ref{tab:frames}.  Their size is primarily dependent 
upon the number of samples generated.  }  for frame potential $\varepsilon$-approximate k-designs}
\label{tab:varepsilon}
\begin{ruledtabular}
\begin{tabular}{llll}
Sampler                & $k=1$ & $k=2$ & $k=3$\\
\colrule
C$_1b\tau\mu\mu'$              & 0.13 & 0.25 & 0.55 \\
Dr$_5\tau\dot \mu\dot \mu'$ & 0.15 & 0.31 & 0.70 \\
\end{tabular}
\end{ruledtabular}
\end{table}

Figure \ref{fig:cufps} shows the cumulative distributions for operators 
drawn from our best two samplers, C$_1b\tau\mu\mu'$   and  Dr$_5\tau\dot \mu\dot \mu'$, 
and illustrates that the distribution is quite close to the exponential distribution expected for 
a Haar-random sample. 

\begin{figure}[htbp]\centering
\includegraphics[width=3.3truein,trim = 0 0 0 0,clip]{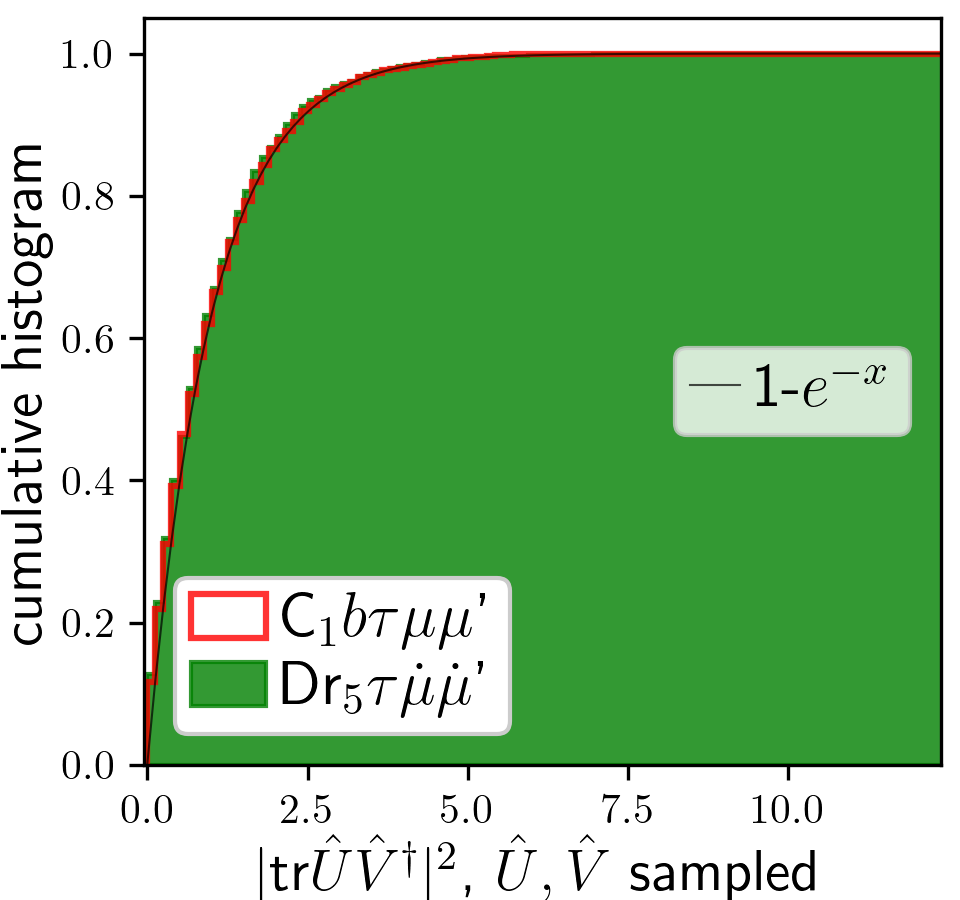}
\caption{
We show the cumulative distribution of $|\tr U V^\dagger|^2$ values for $\hat U, \hat V$ 
drawn randomly from our two best samplers, C$_1b\tau\mu\mu'$ and  Dr$_5\tau\dot \mu\dot \mu'$.
The thin black  line shows the exponential distribution values 
expected for a Haar-random sample. 
\label{fig:cufps}}
\end{figure} 

\subsection{Parameters that affect the unitary propagator by conjugation} 
\label{sec:conj}

Our best samplers are based on distributions of angles and perturbation strengths.
We discuss the role of these two types of parameters separately. 

Conjugation of an operator $\hat A$ by operator unitary $\hat U$ is the map
$ \hat A \to \hat U \hat A \hat U^\dagger $.  
Because $\hat  U$ is unitary, $\hat A$ and $\hat U \hat A \hat U^\dagger$ share the same 
eigenvalue spectrum.  It's as if $\hat U$ performs a rotation of $\hat A$ in U$(N)$.  

As shown in appendix \ref{ap:th1}, a shift of the parameters $b$ and $\phi_0$ by 
factors of $\frac{2 \pi}{N}$ is equivalent to a conjugation 
of both Hamiltonian and resulting Floquet propagator by factors of unitary operations 
$\hat X$, $\hat Z$,  called  clock and shift operators
 (defined in equations \ref{eqn:XY}) which are a set of generalized Pauli matrices.  
The set of unitary Floquet operators generated from 
$b, \phi_0 \in \{  \frac{2 \pi j}{N}: j \in {\mathbb Z}_N \}$ is equal to the set $S_{HW}$ generated
from the Floquet propagator with $b=0, \phi_0=0$ and 
defined via conjugation defined in appendix \ref{ap:twirlset} in equation \ref{eqn:S_HW}. 
In appendix \ref{ap:twirlset} we estimate the k-frame potentials of 
a uniform distribution on this set, finding that   
 ${\cal F}^{(k)} \gtrsim N^{2(k-1)}$.  For $k>1$ these are much larger than those of 
Haar-random distribution.  The large estimated values are consistent with the large k-frame 
potentials we measured for the sampler labelled $Db\phi$ in Table \ref{tab:frames}. 
Even though the conjugation effectively rotates the underlying propagator in about $N^2$ directions, 
 the resulting distribution of propagators 
consists only of unitary operators that have the same eigenvalues and this means
the  resulting distribution of unitaries differs significantly from a Haar-random distribution. 


The Floquet theorem (e.g., \citep{Rudner_2020}) states that a state vector that evolves via
a time-periodic Hamiltonian $H(t+ T) = H(t)$ with period $T$ is a sum of states in
the form $\ket{\psi_a(t)} = e^{i\lambda_a t} \ket{\phi_a(t)}$, where $ \lambda_a$ is the quasi-energy,
$\ket{\phi_a(t + T)} = \ket{\phi_a(t)}$, and the subscript $a$ labels the states. 
If the phase of perturbation $\tau_0$ in the perturbation $\hat h_1$ (equation \ref{eqn:hath1}) is varied, 
the quasi energies of the resulting Floquet propagator (computed
for a duration of the perturbation period $2 \pi$) do not vary.     
That means shifting  the perturbation phase $\tau_0$ in the Hamiltonian of our Floquet system is also a conjugation.   Describing the dependence of the Floquet propagator on the phase of
perturbation  as $\hat U(\tau_0)$, 
Floquet's theorem implies that 
\begin{align}
\text{ for any }  \alpha \in [0, 2\pi), \exists \hat V(\alpha)  \in \text{U}(N) 
\text{ such that } \nonumber \\
\hat U_T(\tau_0 = \alpha) =  \hat V(\alpha)   \hat U_T(\tau_0 = 0)  \hat V^\dagger(\alpha) .
\end{align} 
The set $\{ \hat U_T( \tau_0 = \alpha) : \alpha \in [0, 2 \pi)\}$ is a loop in the space U$(N)$.
The set of unitary operators $\hat U_T(\tau_0)$ with different 
values of the phases  all have the same eigenvalues (quasi-energies) 
due to the Floquet theorem.

The operations shifting $b$ or $\phi_0$ by $2 \pi/N$, or shifting $\tau_0$ by any phase,  
correspond to conjugation of the original unitary propagator by a unitary operation.
Hence samplers based on these three types of distributions 
would only generate unitaries that have the same eigenvalues as the original. 
That implies that samples constructed from such discrete distributions of 
$b, \phi_0$ or/and continuous distributions of $\tau_0$ cannot be Haar-randomly distributed as 
they cannot generate unitaries with different eigenvalues. 
This in part explains why the sampler $Db\phi$ (choosing $b, \phi_0$ from a discrete set) 
has a larger frame potential than the sampler $Cb\phi$ where $b, \phi_0$ are chosen 
from uniform distributions in $[0, 2\pi)$. 

In contrast, if we choose a distribution of perturbation parameters $\mu, \mu'$  
or use distributions of for $b, \phi_0$ with values not separated by factors of $2\pi/N$, 
then variations in the resulting unitary 
eigenvalue distributions is likely.   If we choose distributions of parameters that both rotate 
the unitaries and change their eigenvalues, the resulting sampler would be closer to uniform. 
This in part explains why the Floquet samplers generated from both distributions of 
parameters that affect the strength of perturbation ($\mu, \mu'$) and continuous distributions of angles 
have k-frame potentials closer to those of a Haar-random distribution. 

The samplers D$b\tau$C$\mu \mu'$ and C$_1b\tau\mu\mu'$ samplers are similar 
except the angles are chosen from multiples of $2\pi/N$ in the D$b\tau$C$\mu \mu'$ sampler 
and  from a uniform distribution 
in $[0,2\pi)$ for the C$_1b\tau\mu\mu'$ sampler.   
These samplers are both consistent with a Haar-random distribution. 
Sensitivity to $\mu \mu'$ which affects the propagator quasi-energies along with 
 conjugation due to shifts in angles $b, \tau_0$ is sufficient to give a near Haar uniform distribution independent of whether the angles are chosen from multiples of $2\pi/N$ 
 or from uniform distributions in $[0, 2 \pi)$. 
 
\subsection{Parameter sensitivity of Floquet evolution}
\label{sec:interaction}

In this section we heuristically illustrate with a Floquet-Magnus expansion 
why a variation of a few control parameters could allow the Floquet propagator to 
cover much of U$(N)$. 

We consider evolution via Schr\"odinger's equation with Hamiltonian operator 
$\hat h_0 + \hat h_1(t)$, a sum of a time-independent operator 
and a time-dependent perturbation.  
In the interaction picture,  state vectors are defined with a time-dependent
unitary transformation
\begin{align}
\ket{\psi_I(t)} = e^{\frac{i}{\hbar} \hat h_0 t} \ket{\psi(0)}.
\end{align}
The evolution of the perturbation operator in the interaction picture 
\begin{align}
\hat h_{1,I} (t) & = e^{\frac{i}{\hbar} \hat h_0 t} \hat h_1(t)  e^{-\frac{i}{\hbar} \hat h_0 t}.\\
&= \sum_{j=0}^\infty \left( \frac{it}{\hbar} \right)^j\frac{1}{j!}  [(\hat h_0)^{(j)}, \hat h_1 (t)] 
.\label{eqn:h1_I}
\end{align}
We have used the Campbell identity and
shorthand notation for repeated commutators 
(e.g., $ [(\hat h_0)^{(3)}, \hat h_1] = [\hat h_0,[ \hat h_0, [\hat h_0, \hat h_1]]]$ 
and $[(\hat h_0)^{(0)}, \hat h_1]  = \hat h_1$. 
With $\hat h_1(t)$ periodic with period $T$ and 
in the interaction picture, the Floquet propagator over period $T$ 
\begin{align}
\hat U_{T,I} = {\cal T} e^{-\frac{i}{\hbar} \int_0^T\hat h_{1,I}(t) dt } .
\end{align}

We set 
\begin{align}
\hat U_{T,I}  = e^{- \frac{iT}{\hbar}  \hat h_{F,I} }
\end{align}
where $\hat h_{F,I}$ is the time-independent effective or
average Hamiltonian.
The Floquet-Magnus expansion 
 \citep{Magnus_1954,Chu_1985,Blanes_2009,Brinkmann_2016,Kuwahara_2016}
 gives the effective Hamiltonian $\hat  h_{F,I}$ as 
 sum of a series of operators 
\begin{align}
 \frac{T}{\hbar} \hat h_{F,I} &= \sum_{n=1}^\infty \hat \Omega_n .  \label{eqn:hFI_def}
\end{align}
The first two terms in the expansion are 
\begin{align}
\hat \Omega_1 &=\frac{1}{\hbar} \int_0^{T} \hat h_{1,I}(t) dt\label{eqn:Om1_a} \\
\hat \Omega_2 &=\frac{1}{\hbar^2 2i} \int_0^{T}  dt_1 \int_0^{t_1} dt_2 \ 
 [\hat h_{1,I}(t_1) ,\hat h_{1,I}(t_2) ].  \label{eqn:Om2}
\end{align}
The $n$-th operator $\hat \Omega_n$ in the Magnus expansion depends upon 
 commutators that have $n$ factors of the perturbation operator $\hat h_{1,I}$. 
 
For the perturbation we used for the Harper model, (equation \ref{eqn:hath1}) 
\begin{align}
\hat h_1(\tau) &= -\mu \cos (\hat \phi - \phi_0 - \tau + \tau_0) -\mu'  \cos (\hat \phi - \phi_0 + \tau - \tau_0) \nonumber \\
& = -(\mu + \mu') \cos (\hat \phi -\phi_0) \cos (\tau - \tau_0)  \nonumber \\
& \ \  \ + (\mu' - \mu) \sin (\hat \phi -  \phi_0) \sin (\tau - \tau_0) .
          \label{eqn:h1_expand}
\end{align}
This  is a sum of parameters times operators and functions of time that are periodic 
\begin{align}
\hat h_1(\tau) = \sum_{j=1}^2 \mu_j \hat a_j  f_j(\tau)  \label{eqn:h1form}
\end{align}
with 
\begin{align}
\mu_1 &= -(\mu + \mu' ), \mu_2 = \mu' - \mu \nonumber \\
\hat a_1 &= \cos(\hat \phi - \phi_0), \hat a_2 = \sin (\hat \phi - \phi_0) \nonumber \\
f_1(\tau) &= \cos(\tau -\tau_0), f_2(\tau) = \sin(\tau -\tau_0). \label{eqn:our_exp}
\end{align}
 
We adopt a similar form to equation \ref{eqn:h1form} for a more general perturbation that has 
$m$ free parameters, each denoted $\mu_j$ with index $j \in \{1, \dots, m\}$ 
or equivalently  an $m$ length vector ${\boldsymbol \mu} \in {\mathbb R}^m$,  and with each term in
the expansion of $\hat h_1(t)$ separable w.r.t. to time; 
\begin{align}
\hat h_1(t) =  \sum_{j=1}^m \mu_j \hat a_j f_j(t) . \label{eqn:aexp}
\end{align}
Here the set $\{ \hat a_j : j\in  \{1, \dots, m\}$ is a set of Hermitian operators 
and the set of functions $\{ f_j(t)  : j\in  \{1, \dots, m\}$ are 
 periodic with period $T$. 
Inserting this form of time dependent perturbation into the relation for the operators in the Magnus expansion 
and using the Campbell identity (equation \ref{eqn:h1_I})
\begin{align}
\hat \Omega_1 & =\frac{1}{\hbar} \sum_{j=1}^m  \mu_j \int_0^{T} dt\  e^{ \frac{i}{\hbar} \hat h_0t} \hat a_j e^{- \frac{i}{\hbar} \hat h_0t}   f_j(t) \nonumber \\
& = \frac{1}{\hbar} \sum_{j=1}^m  \mu_j \sum_{k=0}^\infty [ (\hat h_0)^{(k)}, \hat a_j] 
\int_0^T dt \ \left(\frac{it}{\hbar} \right)^k \frac{f_j(t) }{k!} \\
\hat \Omega_2 & = \frac{1}{\hbar^2 2i}\sum_{j,j'=0}^m  \mu_j \mu_{j'} \sum_{k,k'=0}^\infty 
\Big[ [ (\hat h_0)^{(k)}, \hat a_j] ,[ (\hat h_0)^{(k')}, \hat a_{j'}] \Big] \nonumber \\
& \times \int_0^{T} dt_1 \int_0^{t_1} \left(\frac{it}{\hbar} \right)^{k+k'} dt_2 \frac{f_k(t_1) f_{k'}(t_2)}{k! (k')!}  .
\label{eqn:Omms}
\end{align}
Focusing on their dependence on the control parameters, we write the first two operators  in the Magnus expansion as 
\begin{align}
\hat \Omega_1 & =  \sum_{j=1}^m \mu_j \hat b_j \nonumber \\
\hat \Omega_2 & = \sum_{j=1}^m \sum_{k=j}^m \mu_j \mu_k \hat b_{jk}  \label{eqn:Omms2}
\end{align}
where  the operators $\hat b_j, \hat b_{jk}$ are given by factors and commutators in equations \ref{eqn:Omms}. 

The first operator in the Magnus expansion $\hat \Omega_1$ is a sum of $m$ operators, each constructed 
from commutators of $\hat h_0$ and another operator (one of $\hat a_j$) and 
each proportional to a different parameter ($\mu_j$). 
For the perturbation we use for our Floquet samplers, the two 
operators $\hat a_1, \hat a_2$ in the expansion of $\hat h_1(t)$ in equation \ref{eqn:our_exp} 
commute.  
However, their commutators with $\hat h_0$ would not necessarily commute.  
Consequently the individual operators in the sum of $\hat \Omega_1$, denoted $\mu_1 \hat b_1, \mu_2 \hat b_2$ in equation \ref{eqn:Omms2}, could be linearly independent. 
They  depend on the different control parameters so a distribution of 
control parameters would cover a 2 dimension space. If there are $m$ control parameters then 
by varying them, $\Omega_1$ could cover 
an $m$ dimensional space.  
Here we mean linear independence using the Hilbert-Schmidt inner product on the space of linear operators. 
If the individual operators $\hat a_j$ in the description of the 
perturbation don't commute with each other, the 
resulting effective Floquet Hamiltonian would probably be more sensitive to its parameters. 

The second operator $\hat \Omega_2$  in the Magnus expansion series 
is a sum of $ m(m+1)/2$ operators, each controlled by a 
product of two parameters in the parameter list $\boldsymbol \mu$.   
Because each term depends upon a product of two parameters in the parameter list, each term 
is controlled differently.   The operator $\hat b_{ij}$ is  constructed from 
products of pairs of perturbation operators $\hat a_i, \hat a_j$,  so the set of all of them
 $\{ \hat b_{ij}: \text{ for } i,j \in {\mathbb Z}_m \}$  could be a linearly independent set.  
As each is sensitive to a different polynomial in the control parameters $\boldsymbol \mu$, 
a distribution of control parameters would give an $m$-dimensional surface that 
could lie in an $m(m+1)/2$ dimensional space.   
 The operators $\hat b_{ij}$ depend upon commutators of $\hat a_i, \hat a_j$ and $\hat h_0$
whereas the operators $\hat b_j$ depend upon commutators of $\hat a_j$ and $\hat h_0$.
The set of operators $\{ \hat b_{ij}, \hat b_k: \text{ for }  i,j,k \in {\mathbb Z}_m \}$ is also likely to be linearly independent set.  

With two free parameters, 
 the surface in the $N^2$ dimensional complex space of linear operators
  explored by $\Omega_1 + \Omega_2$ 
 is two-dimensional.  However, it is not confined to a two dimensional subspace because it depends on the 5 
 operators  $\hat b_1, \hat b_2, \hat b_{11}, \hat b_{22}, \hat b_{12}$. It would ruffle and extend  
 out of a 2-dimensional subspace as $\mu_1, \mu_2$ are increased, as shown in Figure \ref{fig:illust2}. 
 
\begin{figure}[htbp]\centering 
\includegraphics[height=2truein]{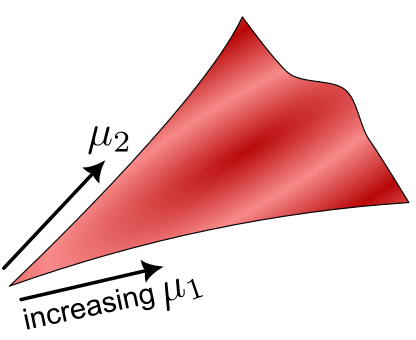}
\caption{ 
A linear operator is dependent upon two parameters $\mu_1, \mu_2$.  With a distribution 
of parameter values, the unitarity operators cover a 2-dimensional surface.  However, 
the surface can become ruffled with increasing $\mu_1, \mu_2$ and would not be confined to 
a two dimensional subspace within the $N^2$ dimensional complex space of linear operators.  
\label{fig:illust2}}
\end{figure}

 

Subsequent operators $\hat \Omega_k$ with $k>2$ would depend on 
a sum of ${\cal O}(m^k)$ separate operators, each controlled 
by a product of $k$ parameters in the vector of control parameters $\boldsymbol \mu$. 
The resulting set of  operators could span a space,  in the space of operators, by varying the parameters,  that lies in an ${\cal O}(m^2)$ dimensional space, even though the surface given by a distribution 
in $\boldsymbol \mu$ would be $m$-dimensional.  

A set of only two unitary operators can be a universal set in the sense that any unitary can be approximated by a product of a number of instances of each operator \citep{Sawicki_2017}. In other words, two unitary operators can be sufficient to generate the entire group U$(N)$ \citep{Sawicki_2017}. 
Multiple powers and  consecutive factors of only two operators 
 can produce linearly independent operators.   As each operator in the Magnus expansion contains
 different numbers of perturbation operators, the dimension of possible operators that could 
 be constructed from a sum dependent on different parameter values could be large. 

With a Magnus expansion truncated at order $K$, what value of $K$ would give approximately 
$N^2$ linearly independent operators?  
We set $m^K \sim N^2 $ to find 
\begin{align}
K \sim \frac{2 \ln N}{\ln m}.
\end{align}
The logarithmic dependence on the dimension of the quantum space $N$ 
 implies that only a few significant terms 
in the Magnus expansion are required to give an effective Floquet Hamiltonian $\hat h_F$ 
that can span the space of linear operators using its dependence on its control parameters. 
The factors of $1/\hbar$ in equation \ref{eqn:Omms} can cause each term in the effective Floquet Hamiltonian to be large enough that the propagator could experience strong phase wrapping in its quasi-energies as the perturbation parameters vary. 
As a consequence, a distribution of the control parameters would allow the resulting 
distribution of Floquet propagators to cover U$(N)$.   

The dimension of a space reached by varying control parameters
of a Floquet propagator differs from the dimension of a Krylov space. 
With a time independent operator $\hat V$, 
the evolution of this operator under evolution by Hamiltonian $\hat h_0$ is given equation \ref{eqn:h1_I}, 
\begin{align}
\hat V (t) & = e^{- \frac{i}{\hbar} \hat h_0 t} \hat V e^{ \frac{i}{\hbar} \hat h_0 t} \nonumber \\
& = 
\sum_{j=0}^\infty \left( \frac{it}{\hbar} \right)^j\frac{1}{j!}  [(\hat h_0)^{(j)}, \hat V ] .
\label{eqn:h1constant}
\end{align}
As a function of time $\hat V(t)$ explores 
a space, known as a Krylov space, that is spanned by the set of operators $[(\hat h_0)^{(j)}, \hat V]$ 
for all non-negative integer $j$  \citep{Krylov_1931,Rabinovici_2021,Nizami_2023}.
The part of the operator algebra that is absent from the Krylov
space belongs to the space subtended by the projectors created from eigenstates of $\hat h_0$. 
The first term in the Magnus expansion $\hat \Omega_1$ of our Floquet propagator contains $m$ operators, 
one generated from each perturbation term, 
but because of the averaging they are not individually 
dependent upon time dependent factors of nested commutators with $\hat h_0$. 
While both of these settings contain nested commutators, they differ. 
In the evolution of a single perturbation with time, (as in equation \ref{eqn:h1constant}) 
each commutator depends on a different power of time.  In contrast 
the Floquet system is dependent on perturbation control parameters rather than time, and 
the operators in the Magnus expansion are averaged. 
The complexity of the Floquet system depends on the independence of the averaged commutators
with respect to their control parameters.  

To ensure that the chaotic region covered phase space, we used strong perturbation strengths $\mu, \mu'$ 
in the construction of our samplers. The sizes of terms in the Magnus expansion 
would all be similar in size.   While this aids in ensuring that the resulting effective Floquet 
Hamiltonian is sensitive to every independent operator arising 
from commutators in each term of the expansion, the expansion itself would not converge. 
%
Whether or not the Magnus expansion converges may be related to 
the ergodicity of the system \citep{DAlessio_2016}. Even if the expansion diverges, if 
the Magnus expansion is truncated, it could be a good approximation \citep{Kuwahara_2016}. 
In this case the dimension spanned by the effective Floquet Hamiltonian would depend upon 
an optimal truncation index \citep{Abanin_2017}. 

Our study of the sensitivity of a Floquet system to system parameters is 
similar to the study of the Loschmidt echo operator which measures the divergence as a function of time 
of two propagators derived from slightly different Hamiltonians \citep{Brown_2018}.  
Divergence of two nearby operators as a function of time depends on 
$|\tr \left( e^{-i Ht} e^{i( H+\Delta H) t }\right)| $ 
where $\Delta H$ is a perturbation to $H$, whereas sensitivity to a control parameter $\mu$  would be 
described by 
$|\tr \left( e^{-iH(\mu)} e^{i H(\mu + \Delta  \mu)} \right)| $, where $\Delta \mu$ represents a change
in the control parameter $\mu$. 

The maximum Lyapunov exponent for the periodically perturbed pendulum depends on the strength 
 of the perturbation (e.g., \citep{Chirikov_1979,Shev_2024}).   With two terms, like  
 those of the classical perturbation $H_1$ in equation \ref{eqn:H1} that depend on $\mu, \mu'$, 
 the Lyapunov exponent  is sensitive to both parameters \citep{Shev_2024}.    
 In contrast the parameters $b, \phi_0$ only shift the time independent portion of the Hamiltonian. 
 The parameter $\tau_0$ only shifts the phase of the perturbation.  The Lyapunov exponent 
 in the classical system is not sensitive to these three angles. 
 
Consider a distribution of particle orbits  
that are chosen from a distribution of 
possible values for perturbation strength parameters $\mu, \mu'$ but are all initially at 
the same location in phase space at the same initial time. 
After a few Lyapunov times, each particle would be at a random location 
within its orbit.  If the ergodic regions for all $\mu, \mu'$ have the same volume 
in phase space, then at later times, 
the particle distribution would uniformly cover phase space.  
In the associated quantum system, we consider a transition probability 
between two localized states centered at two specific positions.  
If a classical orbit goes from one to the other location, then the transition probability of the propagator 
between the two localized states would be high.   
A relation between classical trajectories and quantum transition
probabilities implies that the probability of transitions between two localized states 
would strongly depend upon the perturbation strength parameters. 
A distribution of perturbation strength parameters would give a distribution of transition probabilities 
and equivalently a distribution of unitary propagators. 
With this idea in mind, we support the connection between a Lyapunov 
exponent in a classical system and the sensitivity of the quantum system to variations 
in its parameters, analogous to the proposals by \citet{Scott_2006,Roberts_2017}. 
The sensitivity of the quantum system to its control parameters makes it possible 
 generate a nearly uniform quantum sampler from distributions of its control parameters. 


\subsection{Sensitivity to dimension $N$} 
\label{sec:N}
 
\begin{figure}[htbp]\centering
\includegraphics[width=3.3truein,trim = 8 0 0 0,clip]{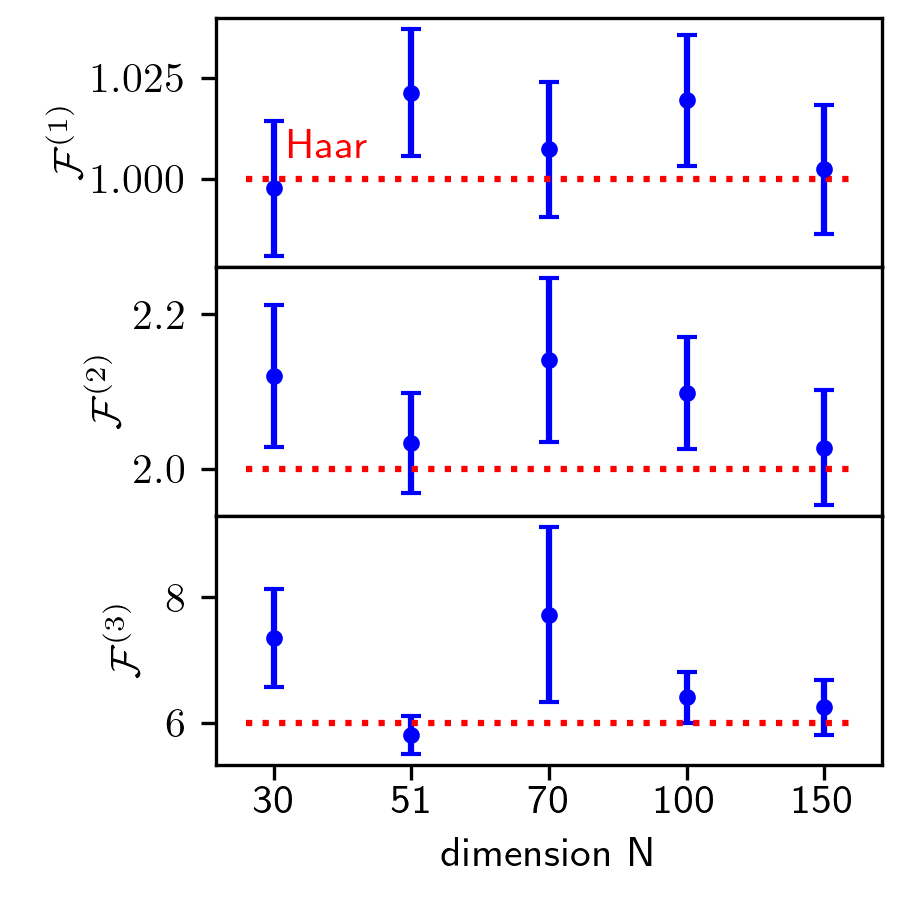}
\caption{Numerically estimated k-frame potentials of Floquet samplers 
generated by constructing propagators from distributions of parameters that describe them.  
This figure is similar to Figure \ref{fig:fp_fl} except we show frame potentials for 
the samplers C$_1b\tau\mu\mu' $, C$_2b\tau\mu\mu' $, C$_3b\tau\mu\mu'$, C$_4b\tau\mu\mu'$
and C$_5b\tau\mu\mu'$  which are the same except for their dimension $N$. 
The numerically estimated $k=1,2,3$ frame potentials are insensitive to $N$. 
\label{fig:fp_N}}
\end{figure} 
 
The Floquet samplers labelled C$_1b\tau\mu\mu' $, C$_2b\tau\mu\mu' $, 
C$_3b\tau\mu\mu'$, C$_4b\tau\mu\mu'$
and C$_5b\tau\mu\mu'$ in Table \ref{tab:fl_samplers} are similar except their dimensions are 
$N=$51, 30, 70, 100, and 150,  respectively.   The frame potentials for these 5 samplers are plotted in Figure \ref{fig:fp_N}. 
The similarity between their numerically estimated k-frame 
potentials suggests that the quality of this quantum sampler is not dependent on dimension $N$. 

The Floquet samplers shown in Figure \ref{fig:fp_N} are constructed with uniform distributions 
of two angles $b, \tau_0$ which approximately generate a set of unitaries via conjugation, as discussed
in section \ref{sec:conj}. 
In appendix \ref{ap:twirlset} we considered the set of 
unitaries constructed by conjugating an operator similar to a random matrix 
with the set of displacement operators.  We found that 
 the k-frame potential from a uniform distribution of unitaries in this set approximately 
 has upper and lower limits that are powers of dimension $N$. 
The largest dimension sampler we computed for Figure \ref{fig:fp_N} has $N=150$. 
If the k=1,2,3 frame potentials at dimensions $N<150$ 
are not strongly dependent on $N$,  as shown for a specific sampler in Figure \ref{fig:fp_N},  
could they increase at $N$ above 150?  
If the k-frame potentials depend on a power of $N$, then this is unlikely.  In that case 
a bad quantum Floquet sampler would have k-frame potentials that rapidly increase
with dimension $N$ whereas one that is close to a Haar random distribution would be insensitive to $N$. 

For the Harper model, the large $N$ limit is the semi-classical limit \citep{Quillen_2025}, so at large $N$ the quantum system should increasingly approximate the classical one.  In this limit,  the quantum Floquet sampler is related to a distribution of ergodic maps integrated over 1 Floquet period.  
Taking into account distributions of the two parameters that set the maximum Lyapunov exponent and the two angles that shift in phase space, we expect that the 1 period distribution of orbits would be uniformly distributed in phase space.  Consequently we would not expect the frame potentials of our best sampler to  increase past $N>150$. 
We suspect that our best Floquet samplers have frame potentials that are independent of dimension $N$, 
but we do not as yet have an analytical way to show that this is true. 

Our quantum sampler differs in many respects to random quantum circuits on arrays or lattices of qubit or qudit systems.   With $n$ qubits, the number of quantum states $N=2^n$. 
A local random quantum circuits acting on $n$ qubits in a linear chain composed of ${\cal O}(t^{10}n^2)$ many nearest neighbor two-qubit gates, each chosen from a Haar random distribution in $U(4)$, form an approximate unitary t-design \citep{Brandao_2016}.  In terms of $N$ this is equivalent to choosing ${\cal O} ( t^{10} (\log_2 N)^2)$ randomly chosen two-qubit gates.   The number of gates is proportional to the number of free parameters that must be chosen to describe an individual instance of the circuit.  This implies that an  approximation t-design based  on a local random quantum circuit requires choosing a number of free parameters that increases with 
the number of quantum states.   Low depth random local circuits can  
form approximate unitary designs \citep{Schuster_2025}, nevertheless the number of free parameters 
that must be chosen to specify a single sampled unitary operator depends on a function of $ \log N$. 
By introducing non-local gates that operate on two qubits that are not nearest neighbors on a particular lattice, 
the required number of gates to achieve a t-design obeys different scaling with $\log N$ 
but still increases with $N$ \citep{Harrow_2023}. 
To specify a particular unitary, 
the quantum random circuits approximate t-designs require choices of many more free parameters than our Floquet samplers.  
The number of free parameters for the quantum random circuit approximate t-designs increases with the dimension of the quantum
system, whereas we lack numerical evidence for such scaling with our best Floquet sampler. 

The method for constructing our Floquet samplers differs from local random quantum circuits.  
Our Floquet samplers have non-local interactions in time independent potential (in $\hat h_0$) 
and in the perturbations $\hat h_1(\tau)$.  To compute the Floquet propagator we require ${\cal O} (N)$ 
Trotterization steps.  If we viewed Trotterization steps as equivalent to gates, then perhaps our sampler would be described with $N$ dependent scaling.    It may be that non-local Floquet quantum samplers are more convenient or effective on certain types of quantum systems compared to random quantum circuits.  Future studies could work on developing quantitative predictions for the quality of Floquet based t-designs so that they can be better compared to t-designs constructed from quantum random circuits. 
 
\section{Summary and Discussion}

We have explored using an ergodic finite dimensional Floquet system to aid in constructing 
pseudorandom unitaries that approximate a distribution 
that is uniform with respect to the Haar measure. 
For the periodically perturbed Harper model,
we identified a regime,  
 that with a strong perturbation and with perturbation frequency a few times slower
than the libration frequency, that gives a wide ergodic 
region that covers phase space and lacks resonant islands. 
We compared the properties of 4 unitaries, two corresponding to 
periodic Floquet driven systems, one that is drawn from a Haar-random distribution 
and one that is based on a periodic system but is slowly drifting.  
The drifted and more strongly perturbed Floquet system exhibit many properties 
similar to that of the randomly chosen matrix.  We looked at 
inverse participation ratios as a way to measure anti-concentration, the tendency  
of Husimi distributions of the propagator eigenstates to cover phase space,
 the distribution of the eigenvalue spacings, 
and the distribution of transition probabilities between states in two different bases. 

We constructed distributions of unitaries, which we called samplers, 
by adopting distributions for unitary propagator control parameters.  
In this study we used the following recipe for constructing a quantum sampler:  
\begin{itemize}
\item We identified a chaotic system that has a broad ergodic region lacking resonant islands.  This can be done by studying a matching classical system or one could instead measure transition probabilities or inverse participation ratios to check that the operator has properties similar to one drawn from a random matrix ensemble.
\item
We chose a chaotic system that is sensitive to control parameters describing it. 
For our system, the Lyapunov exponents in the associated classical 
system depend on the perturbation strengths of periodic perturbations. 
\item 
Choosing parameters that are likely to both vary the eigenvalue distribution and rotate 
the unitary operator,   
we created a distribution of unitaries using distributions of these control parameters. 
\end{itemize}

We constructed samplers both from periodic (Floquet) systems and 
similar systems where parameters are allowed to slowly drift. 
For the Floquet samplers, 
distributions in 4 parameters are sufficient to generate a 
sampler that mimics a Haar-random distribution whereas for
the drifted system 3 parameters was sufficient. 
The best samplers  are made from distributions of angles and perturbation strengths. 
A distribution in angles tends to cause a distribution of rotations in U$(N)$ via conjugation 
(discussed in section \ref{sec:conj} and appendix \ref{ap:th1}) whereas the perturbation strengths 
are directly related to the Lyapunov exponent in the associated classical system. 
We characterized the quality of our samplers by computing k-frame potentials 
from numerically generated series of randomly generated unitary operators, and 
comparing the k-frame potentials to those predicted for a Haar-random distribution. 
Our best samplers have k-frame potentials similar to Haar-random unitaries and 
so are called frame potential $\varepsilon$-approximate k-designs. 

Remarkably, the number of randomly selected control parameters 
in our samplers is only a few and   
 is vastly exceeded by the $N^2$ dimension of the space of unitary 
operators U$(N)$.    (In most of our numerical experiments $N$ is between 30 and 70). 
In section \ref{sec:interaction} we discussed sensitivity of the effective Floquet 
Hamiltonian to perturbation strength parameters.     
In the context of the Floquet-Magnus
expansion in the interaction picture, the number of linearly independent terms in the expansion suggests 
that a time dependent perturbation composed of a few parameter dependent non-commuting operators
could cover U$(N)$.  We support the proposal
that sensitivity to initial conditions in a classical chaotic system manifests as sensitivity 
of an operator to its control parameters in the quantum setting (following \citep{Scott_2006,Roberts_2017}).
We have demonstrated that we can employ this sensitivity to generate quantum samplers.  


For our Floquet samplers, 
 a single sample is produced within a single perturbation period. For the drifted samplers  a  
 sample is produced after a few perturbation periods.    To ensure that the 
 chaotic region fills phase space, 
the perturbation period is related to the libration frequency in the potential well of the Harper 
model.  Nevertheless, the time required for sample 
generation with these samplers is probably short 
 compared to the time it takes to carry out other types 
of operations on a qudit quantum computer. 
Quantum sample generation via Floquet driving could be fast. 

Floquet driving could be used as an alternative technique, compared to random quantum circuits, for generating quantum samplers.  They may become useful for quantum simulation on qudit  
finite dimensional quantum computers  (e.g., \cite{Champion_2025}).  Using a similar approach to 
that explored here,  it may be possible to generate a quantum sampler that could be a component of a quantum computer with periodically driven spin systems or many body systems 
(e.g., \cite{Haake_1987,Russomanno_2015,Fava_2020}).
We only studied systems that are driven by a perturbation with a single time dependent Fourier component. 
Perturbations that contain 
additional Fourier components could aid in covering a larger volume 
in U$(N)$ and improve the quality of the derived samplers. 
Additional types of control perturbations, especially if their operators don't commute, upon a base chaotic system could give additional parameters which 
can aid in achieving a more uniform distribution of unitaries. 
Another way to achieve a more uniform distribution would be to consecutively multiply operators drawn 
separately from a randomly chosen Floquet driven system.  Instead of considering 
the set of operators $\hat U_T({\boldsymbol \mu})$ where we have 
a distribution of control parameters ${\boldsymbol \mu}$, we can consider the distribution 
 a series of $m$ unitaries, $\hat U_1 \hat U_2 \ldots \hat U_m$,  where 
 each operator is drawn separately from the same distribution of control parameters. 
 This type of sampler would likely be a better approximation 
 to a Haar-random distribution, following arguments based on rapid convergence of 
 Markov chain models on groups to stationary distributions 
 \citep{Oliveira_2009,Brandao_2016,Schuster_2025}.  
 Using multiples of the same operator drawn from a distribution of control parameters 
 would not give a uniform distribution due to Rain's theorem \citep{Rains_2003}, as quasi-energies 
 would become uniformly distributed in $[0,2\pi)$ and would not be distributed like those of the circular unitary ensemble.

Extensions to infinite dimensional systems may be difficult.  
While the superconducting transmon has a cosine potential, 
it is not confined to the torus. Even though the quantized Harper model can be used
to approximate the quantized pendulum \citep{Quillen_2025b}  in an infinite dimensional Hilbert space, 
a chaotic region generated by a periodic perturbation would necessarily be bounded in phase space and its 
size could be sensitive to its control parameters.   
Not all types of perturbations are necessarily suitable for generating quantum samplers. 
For example, Anderson localization might prevent some types of perturbations from giving effective samplers. 
We found that the number of linearly independent terms in a Magnus-Floquet expansion 
of a Floquet propagator could be large. However, because this expansion does not 
necessarily converge, we lack quantitative estimates for 
how well a distribution of control parameters causes the resulting distribution of Floquet propagator to 
cover the unitary group in dimension $N$. 
These topics could be studied in more detail in future work.  

\vskip 1.0 truein

Figures in this manuscript are generated with python notebooks available 
at \url{https://github.com/aquillen/Qsample}  \citep{Quillen_Qsample_github}.  

Acknowledgements: 
We thank Machiel Block, Arjun Krishnan, Chaitanya Murthy and Stephen Li for helpful discussions. 
We thank the University of Rochester, Research Presentation Awards for support
that allowed A. S. Miakhel to present a preliminary version of this work at the 12-th Rochester Conference on
Coherence and Quantum Science (CQS-12), June 2025.

\bibliographystyle{elsarticle-harv}
\bibliography{Qchaos}

\appendix 

\section{Operator basis coefficients} 
\label{ap:op_coefs}

In this section using a basis for operators, 
we examine the distributions of the operator coefficients in this basis of the 4 unitaries 
discussed in section \ref{sec:comp4} and listed in Table \ref{tab:props}. 

\begin{figure*}[htbp]\centering
\includegraphics[width=5.0truein]{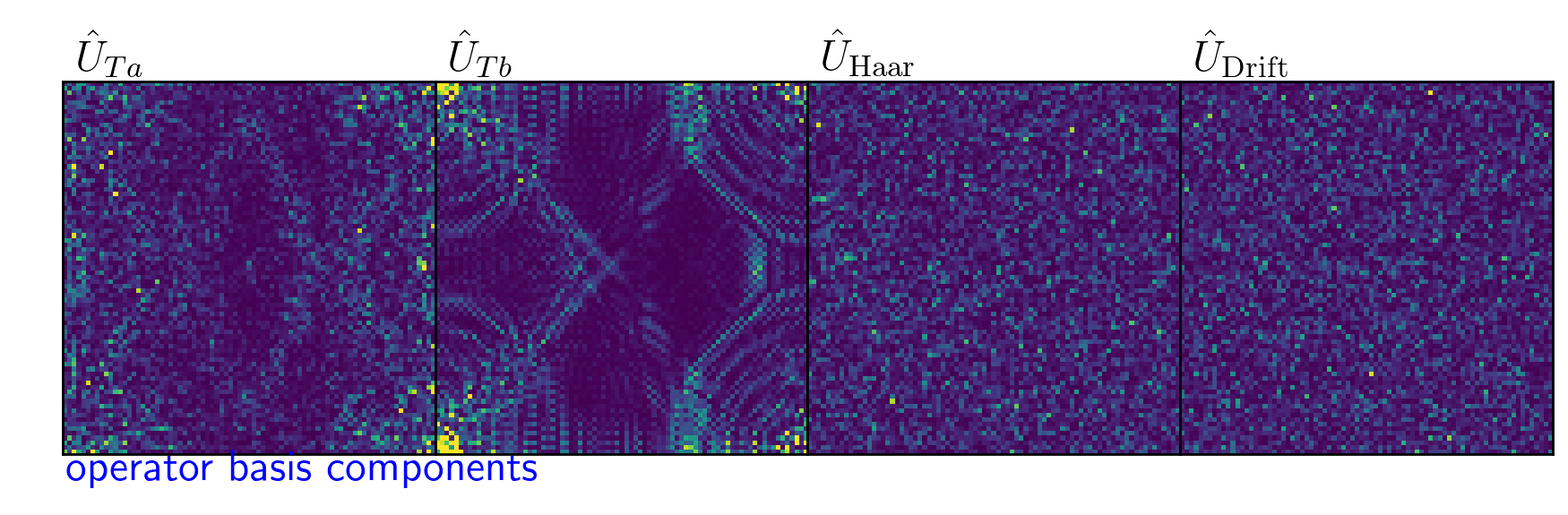}
\caption{Each panel shows all operator decomposition coefficients 
$|w_{jk}|^2$  (defined in equation \ref{eqn:wjk})  for the 4 
operators $\hat U_{Ta}$, $\hat U_{Tb}$, $\hat U_{Ha}$, $\hat U_\text{Drift} $ (with properties 
listed in Table \ref{tab:props}) computed 
in dimension $N=81$.  Each pixel in each panel shows
a specific $|w_{jk}|^2$ value with index $j$ increasing on the horizontal axis and $k$ increasing on the 
vertical axis. 
The hybrid (not fully ergodic) Floquet propagator $\hat U_{Tb}$ shows more structure 
than the Floquet propagator $\hat U_{Ta}$ which is fully ergodic.  The Haar-random unitary $\hat U_\text{Haar}$ and the drifted propagator $\hat U_\text{Drift}$ lack substructure. 
\label{fig:Wvals} }
\end{figure*}

\begin{figure}[htbp]\centering
\includegraphics[width=3.0truein]{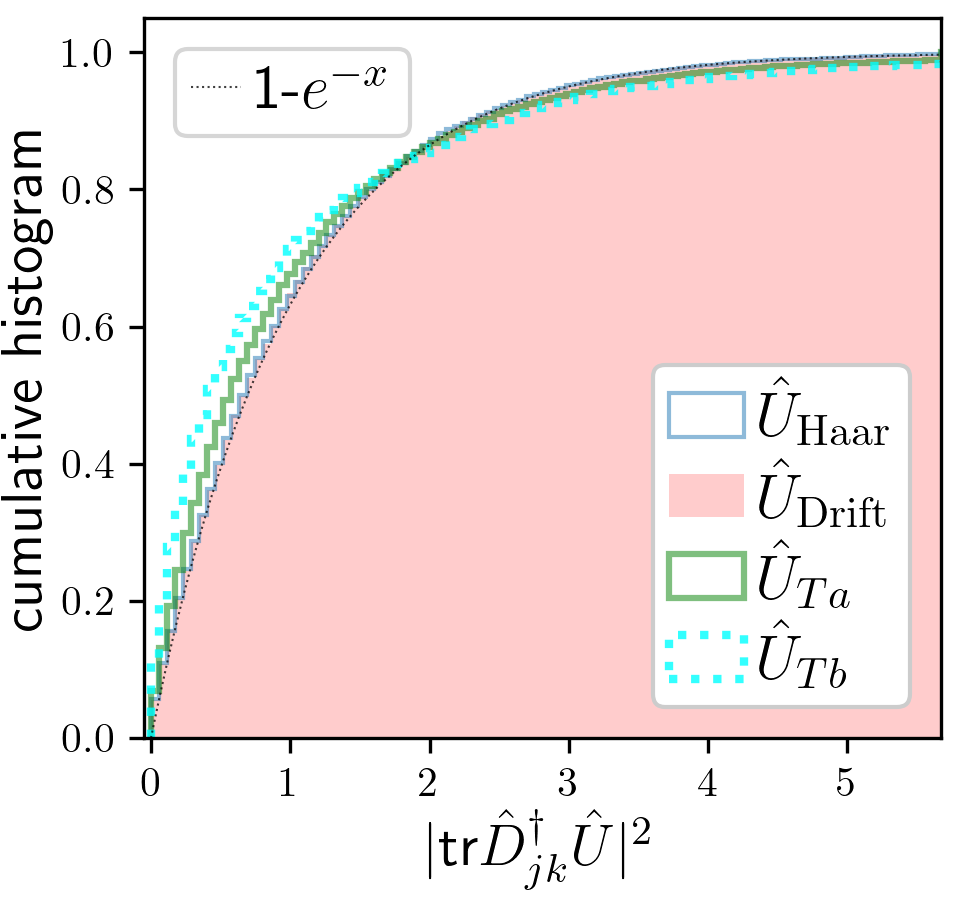}
\caption{Cumulative histograms of the sets $S_\text{op}(\hat U)$ (equation \ref{eqn:S_op}) of the square of the amplitude of the coefficients 
in the operator decomposition for each of the 4 unitary propagators listed in Table \ref{tab:props} with 
dimension $N=81$.
The coefficients have a cumulative distribution that resembles the exponential curve
which is plotted as a thin gray dotted line. 
The distributions for the Floquet operators $\hat U_{Ta}, \hat U_{Tb}$ differ 
from those of $\hat U_\text{Haar}$ and $\hat U_\text{Drift}$ which are closer to the exponential 
curve expected for a Porter-Thomas distribution.  
\label{fig:wcumu} }
\end{figure}


The unitary operators $\hat X, \hat Z$,  \citep{Schwinger_1960} 
are a set of generalized Pauli matrices called clock
and shift operators or Weyl-Heisenberg matrices
\begin{align}
\hat X &= \sum_{j=0}^{N-1} \ket{j+1} \bra{j}  \\
\hat Z & = \sum_{j=0}^{N-1} \omega^j \ket{j}\bra{j} \\
\omega & = e^\frac{2 \pi i}{N} \label{eqn:XY}
\end{align}
for an $N$-dimensional quantum system. 
Here the state $\ket{j+1}$ is computed with arithmetic modulo $N$. 
The clock and shift operators obey the commutation relationship
\begin{align}
\hat Z \hat X = \omega \hat X \hat Z.  \label{eqn:com_XZ}
\end{align}

Displacement operators are defined as 
\begin{align}
\hat D_{jk} \equiv \omega^{-\frac{jk}{2}}  \hat Z^j \hat X^k  \label{eqn:Djk}
\end{align}
for $j,k \in  {\mathbb Z}_N  $ 
and resemble those used for continuous systems of coherent states  \citep{Balazs_1989,Galetti_1996}.
Using the commutator of equation \ref{eqn:com_XZ}
\begin{align}
\hat D_{jk}  \hat D_{mn } 
& = \hat D_{j+m,k+n} \omega^{\frac{1}{2}(jn-km) }. \label{eqn:Dcom}
\end{align}
We gain a phase which is not important when taking a trace of a single operator
but could be important  when computing a sum within  the trace operator. 
The self-adjoint of the displacement operator 
\begin{align} 
\hat D_{ab}^\dagger  
& = \omega^{- \frac{ab}{2}}  \hat  Z^{-a} \hat  X^{-b}  = \hat D_{-a,-b} 
\end{align} 
where $\hat Z^{-a}$ is the inverse of $\hat Z^a$. 

Using the Hilbert-Schmidt inner product for operators, 
the displacement operators are 
orthogonal  \citep{Schwinger_1960} 
\begin{align}
\tr \left( \hat D_{ab}^\dagger \hat D_{jk} \right) & = \tr \left( D_{j-a,k-b} \omega^{\frac{1}{2}(jb-ka) }  \right)
\nonumber \\
& =  N \delta_{ja}  \delta_{kb} .
\end{align}
An orthonormal basis for linear operators, with respect to the trace norm, is the set of unitary operators 
\begin{align}
\bar G_\text{HW} \equiv
\left\{ \frac{1}{\sqrt{N} } \hat D_{jk}: \text{ for } j,k \in  {\mathbb Z}_N    \right\} 
\end{align}
\citep{Schwinger_1960}.  
The set $\bar G_\text{HW} $ is related to the Heisenberg-Weyl group $G_\text{HW}$ which is the discrete 
subgroup of U$(N)$ generated by the clock and shift operators $\hat X, \hat Z$.   
The set of displacement operators 
$\bar G_\text{HW} $ is equivalent to the quotient of the Heisenberg-Weyl group and 
the discrete group containing the phase $\omega=e^\frac{2 \pi i}{N}$ and its integer powers. 
 
A decomposition of an operator $\hat W$ with the operator
basis generated from the set of displacement operators  
\begin{align}
\hat W & = \frac{1}{\sqrt{N}} \sum_{jk} w_{jk} \hat D_{jk} 
. 
\end{align}
Equivalently the coefficients in the operator decomposition 
(based on the displacement operators) of an operator $\hat W$  are 
\begin{align}
w_{jk} = \frac{1}{\sqrt{N}} \tr (\hat D_{jk}^\dagger \hat W ). \label{eqn:wjk}
\end{align}

If $\hat W$ is  unitary, then 
$\hat W \hat W^\dagger = {\hat I}$, where ${\hat I} = \hat D_{00}$ is the identity operator. 
This implies that 
\begin{align}
\sum_{j,k=0}^{N-1} w_{jk} w_{jk}^* &= N. \label{eqn:wsum_unitary}
\end{align}
Consequently  a uniform probability distribution of unitary operators with respect to the Haar measure  can be described as a uniform distribution of normalized vectors in a complex vector space  of dimension $N^2$.   
 
Equation \ref{eqn:wsum_unitary} implies that 
the coefficients $w_{jk}$ in the operator sum decomposition 
of a Haar-uniform distribution of unitary operators also 
obey a Porter-Thomas probability distribution; 
\begin{align}
p( |w_{jk}|^2) \approx  N  e^{- N {|w_{jk}|^2} } \label{eqn:pwjk}
\end{align}
(taking into account normalization). 
Equivalently the probability distribution 
\begin{align}
p_\text{Haar} \left[  | \text{tr} \hat D_{jk}^\dagger \hat U|^2 \right] \approx e^{-  | \text{tr} \hat D_{jk}^\dagger \hat U|^2}   
\end{align}
for any $j,k$ for a Haar-random distribution of unitary matrices $\hat U$. 

There is a difference between the probability of a single coefficient $|w_{jk}|^2$ 
from a distribution of operators and the distribution of coefficients 
constructed from all coefficients of a single matrix operator. 
Consider a quantum state $\ket{\psi}$ chosen from a uniform distribution in $\mathbb{C}^N$ (and normalized) 
and the set of values 
\begin{align}
S_{\ket{\psi}} = \left\{ p_j = |\bra{j}\ket{\psi}|^2: \text{ for } j \in   {\mathbb Z}_N \right\} . 
\end{align}   
The set contains 
the probabilities of measurements in the different basis states $\ket{j}$.  
Except for the constraint due to normalization, each value of $p_j$ is nearly 
independent of all the others.  Hence if we choose a random state $\ket{\psi}$ 
and examine the set  $S_{\ket{\psi}}$,  a histogram of the values in $S_{\ket{\psi}}$   should resemble the 
Porter-Thomas distribution for almost all (in the sense of the Haar measure) 
operators in a set of randomly chosen states $\ket{\psi}$. 

Using the displacement operators to carry out an operator decomposition for an operator $\hat U$
we construct the set of probabilities 
\begin{align}
S_\text{op} (\hat U) &= \left\{ N |w_{jk}|^2 :  \text{ for } j,k \in  {\mathbb Z}_N   \right\} 
\label{eqn:S_op}
 \\
& =  \left\{ |\text{tr}  \hat D_{jk}^\dagger \hat U |^2 :    \text{ for } j,k \in  {\mathbb Z}_N \right\}. \nonumber 
\end{align}
The set $S_\text{op}(\hat U)$ are shown as images for the 4 unitaries in Table \ref{tab:props} 
in Figure \ref{fig:Wvals}.   For each panel, the index $j$ is increased along the horizontal axis
and $k$ is increased along the vertical axis, and  each pixel shows a single value for $N |w_{jk}|^2$. 
Figure \ref{fig:Wvals} illustrates that the coefficients in the operator basis are uniformly distributed 
for the Haar-random unitary $\hat U_\text{Haar}$ and the drifted propagator $\hat U_\text{Drift}$. 
The Floquet propagator $\hat U_{Tb}$ that contains both ergodic and non-ergodic regions in phase space
contains more substructure than the Floquet propagator $\hat U_{Ta}$ which is fully ergodic. 

Cumulative histograms for the set of probabilities $S_\text{op} (\hat U)$ for the 4 unitaries in Table \ref{tab:props}  are shown in Figure \ref{fig:wcumu}.  On Figure \ref{fig:wcumu} a dotted black 
line shows the exponential curve expected for a Porter-Thomas distribution.  
The similarity between the cumulative histogram of the $\hat U_\text{Haar}$ 
unitary and the Porter-Thomas distribution confirms our expectation that the set $S_\text{op}$, constructed 
from coefficients of a single unitary in an operator basis that was drawn from  a Haar uniform 
distribution is likely to resemble a Porter-Thomas distribution. 
The cumulative distribution of the set $S_\text{op}( \hat U_\text{Drift}) $ 
is indistinguishable from that of  $S_\text{op}( \hat U_\text{Haar}) $ suggesting that $U_\text{Drift}$
is similar to a random matrix.  There are small but noticeable deviations from 
Porter-Thomas distribution exhibited by the sets $S_\text{op}( \hat U_\text{Ta}) $  and $S_\text{op}( \hat U_\text{Tb}) $.  

\begin{table}[htbp] \centering
\caption{Moments of  operator basis coefficients\footnote{Moments for the operator basis coefficients are defined in equations \ref{eqn:wjk}
and \ref{eqn:mms_Op}. } \label{tab:mom_ops}}
\begin{ruledtabular}
\begin{tabular}{lllll}
 Unitary &  - & $ m_\text{Op}^{(2)}$ &  $m_\text{Op}^{(3)} $ \\
 \colrule
$\hat U_{Ta}$ && 2.50 $\pm$ 0.11 & 11.58$\pm$ 1.35 \\
$\hat U_{Tb}$ && 3.82 $\pm$ 0.30 & 37.10 $\pm$ 6.34 \\ 
$\hat U_\text{Drift}$ && 1.99 $\pm$  0.05 & 5.88  $\pm$ 0.29 \\
$\hat U_\text{Haar}$ && 2.00 $\pm$ 0.05 & 5.93 $\pm$ 0.28 \\
\colrule
Porter-Thomas\footnote{If the operator coefficients 
obeyed a exponential (Porter-Thomas) distribution, they would have these values. }   && 2  & 6 \\
\end{tabular}
\end{ruledtabular}
\end{table} 

As we did for the transition probabilities (see equation \ref{eqn:mms} in section \ref{sec:trans}), 
for each operator  we measure the moments of the distribution of 
operator basis coefficients 
\begin{align} 
m_\text{Op}^{(2)} &= \frac{1}{N^2} \sum_{j,k=0}^{N-1} \left(N|w_{jk}|^2\right)^2 \nonumber \\
m_\text{Op}^{(3)} &= \frac{1}{N^2} \sum_{j,k=0}^{N-1} \left(N|w_{jk}|^2\right)^3  ,
         \label{eqn:mms_Op}
\end{align}
where the coefficients are computed via equation \ref{eqn:wjk} from a unitary operator. 
We don't compute the first moment because it is 1. 
The moments for the 4 unitaries shown in Figures \ref{fig:Wvals} and \ref{fig:wcumu} are listed in Table \ref{tab:mom_ops}.
As was true for the transition probabilities, 
the moments of the propagator $\hat U_\text{Drift}$ for the drifting system
resemble those of a Haar-random matrix and an exponential distribution, 
and those of ergodic Floquet propagator $\hat U_{Ta}$ 
are larger but not as large as that of the hybrid Floquet propagator $\hat U_{Tb}$. 

\section{The relation between discrete shifts of parameter angles $b$ and $\phi_0$ and  unitary conjugation} 
\label{ap:th1}

We apply and extend Theorem A.3 by \citet{Quillen_2025b}.  

\begin{theorem} \label{th:th1}
We consider the Hamiltonian operator $\hat h$ of equations \ref{eqn:hath}, 
\ref{eqn:hath0} and \ref{eqn:hath1} which is a function of 
parameters $a, b, \epsilon, \mu, \mu', \phi_0, \tau_0$, and time $ \tau$. 
We fix $a, \epsilon, \mu, \mu', \tau_0$ and write $\hat h(b,\phi_0)$ because we will consider 
different values of parameters $b$ and $\phi_0$. 

Shifting the angles $b$ and $\phi_0$ by integer factors of $\frac{2 \pi}{N}$ is equivalent to 
the conjugation of $\hat h$ by factors of clock and shift operators $\hat X $ and $\hat Z$ (defined 
in equations \ref{eqn:XY}); 
\begin{align}
\hat h \Big(b = b_0+\frac{2 \pi j }{N}, &  \phi_0 =\alpha_0 +  \frac{2 \pi k }{N} \Big) =   \label{eqn:hshift} \\
&  \hat Z^{j}  \hat X^{k} \hat h(b= b_0, \phi_0=\alpha_0) \hat X^{-k}  \hat Z^{-j} .\nonumber
\end{align}

\end{theorem}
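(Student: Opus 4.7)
The plan is to reduce the two shifts in equation \ref{eqn:hshift} to the way conjugation by $\hat Z^j$ and $\hat X^k$ translates the arguments of the two cosines in $\hat h_0$, and to exploit the fact that the perturbation $\hat h_1(\tau)$ depends only on $\hat \phi$ so only the $\hat X^k$ piece acts on it nontrivially.

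First I would record the two ingredients I need. From the Weyl--Heisenberg relation $\hat Z \hat X = \omega \hat X \hat Z$ (equation \ref{eqn:com_XZ}) one gets by induction $\hat Z^j \hat X \hat Z^{-j} = \omega^j \hat X$ and $\hat X^k \hat Z \hat X^{-k} = \omega^{-k}\hat Z$. From the spectral definitions of $\hat\phi$ and $\hat Z$ I have $\hat Z = e^{i\hat\phi}$, and because $\hat p = \hat Q_{FT}\hat\phi \hat Q_{FT}^{-1}$ while $\hat X$ is the $\hat Q_{FT}$ conjugate of $\hat Z$, one also has $\hat X = e^{-i\hat p}$. The second step is then to expand each cosine in exponentials and push the conjugating unitaries through. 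Writing $\cos(\hat\phi - \phi_0) = \tfrac12(e^{-i\phi_0}\hat Z + e^{i\phi_0}\hat Z^{-1})$ and using the induced commutation relation gives
\begin{align}
\hat X^k \cos(\hat\phi - \phi_0)\hat X^{-k} &= \cos\!\Big(\hat\phi - \phi_0 - \tfrac{2\pi k}{N}\Big),
\end{align}
and analogously, writing $\cos(\hat p - b) = \tfrac12(e^{ib}\hat X + e^{-ib}\hat X^{-1})$ gives
\begin{align}
\hat Z^j \cos(\hat p - b)\hat Z^{-j} &= \cos\!\Big(\hat p - b - \tfrac{2\pi j}{N}\Big).
\end{align}
The two perturbation terms $-\mu\cos(\hat\phi - \phi_0 + \tau - \tau_0)$ and $-\mu'\cos(\hat\phi - \phi_0 - \tau + \tau_0)$ are handled identically to the $\hat\phi$ cosine, with $\phi_0$ replaced by $\phi_0 \mp (\tau - \tau_0)$, so $\hat X^k$ conjugation shifts each occurrence of $\phi_0$ by $\tfrac{2\pi k}{N}$.

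Third, I would stitch these facts together. Since $\hat X$ commutes with every operator that is a function of $\hat p$, and $\hat Z$ commutes with every operator that is a function of $\hat\phi$, I can apply $\hat X^k$ first: it leaves $a(1 - \cos(\hat p - b))$ untouched and shifts $\phi_0 \to \phi_0 + \tfrac{2\pi k}{N}$ in the $-\epsilon\cos(\hat\phi - \phi_0)$ term and in both terms of $\hat h_1(\tau)$. Then applying $\hat Z^j$ leaves these (now-shifted) $\hat\phi$-diagonal pieces alone and shifts $b \to b + \tfrac{2\pi j}{N}$ in the kinetic cosine. Summing the pieces and adding back the constant $a$ yields precisely the right-hand side of equation \ref{eqn:hshift} at every instant $\tau$.

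The only real obstacle is sign and convention bookkeeping: verifying that with the paper's conventions $\hat X = e^{-i\hat p}$ and $\hat Z = e^{i\hat\phi}$ so that the induced parameter shifts come out with the $+\tfrac{2\pi}{N}$ signs claimed, and noting that the time dependence of $\hat h_1(\tau)$ rides along unaffected because neither $\hat X$ nor $\hat Z$ carries any $\tau$ dependence. Once these conventions are pinned down the proof is two one-line Weyl--Heisenberg computations together with the observation that $\hat X$ and $\hat Z$ act on complementary subalgebras generated by $\hat p$ and $\hat\phi$.
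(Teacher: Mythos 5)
Your proposal is correct and follows essentially the same route as the paper's proof: both expand each cosine as a first-order polynomial in $\hat Z,\hat Z^\dagger$ (for $\hat\phi$-dependent terms) and $\hat X,\hat X^\dagger$ (for the kinetic term), apply the Weyl--Heisenberg relation $\hat Z\hat X=\omega\hat X\hat Z$ to turn conjugation into a parameter shift, and combine the two conjugations using the fact that each term commutes with either $\hat X$ or $\hat Z$. Your sign bookkeeping ($\hat X=e^{-i\hat p}$, $\hat Z=e^{i\hat\phi}$) matches the conventions implicit in the paper's own expansion of $\hat h_0$, so no gap remains.
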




\begin{proof} 

The unperturbed term in the Hamiltonian operator (in equation \ref{eqn:hath0})
\begin{align}
\hat h_0(b,\phi_0) & = a(1-\cos (\hat p -b))  - \epsilon \cos (\hat \phi - \phi_0) \\
&=  a -  \frac{a}{2} \hat X e^{ib} -  \frac{a}{2}  \hat X^\dagger e^{-ib} \nonumber \\
& \ \ \ - \frac{\epsilon}{2} \hat Z e^{-i\phi_0} - \frac{\epsilon}{2} \hat Z^\dagger e^{i\phi_0}
.
\end{align}
On the right we have written the operator in terms of the clock and shift operators 
(following appendix A by \citep{Quillen_2025b}).  
Using the commutation relation for the clock $\hat Z$ and shift $\hat X$ operators (equation \ref{eqn:com_XZ})
\begin{align}
\hat Z^{-k} \hat h_0\left(b=b_0\frac{2 \pi k}{N}, \phi_0\right) \hat Z^k =  \hat h_0(b=b_0,  \phi_0) .
\label{eqn:Zh0}
\end{align}
Likewise we compute 
\begin{align}
\hat X^{-k} \hat h_0\left(b,  \phi_0 =\alpha_0+ \frac{2 \pi k}{N}\right) \hat X^k =  \hat h_0(b,  \phi_0=\alpha_0) .
\label{eqn:Xh0}
\end{align}

The perturbation portion of the Hamiltonian  (equation \ref{eqn:hath1})
\begin{align}
\hat h_1(&\phi_0) = -\mu \cos (\hat \phi  - \phi_0 + \tau - \tau_0)  \nonumber \\
&\qquad \qquad  \ \ \ -\mu' \cos (\hat \phi - \phi_0 - \tau + \tau_0)  \nonumber \\
& = -(\mu + \mu')  \cos( \hat \phi - \phi_0)\cos( \tau - \tau_0)  \nonumber \\
& \ \ \ + (\mu'-\mu) \sin( \hat \phi - \phi_0)\sin( \tau - \tau_0) \nonumber \\
& = - \frac{\mu + \mu'}{2}  ( \hat Z e^{-i\phi_0}  + \hat Z^\dagger e^{i\phi_0} ) \cos( \tau - \tau_0)\nonumber \\
 & \ \ \     + \frac{\mu'-\mu}{2i} (\hat Ze^{-i\phi_0} - \hat Z^\dagger e^{i\phi_0} ) \sin( \tau - \tau_0).
\end{align}
Consequently 
\begin{align}
\hat X^{-k} \hat h_1\left(\phi_0 =\alpha_0+ \frac{2 \pi k}{N}\right) \hat X^k 
=  \hat h_1(\phi_0 = \alpha_0) .
\label{eqn:Xh1}
\end{align} 

Putting together equations \ref{eqn:Xh0} and \ref{eqn:Xh1}  and with 
the full time dependent Hamiltonian $\hat h = \hat h_0 + \hat h_1$
we find that 
\begin{align}
\hat Z^{-j}\hat h\left(b=b_0 +\frac{2 \pi j }{N},\phi_0 \right)\hat Z^{j} &= \hat h(b=b_0,\phi_0) \\
\hat X^{-k}\hat h\left(b,\phi_0= \alpha_0 + \frac{2 \pi k}{N} \right)\hat X^{k} &= \hat h(b,\phi_0=\alpha_0) 
\end{align}
for  $j,k \in  {\mathbb Z}_N $. 
Conjugating the Hamiltonian operator by a factor of $\hat Z$, even when time dependent, is 
equivalent to shifting $b$ by a factor of $2 \pi /N$ and conjugating the Hamiltonian operator by a factor of $\hat X$ is equivalent to shifting $\phi_0$ by a factor of $2 \pi /N$. 
The Hamiltonian operator is a first order polynomial in the operators $\hat X, \hat X^\dagger, \hat Z$ and $\hat Z^\dagger$,  
so each term in the sum either commutes with $\hat X$ or commutes with $\hat Z$. 
Consequently 
\begin{align}
\hat X^{-k}  \hat Z^{-j} \hat h \Big(b= \frac{2 \pi j }{N},   \phi_0=& \frac{2 \pi k }{N} \Big)\hat Z^{j}  \hat X^{k} 
    \qquad \nonumber \\
  &  =\hat h(b=0, \phi_0=0) .\label{eqn:sym}
\end{align}
%
We rearrange equation \ref{eqn:sym} 
\begin{align}
\hat h \Big(b = b_0+ &\frac{2 \pi j }{N},   \phi_0 =\alpha_0 +  \frac{2 \pi k }{N} \Big)  \qquad  \ \ \ \nonumber \\
& = \hat Z^{j}  \hat X^{k} \hat h(b= b_0, \phi_0=\alpha_0) \hat X^{-k}  \hat Z^{-j} . 
\end{align}
which is equivalent to equation \ref{eqn:hshift}. 

\end{proof}

\begin{corollary}

Let 
\begin{align}
\hat U_T(b,\phi_0) = {\cal T} e^{-\frac{i}{\hbar} \int_0^T \hat h(b,\phi_0, \tau) d\tau} \label{eqn:UTT}
\end{align}
be the propagator constructed from the time dependent Hamiltonian $\hat h$.
We have written the parentheses of $\hat h$ and $\hat U_T$ to only show the variables that we would like to set.  

A shift of $b$ and $\phi_0$ by multiples of $\frac{2 \pi }{N}$ is equivalent 
to the conjugation of $\hat U_T$ by multiples of clock and shift operators; 
\begin{align}
\hat U_T \Big(b= b_0 +  \frac{2 \pi j }{N} , & \phi_0=\alpha_0 +  \frac{2 \pi k }{N} \Big) \label{eqn:coro} \\
&= \hat Z^{j}  \hat X^{k} \hat U_T(b=b_0 , \phi_0=\alpha_0) \hat X^{-k}  \hat Z^{-j} .\nonumber
\end{align}



\end{corollary}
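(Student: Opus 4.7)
The plan is to lift the operator-level symmetry of Theorem~\ref{th:th1} through the time-ordered exponential defining $\hat U_T$. Since the time dependence enters only through $\tau$ (not through $b$ or $\phi_0$), the shift identity of Theorem~\ref{th:th1} holds at every instant $\tau \in [0,T]$ with the \emph{same} conjugating operators $\hat Z^{j}\hat X^{k}$, independent of $\tau$. This is exactly what makes the extension to the propagator clean.

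First I would express the time-ordered exponential as a limit of a Trotter product,
\begin{align}
\hat U_T(b,\phi_0) = \lim_{n\to\infty} \prod_{\ell=n}^{1} \exp\!\left( -\tfrac{i\,\Delta\tau}{\hbar}\,\hat h(b,\phi_0,\tau_\ell)\right),
\end{align}
with $\Delta\tau = T/n$ and $\tau_\ell = \ell\,\Delta\tau$; the product is ordered so that later times sit on the left. Applying Theorem~\ref{th:th1} to each factor of the Hamiltonian gives, for every $\ell$,
\begin{align}
\hat h\!\left(b_0{+}\tfrac{2\pi j}{N},\, \alpha_0{+}\tfrac{2\pi k}{N},\,\tau_\ell\right) = \hat Z^{j}\hat X^{k}\,\hat h(b_0,\alpha_0,\tau_\ell)\,\hat X^{-k}\hat Z^{-j}.
\end{align}

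Next I would use the standard identity $e^{V A V^{-1}} = V e^{A} V^{-1}$ for any invertible $V$ (here $V=\hat Z^{j}\hat X^{k}$), so that each Trotter factor transforms as
\begin{align}
\exp\!\left( -\tfrac{i\,\Delta\tau}{\hbar}\,\hat h\!\left(b_0{+}\tfrac{2\pi j}{N},\alpha_0{+}\tfrac{2\pi k}{N},\tau_\ell\right)\right) = \hat Z^{j}\hat X^{k}\,\exp\!\left(-\tfrac{i\,\Delta\tau}{\hbar}\,\hat h(b_0,\alpha_0,\tau_\ell)\right)\hat X^{-k}\hat Z^{-j}.
\end{align}
Forming the ordered product over $\ell$, all intermediate $\hat X^{-k}\hat Z^{-j}\hat Z^{j}\hat X^{k} = \hat I$ factors telescope, leaving only the outermost conjugation. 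Taking the $n\to\infty$ limit (which is well defined since we are in the finite-dimensional group $\mathrm{U}(N)$ and $\hat h$ is continuous in $\tau$) yields
\begin{align}
\hat U_T\!\left(b_0{+}\tfrac{2\pi j}{N},\alpha_0{+}\tfrac{2\pi k}{N}\right) = \hat Z^{j}\hat X^{k}\,\hat U_T(b_0,\alpha_0)\,\hat X^{-k}\hat Z^{-j},
\end{align}
which is equation~\ref{eqn:coro}.

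The only subtle point is justifying the interchange of the conjugation with the time-ordering limit, i.e.\ the equality $\mathcal{T}\exp(\int V\hat h V^{-1}\,d\tau) = V\,\mathcal{T}\exp(\int \hat h\,d\tau)\,V^{-1}$ for a $\tau$-independent $V$. In finite dimensions this is immediate from the Trotter telescoping above, but it is the one step I would state carefully rather than leaving as a one-liner. Everything else is a direct transport of Theorem~\ref{th:th1} through the definition of $\hat U_T$.
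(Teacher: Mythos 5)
Your proposal is correct and follows essentially the same route as the paper's own proof: both transport the Hamiltonian-level conjugation identity of Theorem~\ref{th:th1} through the time-ordered exponential using $e^{VAV^{\dagger}} = V e^{A} V^{\dagger}$ with a $\tau$-independent conjugating operator $\hat Z^{j}\hat X^{k}$. The only difference is that you make the telescoping of the Trotter product explicit, whereas the paper asserts in one line that each portion of the exponent obeys the symmetry; your added care on that step is harmless and arguably an improvement.
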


\begin{proof}
For any unitary matrix $V$ and a Hermitian matrix $A$, 
the exponential $e^{VAV^\dagger} = Ve^{A}V^\dagger$ (a form of the Campbell identity). 
Each portion of the exponent in equation \ref{eqn:UTT}  obeys the same symmetry 
as in equation \ref{eqn:sym} hence 
\begin{align}
X^{-k}  \hat Z^{-j}\hat U_T\Big(b= &b_0+  \frac{2 \pi j }{N},  \phi_0=\alpha_0 +  \frac{2 \pi k }{N}  \Big)\hat Z^{j} X^{k} \ \ \ \ 
\nonumber   \\
& = \hat U_T(b =b_0,\phi_0 = \alpha_0 ) .\label{eqn:sym2}
\end{align}
We rearrange equation \ref{eqn:sym2} 
\begin{align}
\hat U_T \Big(b=& b_0 +  \frac{2 \pi j }{N},  \phi_0=\alpha_0 +  \frac{2 \pi k }{N} \Big) \ \ \nonumber \\
&= \hat Z^{j}  \hat X^{k} \hat U_T(b=b_0, \phi_0=\alpha_0) \hat X^{-k}  \hat Z^{-j} .
\end{align}
This is the desired result of equation \ref{eqn:coro}. 

The propagator could have 
 drifting system values of $a, \epsilon, \mu, \mu'$, or $\tau_0$. 
  If  $b$ or $\phi_0$  are drifting then the initial values of $b$ and $\phi_0$ in the propagator 
are shifted by conjugation with the clock and shift  operators. 

\end{proof}

\section{An estimate for the size of frame potentials of a conjugated set using the Heisenberg-Weyl group}
\label{ap:twirlset}


The operation known as a {\it twirl}, used in quantum information theory  
and in characterization of errors in quantum computing,   
 involves an average over elements that are conjugated 
by operators drawn from a subgroup of U$(N)$ such as the generalized Pauli group
\citep{Emerson_2005b,Gross_2007,Dankert_2009}. 
We similarly create a set of unitaries using conjugation 
over the set of displacement operators.  
From a particular unitary operator $\hat W$, 
we create the set of unitaries 
\begin{align}
S_{HW}(\hat W) = \{  \hat D_{mn}\hat  W \hat  D_{mn}^\dagger: \text{ for } m,n \in {\mathbb Z}_N \}  
\label{eqn:S_HW}
\end{align}
with displacement operators $\hat D_{mn}$ defined in equation \ref{eqn:Djk}. 
As shown in appendix \ref{ap:th1} and discussed in section \ref{sec:conj}, 
if parameter angles $b$ and $\phi_0$ are drawn from a uniform distribution of values 
$\{ \frac{2 \pi k}{N}, \text{for } k\in {\mathbb Z}_N \}$, then the resulting distribution of Floquet propagators   
is a uniform distribution of the unitaries in the set $S_{HW}(\hat U_T)$ where $\hat U_T$ is the Floquet propagator with $b=\phi_0 = 0$. 
In this section we estimate the k-frame potential for a uniform distribution of unitaries in this set.  

A decomposition of an operator $\hat W$ with the operator
basis generated from the set of displacement operators  
\begin{align}
\hat W & = \frac{1}{\sqrt{N}} \sum_{jk} w_{jk} \hat D_{jk} 
\end{align}
giving coefficients $w_{jk}$. 
We compute the conjugation of $\hat W$ by a displacement operator $\hat D_{mn}$;  
\begin{align}
 \hat D_{mn} \hat W \hat D_{mn}^\dagger & = 
  \sum_{jk} \hat  D_{mn} \hat  D_{jk} \frac{w_{jk}}{\sqrt{N}}  \hat D_{mn}^\dagger   \nonumber \\
& = \sum_{jk} \omega^{mk -nj}  \frac{w_{jk} }{\sqrt{N}}    \hat D_{jk}  
. \label{eqn:DWD}
\end{align}

The k-frame potential (see definition \ref{def:Framep}) of a uniform distribution of unitaries in the set $S_{HW}$ (defined in equation \ref{eqn:S_HW}) 
\begin{align}
{\cal F}^{(k)}_{S_{HW}} &= \frac{1}{N^4} \sum_{abcd} 
 | \tr \hat D_{ab} \hat W\hat D_{ab}^\dagger (\hat D_{cd} \hat W \hat D_{cd}^\dagger)^\dagger |^{2k} \nonumber \\
& =  \frac{1}{N^2} \sum_{ab}  \left| \sum_{jn} w_{jn} w_{jn}^* \omega^{an - bj} \right|^{2k}
\end{align} 
where we have used equation \ref{eqn:DWD}. 
This is a sum of powers of the Fourier components of the amplitudes of the $\hat W$ matrix. 
The operator sum decomposition with displacement operators is directly related to 
the discrete Fourier transform in the operator basis \citep{Tyson_2003}. 

With shorthand for positive real numbers  $p_{jk}$ 
\begin{align} p_{jk}  = w_{jk} w_{jk}^*,
\end{align}
the k-frame potentials 
\begin{align}
{\cal F}^{(k)}_{S_{HW}} &=
  \frac{1}{N^2} \sum_{ab}  \left| \sum_{jn} p_{jn} \omega^{an - bj} \right|^{2k}\nonumber \\
  & =  \frac{1}{N^2}  \sum_{ab}\left( \sum_{jn,j'n'} p_{jn} p_{j'n'} \omega^{ a(n-n')-b(j-j')}  \right)^k.
\end{align}

The case $a=b=0$ gives a lower limit 
\begin{align}
{\cal F}^{(k)}_{S_{HW}} &\ge \frac{1}{N^2} \left( \sum_{jn,j'n'} p_{jn} p_{j'n'} \right)^k  .  \label{eqn:ge}
\end{align}
Using $|\omega^{ a(n-n')-b(j-j')} | \le 1$ the upper limit 
 \begin{align}
 {\cal F}^{(k)}_{S_{HW}} & \le \left( \sum_{jn,j'n'} p_{jn} p_{j'n'} \right)^k . \label{eqn:le}
 \end{align}
 
We assume that $\hat W$ is similar to a random matrix in the sense 
that its operator basis coefficients are well described by a exponential distribution 
as we found in section \ref{ap:op_coefs}.
We approximate each $p_{jn}$ as independent of the others 
and use the fact that $\langle p_{jn}  \rangle  \sim \frac{1}{N}$  
(following from equation  \ref{eqn:pwjk}) for an operator that 
is similar to a random matrix. 
Equations \ref{eqn:ge} and \ref{eqn:le} then give  
\begin{align}
N^{2(k-1)} \lesssim {\cal F}^{(k)}_{S_{HW}} \lesssim N^{2k}.  \label{eqn:twirl_estimate}
\end{align}
The approximate inequality holds for the conjugation set $S_{HW}(\hat W)$ 
for $\hat W$ resembling a random matrix in the basis used to construct the displacement 
operators. 
The strong dependence on $N$ in equation \ref{eqn:twirl_estimate}
implies that we expect 
\begin{align}
{\cal F}^{(k)}_{S_{HW}} \gg {\cal F}_{\mu_\text{Haar} }^{(k)} = k! \ \ \ \text{ for } k>1.
\end{align}

\end{document}